\DeclareMathOperator*{\argmax}{arg\,max} %
\DeclareMathOperator*{\expectation}{\mathrm{E}} %
\newcommand{\reqvec}{\mathbf{q}}
\newcommand{\reqscalar}{q}
\newcommand{\ribvec}{\boldsymbol{\alpha}}
\newcommand{\ribscalar}{\alpha}
\newcommand{\priorityImplication}{monotonicity in individual expected payoff}
\newcommand{\modelSatisfyPriorityImplication}{a system satisfies monotonicity in payoffs}
\newcommand{\concaveHull}{R}
\newcommand{\feasibilityRegion}{F}
\newcommand{\fullUserSet}{N}
\newcommand{\succS}[1]{\succ_{#1}}
\newcommand{\precS}[1]{\prec_{#1}}
\newcommand{\myComments}[1]{}	
\newif\ifinfocom
\newif\ifextended
\newif\ifdissertation
\newif\ifhuawei
\newcommand{\infocomStart}{\ifinfocom \myComments{Infocom: }}
\newcommand{\extendedStart}{\ifextended  \myComments{Extended version: }}
\newcommand{\dissertationStart}{\ifdissertation  \myComments{Dissertation version: }}
\newcommand{\huaweiStart}{\ifhuawei  \myComments{Huawei version: }}
\newcommand{\commentEnd}{\myComments{End}}
\newcommand{\add}[1]{#1}
\begin{document}

\markboth{Y. Du and G. de Veciana}{Efficiency and Optimality of Largest Deficit First Prioritization}

\title{Efficiency and Optimality of Largest Deficit First Prioritization: Resource Allocation for Real-Time Applications}
\author{YUHUAN DU and GUSTAVO DE VECIANA
\affil{The  University of Texas at Austin}
}

\begin{abstract}
An increasing number of real-time applications with compute and/or communication deadlines
are being supported on shared infrastructure.  Such applications can often tolerate 
occasional deadline violations without substantially impacting their Quality of Service (QoS).
A fundamental problem in such systems is deciding how
to allocate shared resources so as to meet applications' QoS requirements.
A simple framework to address this problem is to, (1) dynamically prioritize users as a possibly complex function of their deficits (difference of achieved vs required QoS), and (2) allocate resources so to expedite users with higher priority.
This paper focuses on a general class of systems using such priority-based resource allocation.
We first characterize the set of feasible QoS requirements and show the optimality of max weight-like prioritization.
We then consider simple weighted Largest Deficit First ($\mathbf{w}$-LDF)
prioritization policies, where users with higher weighted QoS deficits are given higher priority.
The paper gives an inner bound for the feasible set under $\mathbf{w}$-LDF policies, and,
under an additional monotonicity assumption, characterizes its geometry leading to a sufficient condition for optimality.
Additional insights on the efficiency ratio of $\mathbf{w}$-LDF policies, the 
optimality of hierarchical-LDF and characterization of clustering of failures are also discussed.
\end{abstract}

\keywords{Soft real-time applications, cloud-computing, largest deficit first prioritization, feasibility region, feasibility optimal, geometry of inner bound, class-based hierarchical prioritization}

\begin{bottomstuff}
\add{
This research was supported by Huawei Technologies Co. Ltd.}

\add{A conference version of this paper has been accepted to INFOCOM 2016. 
}
\end{bottomstuff}

\maketitle

\section{Introduction}
\label{sec_introduction}
A growing number of real-time applications with compute and/or communication deadlines
are being moved onto shared infrastructure, e.g., ranging from embedded systems to efficient cloud infrastructure.
Such applications include control, multimedia processing, and/or machine learning components associated
with enabling various types of user services as well as wireless, intelligent transportation and energy systems.
In many cases such applications can tolerate occasional deadline violations, i.e., have soft 
constraints,  without impacting the application Quality of Service (QoS). For example, 
applications with feedback can quickly compensate for errors, or humans may tolerate occasional 
failures in video processing since they can be partially concealed, or wireless base stations can
tolerate occasional frame losses, since these can be retransmitted.
More generally real-time applications' long-term QoS may depend in a complex manner 
on what was accomplished on time, e.g., partial completion of a set of tasks, or notions of video quality. 

Enabling efficient sharing of compute/communication resources is a challenging problem. 
On the one hand, even for a single resource, tying the sharing model, e.g., 
round robin, priority schemes, to QoS metrics is generally hard due
to the uncertainty in applications' workloads and possible variations in processing speeds.
On the other hand, today's applications leverage complex networks of heterogeneous compute/communication resources, 
e.g., multi-core computers, embedded network system, or combinations of computation on mobile devices 
and the cloud. 
Consider the example in Figure~{\ref{fig_complex_network_resources}}. 
User 1 periodically generates a task that needs to be processed sequentially on Resources A, B, C, D in each 
period while User 2 generates tasks to be processed on Resource B then C. 
How should one go about designing resource sharing policies 
across multiple heterogeneous resources, where parallelism, task preemption and migration are allowed?
Furthermore, how can one address heterogeneous QoS requirements associated  
with real-time applications? For example, User 1's QoS may still benefit from partial completions while User 2 
only benefits if all processing is completed. 
This general class of problems involving both heterogeneous resources and user QoS requirements
is the focus of this paper. 

\infocomStart
\begin{figure}[htp]
  \centering
  \includegraphics[width=0.23\textwidth]{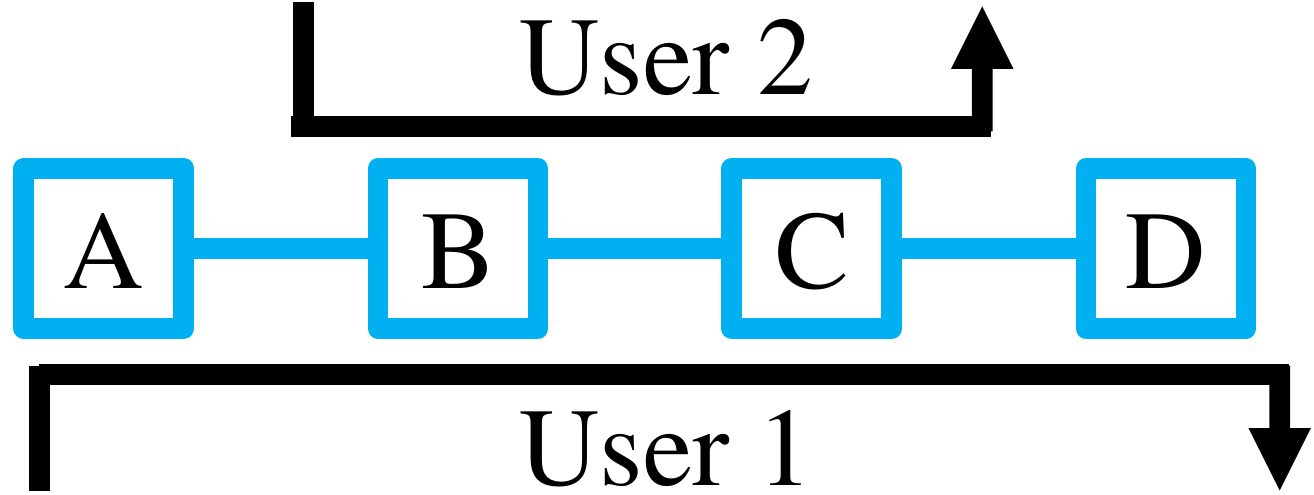}
  \caption{An example for a network of resources. A, B, C and D represent compute/communication resources. }
  \label{fig_complex_network_resources}
\end{figure}
\commentEnd\fi

\extendedStart
\begin{figure}[htp]
  \centering
  \includegraphics[width=0.32\textwidth]{Figures/example_complex_network_resources.pdf}
  \caption{An example for a network of resources. A, B, C and D represent compute/communication resources. Tasks from User 1 need to be processed on A, B, C, D while tasks from User 2 require processing on B, C. }
  \label{fig_complex_network_resources}
\end{figure}
\commentEnd\fi


The design space of possible solutions to this problem is huge and has been
explored in many research communities. In this paper we study an approach to
resource allocation based on a decomposition of concerns:
\begin{enumerate}
\item user priorities are dynamically set based on the history outcomes; 
\item and, resources are allocated so as to favor users with higher priority.
\end{enumerate}
In such a framework there is quite a bit of latitude in choosing how priorities are set, and
in turn how these affect the allocation of resources. 
For example, users' priorities could be set based on measured deficits, the ``difference'' of the required and achieved QoS, 
i.e., Largest Deficit First (LDF) prioritization. 
In turn, for complex systems such as that in Figure~{\ref{fig_complex_network_resources}}, resources
could be allocated greedily giving preemptive access to tasks associated with higher-priority users.

In general an optimal user prioritization strategy could leverage detailed
information regarding how these priorities will impact the allocation of resources and completion outcomes to achieve
the best possible user QoS. 
Such strategies require excessive amounts of information regarding the underlying compute/communication resources and resource allocation mechanism, and thus are generally hard to implement. 
By contrast, LDF-based prioritization is quite intuitive. 
It requires only tracking of users' possibly heterogeneous QoS deficits, in this
sense it is truly decoupling user prioritization from the underlying priority-based resource allocation. 
Unfortunately, it is known to be suboptimal in certain settings \cite{DiW06,JLS07,KWJ13}.

A theoretical study of the efficiency and, possibly optimality, of LDF-based prioritization 
systems supporting real-time users with heterogeneous QoS requirements 
is the main focus of this paper. We note, however, that we do not directly address the design
of the underlying priority-based resource allocation, although we consider some natural
characteristics it could have to ensure optimality when combined with LDF user prioritization.

{\bf \em Related Work. }
There have been much work studying dynamic prioritization policies in the context of diverse resource, workload and/or QoS models. 

The authors in \cite{HoK12,HoK13b} propose a framework to model a wireless access point serving a set of clients that in each period generate packets which need to be transmitted by the end of the period. 
In their model only one client can transmit at a time and thus the access point can be viewed as a single resource. Each client transmits its packets over an unreliable channel which has a fixed probability of success, and thus, the time to successfully transmit a packet can be modeled as a geometric random variable. In this setting the authors show that the LDF policy is ``optimal.'' 
However, the results are restricted to a single resource shared by users with geometric workloads. In this paper we study the performance of LDF in a more general setting which includes this prior work as a special case. 
This initial set of papers motivated follow-up work in wireless context, see e.g., \cite{HoK14,MLH10,JaS11}.

The performance of LDF and similar policies has also been studied in \cite{MCK95,DiW06,JLS07,KWJ13}. The authors in \cite{DiW06} consider the generalized switch model and were the first to propose the notion of ``local pooling'' as a sufficient condition for the Longest Queue First (LQF) policy to be throughput optimal. 
Subsequently, the work in \cite{JLS07} considers a multi-hop wireless network under a node-exclusive interference model and shows that the efficiency ratio of the greedy maximal matching policy, which is essentially LQF, equals to the ``local pooling'' factor of the network graph. More recently, the authors in \cite{KWJ13} consider real-time traffic in ad hoc wireless networks under a link-interference model and also characterize the efficiency ratio of the LDF policy. 

The results in \cite{DiW06,JLS07,KWJ13} depend on the constant service rate model and the specific interference model, i.e., where the set of links/queues that can be scheduled simultaneously is restricted. 
These models may be appropriate in some wireless/queueing networks but do not necessarily hold in our broader context, e.g., soft real-time applications with stochastic workloads. Also, \cite{DiW06} lacks a performance analysis of LQF when it is not optimal and the works in \cite{JLS07,KWJ13} focus on the efficiency ratio of LDF-like policies but lack a characterization of the full capacity region of these policies. 
Moreover, when the system can deliver more than the requirements, either the QoS requirements for real-time traffic or the throughput requirements for queueing systems, there is no discussion of how to manage the allocation of the ``excess capacity'' across users. 

The authors in \cite{TaE92,TaE93,MMA99,DaP00,Sto04} propose max weight 
scheduling policies for different types of queueing systems and show them to be throughput optimal 
via the approaches summarized in \cite{DoM94,DoM97,MeT08}. 
\extendedStart
The authors in \cite{Nee09} and \cite{VBY13} further 
characterize the delay of the max weight policy, and study its inefficiency in spatial wireless networks, respectively. 
\commentEnd\fi
As we will see in the sequel we too discuss a max weight-like scheduling policy, but it suffers from the usual complexity problems when the decision space
is large and it requires excessive amounts of information, motivating us to consider simpler policies. 

Additional related work includes work on modeling and scheduling of real-time tasks, see e.g., \cite{SAA04,DaB11,LiI08,ShS02}. 

{\bf \em Our Contributions. }
In this paper, we contribute to the theoretical understanding and performance characterization of the Largest Deficit First (LDF) policy with applications to resource allocation to support real-time services. We make three key contributions. 

First, we propose a novel general model for a class of systems supporting priority-based resource allocation 
and study different dynamic prioritization policies. This model is general in terms of the ``impact'' the priority decisions can have on the QoS payoffs. Specifically, in each period the payoffs under a priority decision are modeled by a random vector, which includes as special cases the single resource model, the geometric/constant workload and/or specific interference model adopted in prior work. 
For this general model, we propose a general inner bound $R_\text{IB}$ for the QoS feasibility region of LDF prioritization policy. 

Second, with an additional property, {\em monotonicity in payoffs}, we characterize the geometry of the inner bound $R_\text{IB}$. 
Based on this, we further propose a sufficient condition for the optimality of the LDF policy and characterize the efficiency ratio of LDF. 
In practice, understanding the geometry of $R_\text{IB}$ enables us to understand and identify possible bottlenecks in the priority-based resource allocation infrastructure. 
We also show that the LDF policies (as well as a hierarchical-LDF version) are optimal when there are two classes of exchangeable users. 

Finally, we also consider the class of weighted LDF policies, which enable us to explore the allocation of ``excess payoffs'' when the system has ``excess'' capacity. Simulation results are exhibited to show the impact of weights and to characterize the clustering of failures. 

{\bf \em Paper Organization. }
The paper is organized as follows: Section 2 introduces our general model for systems supporting priority-based resource allocation. Section 3 develops theoretical results and characterizes the performance of the weighted LDF policies while Section 4 presents some examples for the optimality of the weighted LDF/hierarchical-LDF policies. Section 5 discusses some practical issues while the impact of weights is exhibited via simulation in Section 6. Section 7 concludes the paper and points to future work. Some of the proofs are provided in the Appendix.

\section{System Model}
\label{sec_system_model}
We consider applications which periodically generate random workloads with the same period and specify long-term QoS requirements. In the sequel we let a user denote a specific instance of such an application. 

We begin by introducing a general model for systems that allocate resources in each period based on the following decomposition: 
(1) users are assigned priorities dynamically, e.g., at runtime, according to a function of the past history, and (2) the system allocates resources based on these priorities. 

For the most part in this paper, the manner in which (2) is carried out will not be our concern. Instead our focus will be on how to perform dynamic user prioritization to achieve optimal (or near-optimal) system performance when combined with a given underlying mechanism for (2). 
\extendedStart
In our follow-up work \cite{DuD16S}, we consider a specific system model and study the combined design of (1) and (2). 
\commentEnd\fi

\subsection{General Model for Systems Supporting Priority-Based Resource Allocation} 
\label{subsection_general_payoff_model}
We consider an abstract system that serves $n$ users indexed from $1$ to $n$. Let $\fullUserSet = \{1, 2, \cdots, n\}$ be the user set. The system operates in discrete time, over periods $t = 1, 2, \cdots$. In each period, it picks a user {\em priority decision} $\mathbf{d} = (d_1, d_2, \dots, d_n)$ where $d_m$ is the index of the user with $m^{\text{th}}$ highest priority. We let $D$ denote the set of all possible priority decisions and let $|D|$ represent the number of possible decisions, thus, $|D| = n!$ 

In each period, given the priority decision $\mathbf{d}$ passed to the underlying resources, since there are intrinsic uncertainties in users' workloads, 
each user $i$ achieves a non-negative random QoS payoff, denoted by $V_i(\mathbf{d})$. We let $\mathbf{V}(\mathbf{d}) = (V_1(\mathbf{d}), V_2(\mathbf{d}), \cdots, V_n(\mathbf{d}))$. 
We assume the payoffs are independent across periods. 
The distribution of $\mathbf{V}(\mathbf{d})$ depends on the selected priority decision $\mathbf{d}$ and the expected payoff vector given $\mathbf{d} \in D$ is denoted by $\mathbf{p}(\mathbf{d}) = \expectation[\mathbf{V}(\mathbf{d})]$. We assume all possible payoff vectors form a finite rational set. 
\extendedStart
Moreover, we naturally assume that for each user $i\in \fullUserSet$, there exists a decision $\mathbf{d}$ such that $p_i(\mathbf{d}) > 0$. 
\commentEnd\fi

Each user requires a long-term average QoS payoff $\reqscalar_i \geq 0$ as the QoS requirement. We let $\reqvec = (\reqscalar_1, \reqscalar_2, \cdots, \reqscalar_n)$ and assume $\reqscalar_i$'s are rational\footnote{All the results in this paper can be generalized to models with irrational values. For simplicity in the proof we do not consider that level of generality. }. We denote by $\mathbf{d}(t)$ the priority decision at period $t$. To keep track of the deficits between required and achieved QoS payoffs, for each user $i \in \fullUserSet$ and period $t+1$, we define\footnote{We truncate the deficit at $0$ for the convenience of defining feasibility in the sequel. Removing the truncation won't change the results in the paper. }
\begin{align}
\label{align_deficit_in_general_payoff}
X_i(t+1) = [X_i(t) + \reqscalar_i - V_i(\mathbf{d}(t+1))]^+, 
\end{align}
where $[x]^+ = \max[x, 0]$. 

The goal is thus to devise user prioritization policies which will meet users' long-term payoff requirements. 

\begin{definition}
A {\bf user prioritization policy} is a stationary policy that picks a priority decision $\mathbf{d}(t+1) \in D$ at period $t+1$ based on the following:
\begin{longitem}
\item[-] users' payoff requirement vector $\reqvec$; 
\item[-] expected payoff vectors $P = \{\mathbf{p}(\mathbf{d}) | \mathbf{d} \in D\}$; 
\item[-] and, the deficits $\mathbf{X}(t) = (X_1(t), X_2(t), \cdots, X_n(t))$. 
\end{longitem}
\end{definition}

The process $\{\mathbf{X}(t)\}_{t\geq 1}$ is a Markov chain under any such policy. We assume the initial state $\mathbf{X}(0)$, the requirements $\reqvec$, the set of all possible payoff vectors and the user prioritization policy make $\{\mathbf{X}(t)\}_{t\geq 1}$ an irreducible Markov chain. 

\begin{definition}
\label{defn_feasibility_pr}
A payoff requirement vector $\reqvec$ is said to be {\bf feasible} if there exists a user prioritization policy $\eta$ under which the Markov chain $\{\mathbf{X}(t)\}_{t\geq 1}$ is positive recurrent. We also say this policy fulfills this requirement vector. 
\end{definition}

The expected payoff vectors $P = \{\mathbf{p}(\mathbf{d}) | \mathbf{d} \in D\}$ could in principle be statistically inferred from the history events or by repeated experiments. 
However, in a practical setting this can be challenging and it is of interest to find a policy that performs well and uses little a-priori information regarding the exponential set of expected payoff vectors $P$. 

Note that this model is general in the sense that the ``impact'' of priority decisions $\mathbf{d} \in D$ on the QoS payoff vectors $P$ is at this point general, whereas the specific resource and workload models in prior work, e.g., \cite{DiW06,JLS07,KWJ13}, implicitly impose properties on $P$ and therefore restrict the results significantly. 

\subsection{Example: Centralized Computing System for Real-Time Applications}
\label{subsection_SRT_model}
Our model can for example capture a centralized computing infrastructure supporting Soft Real-Time (SRT) applications where the $n$ users share compute resources. In a cloud-based collaborative video conferencing context, a user might correspond to an individual end user and the period length might correspond to the length of a group of video frames. 

The users generate streams of tasks periodically. Specifically in each period a user generates several tasks. A task may further consist of a graph of possibly dependent sub-tasks with (possibly) random processing requirements, i.e., workloads. These tasks/sub-tasks need to be fully completed before the end of the period. 
For real-time services, it is generally useless to process a task after its deadline. For example, in the video conferencing context it is not desirable to present an out-of-date frame.  Therefore, we assume tasks/sub-tasks not completed on time are dropped. 

In each period $t$, the user prioritization policy picks a user priority decision $\mathbf{d}(t)$, based on which compute resources are allocated to process tasks. 
Given the task processing results, a payoff $V_i(\mathbf{d}(t))$ is achieved for each user $i$ based on whether the tasks were successfully processed, or how much of the task graphs were completed. 
In general, $V_i(\mathbf{d}(t))$ may represent any user-specific QoS payoff per period, that can be averaged over time, e.g., the quality/resolution of video frame processing, or the number of task completions. Accordingly the vector $\reqvec$ represents the long-term average QoS requirements. 

\dissertationStart
Note that these payoff vectors depend heavily on the internal design of the system and in this paper we assume the internal design is fixed and has been carefully optimized by the computing system architect. 
In practice, it may be challenging to jointly optimize the internal design and the stationary priority scheduling policy to provide the best QoS to users. 
A discussion of different internal designs in a specific computing system model is provided in [?]. 
\commentEnd\fi

\subsection{Example: Complex Networks and Flexible Modeling of Application Execution Payoffs}
As indicated in the introduction, our model also applies to a complex network of heterogeneous compute and communication resources, as long as users periodically and synchronously generate tasks that require timely processing on diverse resources and moving around in the network, e.g., as shown in Figure~\ref{fig_complex_network_resources}. 

Given the priority decision in each period, the network of resources coordinate according to some priority-based resource allocation mechanism to accelerate the processing of tasks with high priorities, by reducing the communication/queueing delays, processing with higher processor speed, allocating more shared resources, etc. 

Again, different users can define their payoffs in different ways and specify their QoS requirements accordingly. 

\dissertationStart
\myComments{In dissertaion, we have to re-organize this section. }
\subsection{Other Interpretations of General Payoff Model}
Our general payoff model applies beyond the soft real-time context. For example, for each user $i\in \fullUserSet$, by viewing $\reqscalar_i$ as expected arrival rates and $V_i(\mathbf{d}(t+1))$ as the service rate given priority decision $\mathbf{d}(t+1)$ is selected in period $t+1$, the general payoff model captures the queueing system, e.g., the generalized switch model or the wireless networks considered in prior work \cite{DiW06,JLS07,KWJ13}. 

However, our general payoff model is general in the sense that the ``impact'' of priority decisions $\mathbf{d} \in D$ on the payoff vectors $P$ is left unspecified, while the interference model in these prior work implicitly imply properties on P. 

In the next section, we first derive results in the general payoff model and then add further properties on $P$ to get cleaner results. 
\commentEnd\fi

\section{Performance Analysis}

In this section we shall develop theoretical results for such systems. 
\infocomStart
To save space we have deferred proofs of these results to the extended version of this paper available at \cite{EXT2}. 
\commentEnd\fi
Some of these results are similar to prior work but in the more general model while other results are completely new. 
For completeness we shall develop a self-contained theoretical framework. 
\dissertationStart
In the sequel we provide examples to help understand these results in the context of SRT applications. 
\commentEnd\fi

\subsection{System Feasibility Region and Feasibility Optimal Policy}
\label{subsection_feasibility_region_and_optimal_scheduling}

The set of all feasible long-term payoff requirement vectors will be referred to as the {\em system feasibility region} $\feasibilityRegion$. 
We let $\feasibilityRegion_\eta$ denote the feasibility region of a user prioritization policy $\eta$. 
To characterize $\feasibilityRegion$ we introduce some further notation. 

A vector $\mathbf{x}$ is said to be {\em dominated} by a vector $\mathbf{y}$ if $x_i \leq y_i$ for all $i$ and is denoted by $\mathbf{x} \preceq \mathbf{y}$. We define $\mathbf{x} \prec \mathbf{y}$, $\mathbf{x} \succeq \mathbf{y}$ and $\mathbf{x} \succ \mathbf{y}$ in a similar manner. 

Given the set of priority decisions $D$ and the expected payoff vectors $P = \{\mathbf{p}(\mathbf{d}) | \mathbf{d} \in D \}$, we let $C$ be the set of requirement vectors $\reqvec \in \mathbb R^n_{+}$ which are dominated by a vector in the convex hull of $P$ denoted Conv($P$), i.e., 
\begin{align}
\label{align_C}
C \equiv \{ \reqvec \in \mathbb R^n_{+} ~|~ \exists \mathbf{x} \in \text{Conv($P$)} \text{ such that } \reqvec \preceq \mathbf{x} \}. 
\end{align}

Figure~{\ref{fig_feasible_region}} exhibits $C$ for a two-user (left figure) and three-user (right figure) setting. 
In the two-user setting, the points labeled $\mathbf{p}(\mathbf{d}_1)$ and $\mathbf{p}(\mathbf{d}_2)$ are the expected payoff vectors of two priority decisions, i.e., where User 1 or User 2 has higher priority, respectively. The shadowed area represents $C$. 
\dissertationStart
Clearly these two expected payoff vectors are always on a line. 
\commentEnd\fi
In the three-user setting, the circles represent the $6$ possible expected payoff vectors, and the region dominated by their convex hull is $C$. 
Note that in a $n$-user scenario where $n\geq 3$, as displayed the expected payoff vectors need not be on a hyperplane in the $n$-dimensional space. 
As we will see this is essentially the source of complexity in studying such systems. 

\begin{figure}[htp]
  \centering
  \includegraphics[width=0.65\textwidth]{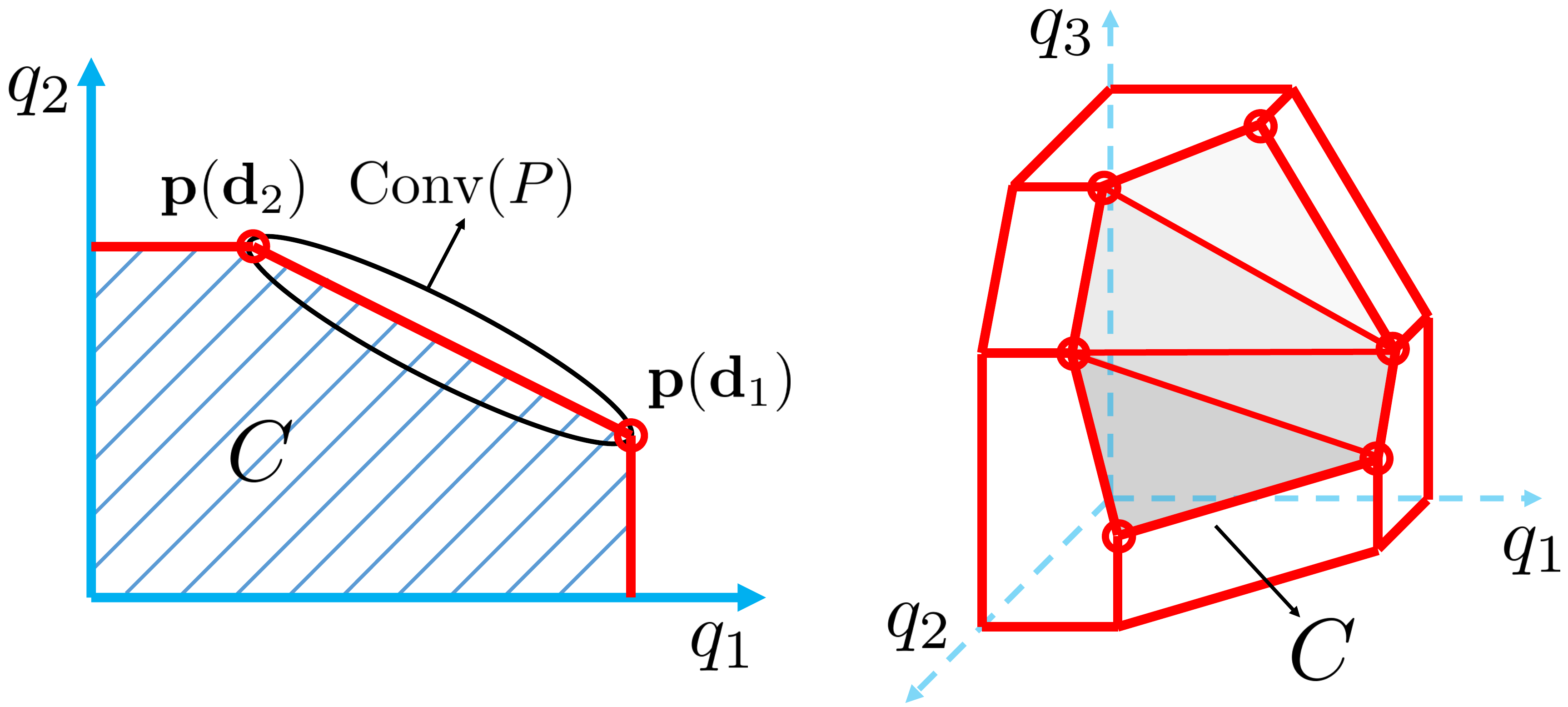}
  \caption{Examples of set $C$ when $n=2$ and $n=3$. }
  \label{fig_feasible_region}
\end{figure}

Clearly, for any requirement vector $\reqvec$ in the interior of $C$, denoted by $\text{int}(C)$, one can achieve $\reqvec$ if one is allowed to do probabilistic time sharing among priority decisions by picking decisions according to a pre-computed probability distribution whose mean payoff dominates $\reqvec$. Therefore, $\text{int}(C) \subseteq \feasibilityRegion$. We can also show the following result. 

\begin{lemma}
\label{lemma_R_in_close_C}
The system feasibility region $F$ is such that
$$
\feasibilityRegion \subseteq \textnormal{cl}(C), 
$$
where $\textnormal{cl}(C)$ is the closure of $C$. 
\end{lemma}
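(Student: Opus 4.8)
The plan is to argue the contrapositive: I will show that any requirement vector $\reqvec$ lying strictly outside $\mathrm{cl}(C)$ cannot be fulfilled by any user prioritization policy, so that feasibility forces $\reqvec \in \mathrm{cl}(C)$. The starting observation is that $C$ is convex and downward-closed in $\mathbb{R}^n_+$: it is the set of points of $\mathbb{R}^n_+$ dominated by the convex set $\mathrm{Conv}(P)$, and a convex combination of two dominated points is dominated by the corresponding convex combination of their dominating points. Hence $\mathrm{cl}(C)$ is a closed convex set. If $\reqvec \notin \mathrm{cl}(C)$, the separating hyperplane theorem yields a normal vector $\mathbf{w}$ with a strict separation; using that $C$ is downward-closed one may take $\mathbf{w} \succeq \mathbf{0}$, and since a linear functional over the polytope $\mathrm{Conv}(P)$ is maximized at a vertex (which lies in $P$) we obtain a constant $\delta := \mathbf{w}\cdot\reqvec - \max_{\mathbf{d}\in D}\mathbf{w}\cdot\mathbf{p}(\mathbf{d}) > 0$.

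Next I would introduce the linear Lyapunov function $L(\mathbf{X}) = \mathbf{w}\cdot\mathbf{X} = \sum_{i\in\fullUserSet} w_i X_i$ and compute its one-step drift under an arbitrary policy. Dropping the truncation in \eqref{align_deficit_in_general_payoff} only decreases each coordinate, so $X_i(t+1) \geq X_i(t) + \reqscalar_i - V_i(\mathbf{d}(t+1))$, and because $w_i \geq 0$ this gives $L(\mathbf{X}(t+1)) \geq L(\mathbf{X}(t)) + \mathbf{w}\cdot\reqvec - \mathbf{w}\cdot\mathbf{V}(\mathbf{d}(t+1))$. Taking the conditional expectation given the history $\mathcal{F}_t$ and using that the decision $\mathbf{d}(t+1)$ is $\mathcal{F}_t$-measurable with $\expectation[V_i(\mathbf{d}(t+1))\mid\mathcal{F}_t] = p_i(\mathbf{d}(t+1))$, I obtain $\expectation[L(\mathbf{X}(t+1)) - L(\mathbf{X}(t))\mid\mathcal{F}_t] \geq \mathbf{w}\cdot\reqvec - \mathbf{w}\cdot\mathbf{p}(\mathbf{d}(t+1)) \geq \delta$. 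Thus $L$ has a uniformly positive drift, bounded below by $\delta$ independently of the state.

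Finally I would turn this uniform positive drift into a contradiction with positive recurrence. The delicate point, and the one I expect to be the main obstacle, is that positive recurrence does not by itself guarantee finite stationary moments of $\mathbf{X}(t)$, so I cannot simply equate the stationary drift of $L$ to zero. Instead I would exploit that the payoff vectors form a finite set, so the increments $\Delta L_t = L(\mathbf{X}(t+1)) - L(\mathbf{X}(t))$ are uniformly bounded. Writing $\Delta L_t = g_t + m_t$ with $g_t = \expectation[\Delta L_t\mid\mathcal{F}_t] \geq \delta$ and $m_t$ a bounded martingale-difference sequence, the strong law for martingales gives $\frac{1}{T}\sum_{t<T} m_t \to 0$ almost surely, whence $L(\mathbf{X}(T))/T \to c \geq \delta > 0$ and $L(\mathbf{X}(T)) \to \infty$ almost surely. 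This contradicts recurrence, since a positive recurrent chain revisits any fixed reference state infinitely often, forcing $L(\mathbf{X}(t))$ to return to a fixed finite value infinitely often. Hence no policy makes $\{\mathbf{X}(t)\}$ positive recurrent when $\reqvec\notin\mathrm{cl}(C)$, establishing $\feasibilityRegion\subseteq\mathrm{cl}(C)$. I note that the same drift inequality, read directly rather than through the contrapositive, also shows that a fulfilling policy forces $\reqvec$ to be dominated by a long-run average decision mixture $\sum_{\mathbf{d}}\pi_{\mathbf{d}}\,\mathbf{p}(\mathbf{d}) \in \mathrm{Conv}(P)$; the separating-hyperplane route is what cleanly produces the closure appearing in the statement.
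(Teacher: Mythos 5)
Your argument is correct, but it takes a genuinely different route from the paper's. The paper argues directly: positive recurrence gives a stationary distribution, the ergodic theorem extracts long-run time fractions $\alpha_{\mathbf{d}}$ for each decision, and a rate-conservation ("arrival rate cannot exceed departure rate") argument on each coordinate yields $\reqvec \preceq \sum_{\mathbf{d}} \alpha_{\mathbf{d}}\,\mathbf{p}(\mathbf{d}) \in \text{Conv}(P)$, hence $\reqvec \in C$. You instead prove the contrapositive via a separating hyperplane, a linear Lyapunov function with uniformly positive drift $\delta$, and the martingale SLLN to force $L(\mathbf{X}(t)) \to \infty$, contradicting recurrence. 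Your route buys rigor at exactly the point where the paper is informal (the "otherwise $X_i(t)$ goes a.s. to infinity" step), and it never needs to identify the mixture $\{\alpha_{\mathbf{d}}\}$; the paper's route is shorter and more transparent about \emph{which} point of $\text{Conv}(P)$ dominates $\reqvec$, and it directly yields the marginally stronger conclusion $\feasibilityRegion \subseteq C$ (equivalent here, since $\text{Conv}(P)$ is a polytope and $C$ is already closed). The one step in your write-up deserving slightly more care is the claim that the separating normal can be taken non-negative: the cleanest justification is to separate $\reqvec$ from the full downward closure $\{\mathbf{y} \in \mathbb{R}^n \mid \exists\, \mathbf{x} \in \text{Conv}(P),\ \mathbf{y} \preceq \mathbf{x}\}$, whose intersection with $\mathbb{R}^n_+$ is $C$; finiteness of the supremum of the linear functional over that set forces $\mathbf{w} \succeq \mathbf{0}$. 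Also, your conclusion should be stated as $\liminf_T L(\mathbf{X}(T))/T \geq \delta$ rather than convergence to a limit, but that is all the contradiction requires.
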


Intuitively, if $\reqvec$ is feasible, it is fulfilled by some user prioritization policy that in the long-term picks each priority decision some fraction of the time and thus, $\reqvec$ is dominated by some point in the convex hull of $P$. This is similar to prior work, e.g., \cite{TaE92}. 
\infocomStart
See the extended version of this paper \cite{EXT2} for the proof. 
\commentEnd\fi
\extendedStart
See the appendix for a detailed proof. 
\commentEnd\fi
In other words, $C$ is different from $\feasibilityRegion$ by at most a boundary, and therefore, characterizes $\feasibilityRegion$ for practical purposes. Thus, in the sequel we will also refer to $C$ as the system feasibility region. 

Ideally, it is desirable to devise an ``optimal'' policy that can fulfill all feasible requirements. 
More formally, a user prioritization policy $\eta$ is said to be {\em feasibility optimal} if $\text{int}(C) \subseteq \feasibilityRegion_{\eta} \subseteq \text{cl}(C)$. 
Similar to prior work \cite{TaE92,TaE93}, the following max weight-like policy is one such feasibility optimal policy. 

\begin{definition}
The {\bf deficit-based max weight (MW)} prioritization policy is such that, at period $t+1$, given the deficit vector $\mathbf{X}(t)$ computed by (\ref{align_deficit_in_general_payoff}), it picks a priority decision $\mathbf{d}(t+1)$ that satisfies
\begin{align}
\label{align_max_weight}
\mathbf{d}(t+1) \in \argmax\limits_{\mathbf{d} \in D}\langle \mathbf{X}(t), \mathbf{p}(\mathbf{d}) \rangle, 
\end{align}
where $\langle \mathbf{x}, \mathbf{y}\rangle$ is the inner product of two vectors. 
\end{definition}

\begin{theorem}
\label{theorem_MW_feasibility_optimal}
\infocomStart
The feasibility region of the MW policy $\feasibilityRegion_{\text{MW}}$ is such that 
$$
\text{int}(C) \subseteq \feasibilityRegion_{MW} \subseteq \text{cl}(C), 
$$
and therefore, the MW policy is feasibility optimal. 
\commentEnd\fi
\extendedStart
The system feasibility region $\feasibilityRegion$ and the feasibility region of the MW policy $\feasibilityRegion_{\textnormal{MW}}$ are related to $C$ as follows, 
$$
\textnormal{int}(C) \subseteq \feasibilityRegion_{\textnormal{MW}} \subseteq \feasibilityRegion \subseteq \textnormal{cl}(C), 
$$
and therefore, the MW policy is feasibility optimal. 
\commentEnd\fi
\end{theorem}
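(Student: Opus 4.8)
The three inclusions split into one routine observation and one substantive argument. The rightmost inclusion $\feasibilityRegion \subseteq \textnormal{cl}(C)$ is exactly Lemma~\ref{lemma_R_in_close_C}, and $\feasibilityRegion_{\textnormal{MW}} \subseteq \feasibilityRegion$ is immediate because MW is itself a user prioritization policy while $\feasibilityRegion$ is by definition the union of the feasibility regions over all such policies. Hence the entire content of the theorem is the leftmost inclusion $\textnormal{int}(C) \subseteq \feasibilityRegion_{\textnormal{MW}}$: the plan is to show that for every $\reqvec \in \textnormal{int}(C)$ the deficit Markov chain $\{\mathbf{X}(t)\}$ driven by the MW rule~(\ref{align_max_weight}) is positive recurrent.

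The argument I would use is the standard max-weight Lyapunov-drift argument via Foster's criterion. First, fix $\reqvec \in \textnormal{int}(C)$. Since $C$ is the downward-closed region under $\textnormal{Conv}(P)$, interiority supplies an $\epsilon > 0$ and a point $\bar{\mathbf{p}} \in \textnormal{Conv}(P)$ with $\bar p_i \geq \reqscalar_i + \epsilon$ for all $i$; write $\bar{\mathbf{p}} = \sum_{\mathbf{d}} \lambda_{\mathbf{d}} \mathbf{p}(\mathbf{d})$ for a probability vector $\lambda$. I would take the quadratic Lyapunov function $L(\mathbf{x}) = \tfrac{1}{2}\sum_i x_i^2$. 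The truncation in~(\ref{align_deficit_in_general_payoff}) is harmless for a drift bound: because $([a]^+)^2 \le a^2$, one has $X_i(t+1)^2 \le \big(X_i(t) + \reqscalar_i - V_i(\mathbf{d}(t+1))\big)^2$.

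Next I would expand the square, take conditional expectation given $\mathbf{X}(t) = \mathbf{x}$ and the resulting MW decision $\mathbf{d}^*(\mathbf{x})$, and use $\expectation[V_i(\mathbf{d})] = p_i(\mathbf{d})$ together with a uniform bound $B$ on $\sum_i \expectation[(\reqscalar_i - V_i(\mathbf{d}))^2]$ (finite since $P$ is a finite set) to obtain
$$
\expectation\!\left[ L(\mathbf{X}(t+1)) - L(\mathbf{X}(t)) \mid \mathbf{X}(t) = \mathbf{x} \right] \le \langle \mathbf{x}, \reqvec \rangle - \langle \mathbf{x}, \mathbf{p}(\mathbf{d}^*(\mathbf{x})) \rangle + \tfrac{B}{2}.
$$
The defining property of MW is that $\langle \mathbf{x}, \mathbf{p}(\mathbf{d}^*(\mathbf{x})) \rangle = \max_{\mathbf{d}} \langle \mathbf{x}, \mathbf{p}(\mathbf{d}) \rangle \ge \langle \mathbf{x}, \bar{\mathbf{p}} \rangle$, since a maximum over the candidate vectors dominates any convex combination of them. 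Substituting and using $\bar p_i - \reqscalar_i \ge \epsilon$ with $x_i \ge 0$ gives drift $\le \langle \mathbf{x}, \reqvec - \bar{\mathbf{p}} \rangle + \tfrac{B}{2} \le -\epsilon \sum_i x_i + \tfrac{B}{2}$, which is at most $-1$ whenever $\sum_i x_i$ exceeds $(1 + B/2)/\epsilon$.

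Finally, I would invoke Foster's criterion. Because $\reqvec$ and all payoff values are rational, the deficits evolve on a countable lattice, so the sublevel set $\{\mathbf{x} : \sum_i x_i \le (1+B/2)/\epsilon\}$ meets the state space in a finite set; combined with the assumed irreducibility, negative drift outside this finite set yields positive recurrence, whence $\reqvec \in \feasibilityRegion_{\textnormal{MW}}$. The step I expect to demand the most care is not the algebra but the two structural points that make Foster applicable: the uniform second-moment bound $B$ (resting on finiteness of $P$) and the finiteness of the bounded sublevel set (resting on rationality to discretize the state space). The max-weight dominance itself is the clean conceptual core and is essentially a one-line consequence of linearity of $\langle \mathbf{x}, \cdot \rangle$.
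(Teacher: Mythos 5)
Your proposal is correct and follows essentially the same route as the paper: a quadratic Lyapunov function, the observation that the truncation only helps the drift, the max-weight inner product dominating the convex combination that strictly dominates $\reqvec$, and Foster's criterion with the rationality/lattice argument to make the sublevel set finite. The only cosmetic difference is that you inline the easy direction of the paper's Lemma~\ref{lemma_feasibility_region_projection_understanding} (a maximum over $P$ dominates any point of $\textnormal{Conv}(P)$) rather than citing the full duality statement, which is all that the drift bound actually requires.
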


\noindent See Appendix \ref{appendix_pf_thm_MW_feasibility_optimal} for the proof.

However, the MW policy and time sharing policies require full knowledge of $P$ which is challenging in complex practical systems. Moreover, these policies are hard to implement since they involve solving fairly complex optimization problems, i.e., Eq (\ref{align_max_weight}). Changes in the user set or payoff requirement vector $\reqvec$ will also impact the realization of these policies. 
In summary, the requirements in terms of a-priori knowledge, the computational complexity and lack of flexibility to changes make them hard to use in practice. This motivates the policies considered in the next subsection. 

For ease of reference, Table \ref{tab_notation} provides a summary of the notation used to denote various regions used in the rest of the paper\textemdash some of these are introduced in the sequel. 

\begin{table}[h]
\normalsize
    \tbl{Notation of regions. \label{tab_notation}}{
    \centering
    \begin{tabular}{| c || l |}
    \hline
    {\bf Regions} & {\bf Description}\\
    \hline
    \hline
    $\feasibilityRegion$ & System Feasibility Region. \\
    \hline
    Conv$(P)$ & Convex hull of the expected payoff vectors.     \\
    \hline
    $C$ & Region dominated by Conv$(P)$. \\ 
    \hline
    $\feasibilityRegion_{\mathbf{w}\text{-LDF}}$ & Feasibility region of the $\mathbf{w}$-LDF policy. \\
    \hline
    $R_\text{IB}$ & An inner bound for $\feasibilityRegion_{\mathbf{w}\text{-LDF}}$\\
    \hline
    $B$ & Dominant of the convex hull. \\
    \hline
    $R$ & Region characterizing the geometry of $R_\text{IB}$. \\
    \hline
    \end{tabular}
    }
\end{table}

\subsection{Weighted LDF Policies and Associated Feasibility Regions}
\label{subsection_LDF}
The LDF user prioritization policies require no a-priori knowledge of the system, are simple to implement and adapt easily to changes in $\reqvec$ or the user set. In particular we shall characterize the feasibility regions of these policies by providing an inner bound. 

\begin{definition}
\label{defn_w_LDF}
Given a vector $\mathbf{w} = (w_1, w_2, \cdots, w_n) \succ \mathbf{0}$, the {\bf weighted Largest Deficit First ($\mathbf{w}$-LDF)} user prioritization policy is such that, at period $t+1$, given the deficit vector $\mathbf{X}(t)$, it picks a priority decision $\mathbf{d}$ that satisfies
$$
w_{d_1}X_{d_1}(t) \geq w_{d_2}X_{d_2}(t) \geq \cdots \geq w_{d_n}X_{d_n}(t),
$$
with ties broken arbitrarily (possibly randomly). In other words, it sorts the weighted deficits of users and assigns priorities accordingly. 
Let $\mathbf{1} \equiv (1, 1, \cdots, 1)$. 
We refer to the policy with $\mathbf{w} = \mathbf{1}$ the {\bf Largest Deficit First (LDF)} policy. 
\end{definition}

\dissertationStart
The LDF policy has been explored before \cite{DiW06, JLS07, KWJ13} in the context of specific resource, workload and payoff models. We also generalize it to a class of policies with different weight vectors and provide better geometric characterization of the feasibility region. 
\commentEnd\fi

Clearly, the $\mathbf{w}$-LDF prioritization policies do not require knowledge of the expected payoff vectors $P$.
Note that we still use deficit feedback to stabilize the system. 
In terms of computational complexity, solving (\ref{align_max_weight}) is $O(n!)$ while sorting weighted deficits only requires $O(n\log n)$. 
It also allows us to further differentiate the performance across users by assigning different weights. The impact of weights is discussed in Section \ref{sec_simulations}. 

Prior work has established that the LDF policy need not be feasibility optimal. Therefore, a key question is whether the feasibility regions for the $\mathbf{w}$-LDF policies are acceptable and to characterize the gap between their feasibility regions and the system feasibility region $\feasibilityRegion$. 
To that end, we first provide an inner bound, denoted by $R_{\text{IB}}$, for the feasibility region of any $\mathbf{w}$-LDF policy. 

\begin{theorem}
\label{thm_R_IB}
For any $\mathbf{w} \succ \mathbf{0}$, an inner bound for the feasibility region of the $\mathbf{w}$-LDF policy $\feasibilityRegion_{\mathbf{w}\textnormal{-LDF}}$ is given by
$
\textnormal{int}(R_{\textnormal{IB}}) \subseteq \feasibilityRegion_{\mathbf{w}\textnormal{-LDF}}, 
$ 
where
\begin{align}
\label{align_R_IB_defn}
R_{\textnormal{IB}} \equiv \{\reqvec \in \mathbb R^n_{+} ~|~ \exists \ribvec \succ \mathbf{0} \text{ such that } \forall S\subseteq \fullUserSet, 
\sum\limits_{i\in S} \ribscalar_i {\reqscalar}_i \leq \min\limits_{\mathbf{d} \in D(S)} \sum\limits_{i\in S} {\ribscalar}_i p_i(\mathbf{d})
\}
\end{align}
where $D(S)$ denotes the set of all priority decisions that assign the highest $|S|$ priorities to users in $S$. 
\end{theorem}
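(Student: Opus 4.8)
The plan is to prove positive recurrence of the deficit chain $\{\mathbf{X}(t)\}$ under the $\mathbf{w}$-LDF policy for every $\reqvec \in \textnormal{int}(R_{\textnormal{IB}})$ via a Foster--Lyapunov drift argument. Since $\reqvec$ and the payoff set are rational, the chain lives on a countable lattice and is assumed irreducible, so it suffices to exhibit a Lyapunov function whose conditional one-step drift is negative outside a finite set. First I would convert the interior hypothesis into a usable margin: because $\reqvec \in \textnormal{int}(R_{\textnormal{IB}})$, for small enough $\delta>0$ the point $\reqvec + \delta\mathbf{1}$ lies in $R_{\textnormal{IB}}$, which furnishes a single witness $\ribvec \succ \mathbf{0}$ such that, for every $S \subseteq \fullUserSet$,
\[
\sum_{i\in S}\ribscalar_i(\reqscalar_i + \delta) \;\le\; \min_{\mathbf{d}\in D(S)}\sum_{i\in S}\ribscalar_i\, p_i(\mathbf{d}).
\]
I would then take the weighted quadratic Lyapunov function $L(\mathbf{X}) = \sum_{i}\ribscalar_i w_i X_i^2$, whose coefficients deliberately pair the dual weights $\ribscalar_i$ of $R_{\textnormal{IB}}$ with the scheduling weights $w_i$.

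For the drift I would use $X_i(t+1)^2 = ([X_i(t)+\reqscalar_i - V_i(\mathbf{d})]^+)^2 \le (X_i(t)+\reqscalar_i-V_i(\mathbf{d}))^2$ and take the conditional expectation given $\mathbf{X}(t)=\mathbf{x}$, where $\mathbf{d}$ is the decision the policy selects in state $\mathbf{x}$. The second-order terms are uniformly bounded by a constant $K$ since the payoff set is finite, and the conditional drift is at most $-2\sum_i \ribscalar_i w_i x_i\,(p_i(\mathbf{d}) - \reqscalar_i) + K$. The crux is to show the cross term $\sum_i \ribscalar_i w_i x_i\,(p_i(\mathbf{d}) - \reqscalar_i)$ grows at least linearly in $\|\mathbf{x}\|$.

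The key structural observation is that if the policy orders users so that $w_{d_1}x_{d_1}\ge w_{d_2}x_{d_2}\ge\cdots\ge w_{d_n}x_{d_n}$, then for every prefix $S_k = \{d_1,\dots,d_k\}$ the selected decision assigns the top $k$ priorities to $S_k$, i.e.\ $\mathbf{d}\in D(S_k)$, and this holds for any tie-breaking rule. Applying the $R_{\textnormal{IB}}$ inequality to each $S_k$ with $\mathbf{d}\in D(S_k)$ then gives $g_k := \sum_{i\in S_k}\ribscalar_i(p_i(\mathbf{d})-\reqscalar_i) \ge \delta\sum_{i\in S_k}\ribscalar_i \ge \delta\,\ribscalar_{\min}$ for every $k$. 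I would write the cross term in the sorted order and apply summation by parts with the nonincreasing coefficients $c_k := w_{d_k}x_{d_k}$, whose increments have partial sums exactly equal to the $g_k$, obtaining
\[
\sum_i \ribscalar_i w_i x_i\,(p_i(\mathbf{d})-\reqscalar_i) \;=\; \sum_{k=1}^{n-1}(c_k-c_{k+1})\,g_k + c_n\, g_n \;\ge\; \delta\,\ribscalar_{\min}\,c_1,
\]
using $c_k - c_{k+1}\ge 0$ from the LDF sorting and $g_k \ge \delta\ribscalar_{\min}$. Here $c_1 = \max_i w_i x_i \ge w_{\min}\|\mathbf{x}\|_\infty$, so the cross term indeed grows linearly. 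This is precisely where the choice $L = \sum_i \ribscalar_i w_i X_i^2$ pays off: the Lyapunov weights make the Abel coefficients coincide with the sorted weighted deficits that the policy monotonizes.

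Combining the two bounds, the conditional drift is at most $-2\delta\,\ribscalar_{\min}w_{\min}\|\mathbf{x}\|_\infty + K$, which is strictly negative once $\|\mathbf{x}\|_\infty$ exceeds a finite threshold; Foster's criterion then gives positive recurrence, so $\reqvec \in \feasibilityRegion_{\mathbf{w}\textnormal{-LDF}}$. I expect the main obstacle, and the conceptual heart of the argument, to be discovering the correct pairing of the dual weights $\ribscalar_i$ with $w_i$ in the Lyapunov function together with the recognition that the LDF decision simultaneously lies in \emph{all} prefix families $D(S_k)$ --- exactly the fact that licenses the summation-by-parts cancellation against the nonincreasing sorted deficits. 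A minor technical point to check is that random tie-breaking is harmless, since every decision consistent with the sorting still satisfies $\mathbf{d}\in D(S_k)$ for all $k$, so the bound holds realization-by-realization and hence in expectation.
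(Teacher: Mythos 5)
Your proposal is correct and follows essentially the same route as the paper's proof: the same Lyapunov function $L(\mathbf{X})=\sum_i \ribscalar_i w_i X_i^2$, the same key observation that the $\mathbf{w}$-LDF decision lies in $D(S_k)$ for every prefix $S_k$ of the sorted order, and the same summation-by-parts rearrangement against the nonincreasing weighted deficits. The only cosmetic difference is how the interior margin is extracted (the paper perturbs $\reqvec$ by $\epsilon(1/w_1,\dots,1/w_n)$ and isolates a $-\epsilon\langle\mathbf{x},\ribvec\rangle$ term, whereas you perturb by $\delta\mathbf{1}$ and carry the margin through the prefix sums to get a bound linear in $\max_i w_i x_i$); both yield the required negative drift outside a finite set.
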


In other words, if $\reqvec\in R_{\text{IB}}$, it is feasible under all $\mathbf{w}$-LDF policies except perhaps boundary points. 
The underlying intuition for this bound is as follows. A vector $\reqvec$ is in $R_\text{IB}$ if
there is a weight vector $\ribvec \succ \mathbf{0}$ such that for any subset of users $S$, and decisions giving users in $S$ the highest priorities, the weighted sum of payoff requirement $\sum\limits_{i \in S}\ribscalar_i \reqscalar_i$ will not exceed the least sum weighted payoff $\sum\limits_{i\in S} {\ribscalar}_i p_i(\mathbf{d})$. 
Based on $\ribvec$, we can construct an appropriate Lyapunov function to show feasibility for $\reqvec$ and each $\mathbf{w}$. 
\infocomStart
See the extended version of this paper \cite{EXT2} for the proof. 
\commentEnd\fi
\extendedStart
See Appendix \ref{appendix_pf_thm_R_IB_chap2} for the proof. 
\commentEnd\fi

\dissertationStart
In other words, $\reqvec\in R_{\text{IB}}$ and is feasible under the $\mathbf{w}$-LDF policy if there is a weight vector $\ribvec \succ \mathbf{0}$ such that for any subset of users $S$, by giving users in $S$ the highest priorities, the weighted sum of payoff requirement $\sum\limits_{i \in S}\ribscalar_i \reqscalar_i$ will not exceed the least sum weighted payoff $\sum\limits_{i\in S} {\ribscalar}_i p_i(\mathbf{d})$. 
\commentEnd\fi

Understanding the geometry of $R_\text{IB}$ enables us to characterize the performance gap between $\mathbf{w}$-LDF and feasibility optimal policies. 
Let us informally consider the geometry of $R_\text{IB}$ for the two special cases in Figure~{\ref{fig_feasible_region}}. 
In the two-user case in Figure~{\ref{fig_feasible_region}}, $R_\text{IB}$ is the same as $C$ and thus, the $\mathbf{w}$-LDF policies are feasibility optimal. 
However, in the three-user case in Figure~{\ref{fig_feasible_region}}, this need not be true. 
Indeed, in this setting, the region $R_\text{IB}$ corresponds to $C$ minus the convex hull of $P$, modulo some boundary points. 
This is exhibited in Figure~{\ref{fig_three_user_R_IB}}. 
In the next subsection, we will formalize these observations and show under what conditions they hold true. 

\begin{figure}[htp]
  \centering
  \includegraphics[width=0.8\textwidth]{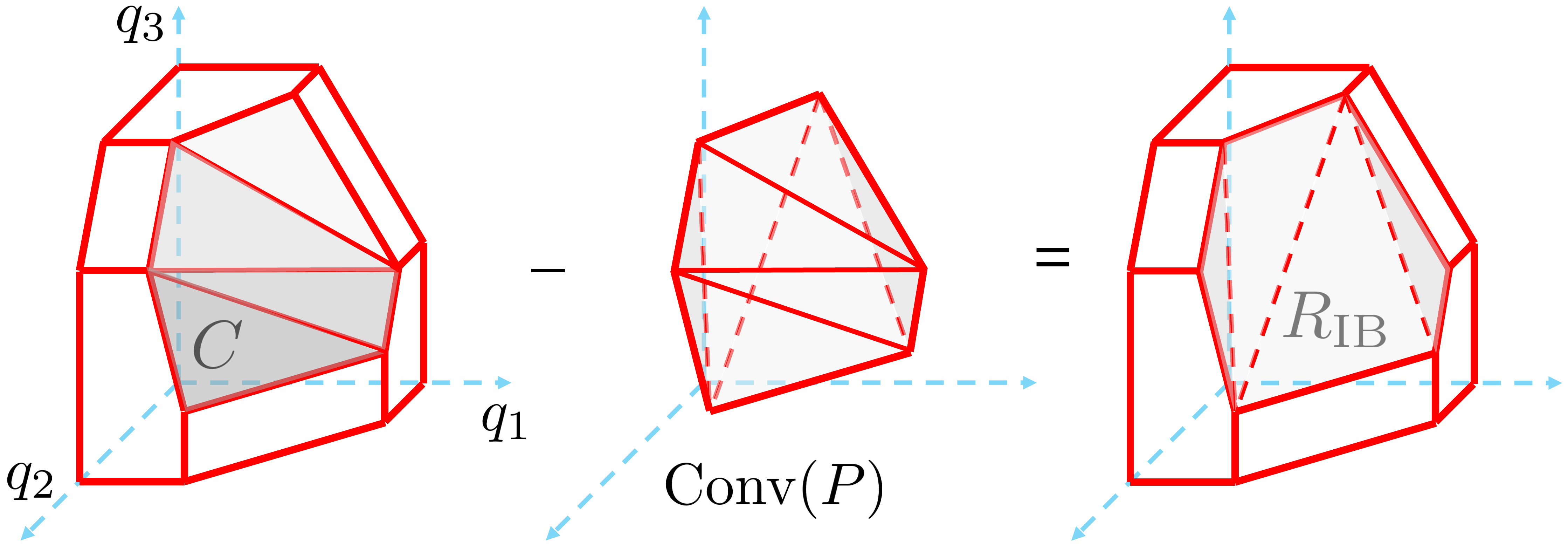}
  \caption{Visualizing $R_{\text{IB}}$ for the three-user scenario in Figure~{\ref{fig_feasible_region}}. \add{In this example, the expected payoff vectors $P = \{\mathbf{p}(\mathbf{d}) | \mathbf{d} \in D \}$ are not on the same hyperplane} and $R_{\text{IB}} = \text{cl}(C - \text{Conv}(P))$. }
  \label{fig_three_user_R_IB}
\end{figure}

\subsection{Geometry of $R_{\text{IB}}$ under Monotonicity in Payoffs}
\label{subsection_characterizing_R_IB}

In order to formally characterize the geometry of $R_{\text{IB}}$ we will add a further natural requirement to the general model. 

We define $S_i(\mathbf{d})$ to be the set of users that have higher priorities than user $i$ under decision $\mathbf{d}$. 

\begin{definition}
The system with expected payoff vectors $P = \{\mathbf{p}(\mathbf{d}) | \mathbf{d} \in D \}$ is said to satisfy {\bf \priorityImplication} if, for any two priority decisions $\mathbf{d}_1$ and $\mathbf{d}_2$ and any user $i$ such that $S_i(\mathbf{d}_1) \subseteq S_i(\mathbf{d}_2)$, it is true that $p_i(\mathbf{d}_1) \geq p_i(\mathbf{d}_2)$. We call this {\bf monotonicity in payoffs} for short. 
\end{definition}

In other words, a user $i$ can expect to get a higher payoff if some users with higher priority are re-assigned lower priorities. 
\dissertationStart
Note we are not comparing the expected payoffs of different users since payoffs can be defined in different ways for different users and may not be comparable. 
\commentEnd\fi
This property characterizes in a broad sense how priorities impact the expected payoffs when the underlying system allocates resources. 
It is a natural condition but need not hold in general. 

We shall define $B$ to be the set of payoff requirement vectors $\reqvec$ which dominate a vector in the convex hull of $P$, i.e., 
$$
B \equiv \{ \reqvec \in \mathbb R_+^n ~|~ \exists \mathbf{x} \in \text{Conv($P$)} \text{ such that } \reqvec \succeq \mathbf{x} \}. 
$$
We call $B$ the {\em dominant of the convex hull}. Contrast this to the definition of $C$ in (\ref{align_C}). 

For the special cases in Figure~{\ref{fig_feasible_region}} and \ref{fig_three_user_R_IB}, $B \cap C$ equals to $\text{Conv}(P)$, but in general it can be larger than $\text{Conv}(P)$. Figure~{\ref{fig_eg_for_B_intersect_C}} shows a conceptual picture of what could happen. The three circles represent three possible expected payoff vectors. Here, the whole shadowed area $B \cap C$ is larger than the region Conv$(P)$ which is the triangle formed by the three circles. 
Note that this is only a conceptual example to help visualize $B\cap C$ in higher dimensions. In reality for two dimensions, i.e., systems with two users, we know there are only $2$ expected payoff vectors as shown in Figure~{\ref{fig_feasible_region}}. 

\begin{figure}[htp]
  \centering
  \includegraphics[width=0.3\textwidth]{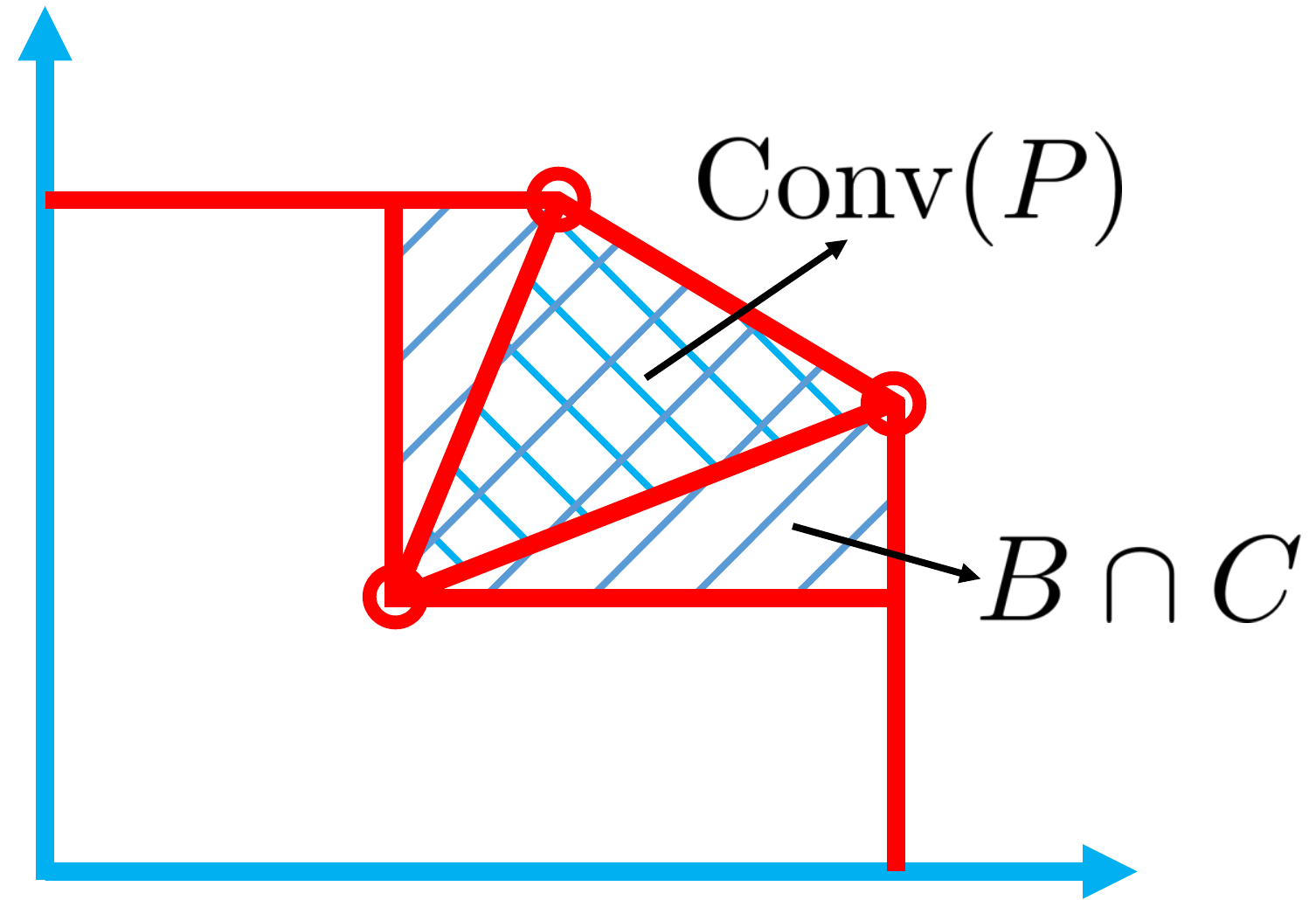}
  \caption{An example where $B\cap C$ is larger than Conv$(P)$. }
  \label{fig_eg_for_B_intersect_C}
\end{figure} 

In the sequel we will see that given monotonicity in payoffs, $R_{\text{IB}}$ is obtained by ``removing'' $B \cap C$, rather than just $\text{Conv}(P)$ from $C$. 
To develop this result we need some further notation associated with each subset of users $S\subseteq \{1, 2, \cdots, n\}$. 

The projection of a vector $\mathbf{x}$ on the subspace of $S$ is denoted by $\mathbf{x}^S$, i.e., 
$$
x_i^S = \left\{ 
   \begin{array}{l l}
     x_i & \quad \text{if $i \in S$}	\\
     0 & \quad \text{otherwise}.	\\
   \end{array} \right.
$$


We let $P^S \equiv \{\mathbf{p}^S(\mathbf{d}) | \mathbf{d} \in D(S)\}$ represent the projections of expected payoff vectors corresponding to decisions in $D(S)$, i.e., which assign the highest priorities to users in $S$. 

Given a subset $S$ and $P^S$, we define the feasibility region $C^S$ and the dominant of the convex hull $B^S$ as follows. 
\begin{align*}
C^S \equiv \{& \reqvec^S \in \mathbb R^n_+ ~|~ \exists \mathbf{x}^S \in \text{Conv($P^S$)} \text{ such that } \reqvec^S \preceq \mathbf{x}^S \}, 	\\
B^S \equiv \{& \reqvec^S \in \mathbb R^n_+ ~|~ \exists \mathbf{x}^S \in \text{Conv($P^S$)} \text{ such that } \reqvec^S \succeq \mathbf{x}^S \}. 
\end{align*}

Note that $C^S$ and $B^S$ are not necessarily the same as projecting $C$ and $B$ on the subspace of $S$, respectively. This is because in the definitions of $C^S$ and $B^S$, 
we only focus on a subset of decisions $D(S)$ rather that the full decision set $D$. 
\dissertationStart
we only focus on priority decisions $\mathbf{d} \in D(S)$ rather that the full decision set $D$. 
\commentEnd\fi

Let us now define a region $\concaveHull$ which will help characterize the geometry of the inner bound $R_{\text{IB}}$. 

\begin{definition}
\label{defn_concave_hull}
Let $\concaveHull$ be defined as follows:
$$
\concaveHull \equiv \{ \reqvec \in \mathbb R^n_+ ~|~  \forall S \subseteq \fullUserSet, \reqvec^S \in  C^S \setminus B^S \},
$$
where $C^S \setminus B^S = \{\reqvec^S|\reqvec^S \in C^S, \reqvec \notin B^S \}$. 
In other words, any $\reqvec \in R$ satisfies that for any user subset $S$, its projection on the subspace of $S$ belongs to the set $C^S \setminus B^S$, which is the feasibility region $C^S$ minus the dominant of the convex hull $B^S$. 
\end{definition}

One can visualize obtaining the set $\concaveHull$ as a process of removing $B^S \cap C^S$ from $C^S$ in all subspaces corresponding to all subsets $S$. The geometry of $R_\text{IB}$ is then captured as follows. 
\begin{theorem}
\label{thm_visualize_R_IB}
If the system satisfies monotonicity in payoffs, then the inner bound region $R_\textnormal{IB}$ is such that
$$
\textnormal{int}(\concaveHull) \subseteq R_\textnormal{IB} \subseteq \textnormal{cl}(\concaveHull). 
$$
\end{theorem}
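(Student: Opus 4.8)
The plan is to translate both regions into the language of supporting hyperplanes and then exploit \emph{monotonicity in payoffs} through a single rearrangement lemma. The enabling observation I would establish first is: given any decision $\mathbf{d}\in D$ and any $S\subseteq\fullUserSet$, let $\mathbf{d}'\in D(S)$ be obtained by promoting all users of $S$ to the top priorities while preserving their relative order in $\mathbf{d}$. Then for every $i\in S$ we have $S_i(\mathbf{d}')=\{j\in S: j\text{ precedes }i\text{ in }\mathbf d\}\subseteq S_i(\mathbf{d})$, so monotonicity yields $p_i(\mathbf{d}')\geq p_i(\mathbf{d})$, i.e.\ $\mathbf{p}^S(\mathbf{d})\preceq\mathbf{p}^S(\mathbf{d}')$ coordinatewise with $\mathbf{d}'\in D(S)$. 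I would use two consequences repeatedly: (i) for any $\boldsymbol\beta\succeq\mathbf{0}$ supported on $S$, $\max_{\mathbf d\in D}\sum_{i\in S}\beta_i p_i(\mathbf d)=\max_{\mathbf d\in D(S)}\sum_{i\in S}\beta_i p_i(\mathbf d)$; and (ii) $\reqvec\in C$ implies $\reqvec^S\in C^S$ for every $S$ (each $\mathbf{p}^S(\mathbf d)$ is dominated by a point of $\mathrm{Conv}(P^S)$, so the projection of a dominating convex combination stays in the convex down-set $C^S$). Since the converse of (ii) is trivial at $S=\fullUserSet$, the constraints ``$\reqvec^S\in C^S$ for all $S$'' collapse to $\reqvec\in C$, so that, modulo boundaries, $\concaveHull=\{\reqvec\in C\mid \reqvec^S\notin B^S\ \forall S\subseteq\fullUserSet\}$.

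Next I would record two dual reformulations. Expanding the $\min$ in \eqref{align_R_IB_defn}, $\reqvec\in R_{\textnormal{IB}}$ exactly when the linear system in $\ribvec$ — namely $\ribvec\succ\mathbf{0}$ together with $\sum_{i\in S}\ribscalar_i\big(p_i(\mathbf d)-\reqscalar_i\big)\geq 0$ for all $S\subseteq\fullUserSet$ and all $\mathbf d\in D(S)$ — is feasible. Geometrically, since $B^S=\mathrm{Conv}(P^S)+\mathbb{R}^S_+$ is closed convex with recession cone forcing a nonnegative normal, $\reqvec^S\notin B^S$ is equivalent to the existence of $\boldsymbol\beta^S\succeq\mathbf{0}$ supported on $S$ with $\langle\boldsymbol\beta^S,\reqvec^S\rangle<\min_{\mathbf d\in D(S)}\langle\boldsymbol\beta^S,\mathbf{p}^S(\mathbf d)\rangle$. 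The proof then splits into the two claimed inclusions.

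For $R_{\textnormal{IB}}\subseteq\mathrm{cl}(\concaveHull)$ I would argue cheaply. That $\reqvec\in\mathrm{cl}(C)$ can be read off from the feasibility chain $\mathrm{int}(R_{\textnormal{IB}})\subseteq\feasibilityRegion_{\mathbf{w}\textnormal{-LDF}}\subseteq\feasibilityRegion\subseteq\mathrm{cl}(C)$ (Theorem~\ref{thm_R_IB}, $\feasibilityRegion_{\mathbf{w}\textnormal{-LDF}}\subseteq\feasibilityRegion$ by definition, and Lemma~\ref{lemma_R_in_close_C}), extended to the closure. For the upper side, if some $\reqvec^S\in\mathrm{int}(B^S)$ then $\reqvec^S\succ\mathbf{x}^S$ for a convex combination $\mathbf{x}^S$ of points $\mathbf{p}^S(\mathbf d)$, $\mathbf d\in D(S)$; pairing with the positive witness $\ribvec$ gives $\sum_{i\in S}\ribscalar_i\reqscalar_i>\sum_{i\in S}\ribscalar_i x_i\geq\min_{\mathbf d\in D(S)}\sum_{i\in S}\ribscalar_i p_i(\mathbf d)$, contradicting membership in $R_{\textnormal{IB}}$. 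Hence $\reqvec\in\mathrm{cl}(C)$ and $\reqvec^S\notin\mathrm{int}(B^S)$ for every $S$; with consequence (ii) this gives $\reqvec^S\in C^S\setminus\mathrm{int}(B^S)\subseteq\mathrm{cl}(C^S\setminus B^S)$ for all $S$, i.e.\ $\reqvec\in\mathrm{cl}(\concaveHull)$, after the routine check that a point of $C^S$ lying on $\partial B^S$ is approached from within $C^S\setminus B^S$.

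The real work is the forward inclusion $\mathrm{int}(\concaveHull)\subseteq R_{\textnormal{IB}}$, and this is where I expect the main obstacle. Fix $\reqvec\in\mathrm{int}(\concaveHull)$, so $\reqvec\in\mathrm{int}(C)$ and $\reqvec^S\notin B^S$ with slack for every $S$; I must produce a \emph{single} $\ribvec\succ\mathbf{0}$ solving the whole linear system above. The plan is a theorem of the alternative (Gordan/Stiemke type): were no such $\ribvec$ to exist, there would be multipliers $\lambda_{S,\mathbf d}\geq 0$, not all zero, with $\sum_{S,\mathbf d}\lambda_{S,\mathbf d}\big(\reqscalar_i-p_i(\mathbf d)\big)\mathbf{1}[i\in S]\geq 0$ for every $i$. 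Setting $S_0=\{i:\sum_{S\ni i,\mathbf d}\lambda_{S,\mathbf d}>0\}$ (every active $S$ then satisfies $S\subseteq S_0$), the aim is to convert this certificate into a point of $\mathrm{Conv}(P^{S_0})$ dominated by $\reqvec^{S_0}$, i.e.\ $\reqvec^{S_0}\in B^{S_0}$, contradicting $\reqvec\in\mathrm{int}(\concaveHull)$. The difficulty is exactly the aggregation: the multipliers mix decisions drawn from different subsets $S\subsetneq S_0$, and lifting each $\mathbf d\in D(S)$ to a decision in $D(S_0)$ via the rearrangement lemma controls the coordinates in $S$ but injects uncontrolled payoff contributions in the coordinates of $S_0\setminus S$. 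I would tame this either by inducting on $|\fullUserSet|$ — peeling off one user and applying the statement to the subsystems indexed by the proper subsets, which already appear inside the definition of $\concaveHull$ — or by a single simultaneous rearrangement of $S_0$ that bounds the spurious coordinates using monotonicity together with the interior slack. Closing this aggregation step cleanly, along with the boundary bookkeeping needed to pass between $\mathrm{int}$, $\mathrm{cl}$, and the strict/non-strict separations, is the crux of the argument.
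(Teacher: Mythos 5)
Your backward inclusion $R_{\textnormal{IB}} \subseteq \textnormal{cl}(\concaveHull)$ is sound and essentially coincides with the paper's argument: the contradiction obtained by pairing the witness $\ribvec$ with a point of $\text{Conv}(P^S)$ strictly dominated by $\reqvec^S$, together with the promotion/rearrangement observation (the paper's decision $m(\mathbf{d},S)$, used to prove that monotonicity gives $\reqvec\in C \Rightarrow \reqvec^S\in C^S$ for all $S$), is exactly Part I of the paper's proof. Your preliminary reductions (the down-set description of $B^S$, the separation characterization of $\reqvec^S\notin B^S$) are also correct.

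The genuine gap is the forward inclusion $\textnormal{int}(\concaveHull)\subseteq R_{\textnormal{IB}}$, which is the substantive half of the theorem, and your proposal does not prove it: you set up a theorem-of-the-alternative and then explicitly leave the aggregation of the multipliers $\lambda_{S,\mathbf{d}}$ as ``the crux.'' The obstacle you name is real and is precisely where the proof must use monotonicity in a quantitative way: the dual certificate mixes decisions $\mathbf{d}\in D(S)$ for many different $S\subsetneq S_0$, and promoting each to $D(S_0)$ only gives one-sided control of the $S$-coordinates while saying nothing about the coordinates in $S_0\setminus S$, so the resulting convex combination need not be dominated by $\reqvec^{S_0}$. (A secondary issue: the Motzkin-type alternative for ``$\exists\,\ribvec\succ\mathbf{0}$ with $A\ribvec\succeq\mathbf{0}$'' requires a certificate with $A^{T}\boldsymbol{\lambda}\preceq\mathbf{0}$ and $A^{T}\boldsymbol{\lambda}\neq\mathbf{0}$, not merely $\boldsymbol{\lambda}\neq\mathbf{0}$; with $A^{T}\boldsymbol{\lambda}=\mathbf{0}$ no contradiction with interiority is available.) The paper closes this step by an entirely different, constructive route: for each $S$ it builds a specific normal $\boldsymbol{\beta}^S\succeq\mathbf{0}$ as a supporting hyperplane of $B^S$ at the point $\mathbf{v}^S$ where the segment from $\reqvec^S$ to a dominating point of $\text{Conv}(P^S)$ meets $\textnormal{bd}^S(B^S)$, and proves the superadditivity inequality $\min_{\mathbf{d}\in D(S)}\langle\boldsymbol{\beta}^S,\mathbf{p}^S(\mathbf{d})\rangle \le \min_{\mathbf{d}\in D(S)}\langle\boldsymbol{\beta}^{S\setminus S'},\mathbf{p}^{S\setminus S'}(\mathbf{d})\rangle + \min_{\mathbf{d}\in D(S')}\langle\boldsymbol{\beta}^{S'},\mathbf{p}^{S'}(\mathbf{d})\rangle$ via a composite decision $\mathbf{d}_3$ that concatenates the minimizers for $S'$ and $S\setminus S'$ --- this is exactly the point where monotonicity controls the ``spurious'' coordinates you worry about. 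That inequality guarantees every restriction $\boldsymbol{\beta}^{S'}$ with $\boldsymbol{\beta}^{S'}\neq\mathbf{0}$ separates strictly, and an induction on $|S|$ then repairs the zero coordinates by adding $\delta\boldsymbol{\gamma}^{S_2}$ for small $\delta>0$. Until you supply an analogue of this nested, strictly separating family (or actually carry out one of the two repairs you only sketch), the forward inclusion remains unproven.
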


\infocomStart
See the extended version of this paper \cite{EXT2} for this somewhat technical proof. 
\commentEnd\fi
\extendedStart
\noindent See Appendix \ref{appendix_pf_thm_visualize_R_IB} for this somewhat intricate argument. 
\commentEnd\fi

\subsection{Sufficient Condition for $\mathbf{w}$-LDF's Optimality}

By Theorem \ref{thm_R_IB} and Theorem \ref{thm_visualize_R_IB}, we immediately get
\begin{align}
\label{align_A_vs_R_LDF}
\text{int}(\concaveHull) \subseteq \text{int}(R_\text{IB}) \subseteq \feasibilityRegion_{\mathbf{w}\text{-LDF}}. 
\end{align}
Since $\concaveHull$ is obtained by removing $B^S \cap C^S$ from $C^S$ for each $S$, if what is removed is nothing more than a boundary, the difference between $\concaveHull$ and $C$ is at most a boundary and thus $\mathbf{w}$-LDF policies are feasibility optimal. It is easy to see this happens when vectors in $P^S$ lie on a hyperplane for each subset of users $S$. This can be formalized as follows. 

\begin{definition}
\label{defn_subset_payoff_equivalence}
The system with expected payoff vectors $P = \{\mathbf{p}(\mathbf{d}) | \mathbf{d} \in D \}$ is said to satisfy {\bf subset payoff equivalence} if for each subset of users $S$ the vectors in $P^S = \{\mathbf{p}^S(\mathbf{d}) | \mathbf{d} \in D(S)\}$ lie on a hyperplane, i.e., there exists a nonzero $\boldsymbol{\alpha}^S \succeq \mathbf{0}$ such that for all $\mathbf{d}_1, \mathbf{d}_2 \in D(S)$, 
$$
\langle \boldsymbol{\alpha}^S, \mathbf{p}^S(\mathbf{d}_1) \rangle = \langle \boldsymbol{\alpha}^S, \mathbf{p}^S(\mathbf{d}_2) \rangle. 
$$
\end{definition}

\begin{theorem}
\label{thm_sufficient_condition_for_LDF_optimality}
If the system satisfies monotonicity in payoffs and subset payoff equivalence, then 
$$
\textnormal{int}(C) \subseteq \feasibilityRegion_{\mathbf{w}\textnormal{-LDF}} \subseteq \textnormal{cl}(C), 
$$
and therefore, the $\mathbf{w}$-LDF policies are feasibility optimal. 
\end{theorem}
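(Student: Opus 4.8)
The plan is to split the claim into the upper inclusion $\feasibilityRegion_{\mathbf{w}\text{-LDF}}\subseteq\text{cl}(C)$ and the lower inclusion $\text{int}(C)\subseteq\feasibilityRegion_{\mathbf{w}\text{-LDF}}$. The upper inclusion needs no new work: any $\mathbf{w}$-LDF policy is a user prioritization policy, so $\feasibilityRegion_{\mathbf{w}\text{-LDF}}\subseteq\feasibilityRegion$, and Lemma \ref{lemma_R_in_close_C} already gives $\feasibilityRegion\subseteq\text{cl}(C)$. The content of the theorem is therefore the lower inclusion, and by the chain (\ref{align_A_vs_R_LDF}), namely $\text{int}(\concaveHull)\subseteq\feasibilityRegion_{\mathbf{w}\text{-LDF}}$, it suffices to prove $\text{int}(C)\subseteq\text{int}(\concaveHull)$. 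Since $\text{int}(C)$ is open, this reduces to the plain set inclusion $\text{int}(C)\subseteq\concaveHull$, because an open set contained in $\concaveHull$ automatically lies in $\text{int}(\concaveHull)$.

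To prove $\text{int}(C)\subseteq\concaveHull$ I would first record the easy reverse direction $\concaveHull\subseteq C$: taking $S=\fullUserSet$ in Definition \ref{defn_concave_hull} gives $C^{\fullUserSet}=C$ (since $D(\fullUserSet)=D$), so $\reqvec\in\concaveHull$ forces $\reqvec=\reqvec^{\fullUserSet}\in C^{\fullUserSet}=C$. Next I would use monotonicity in payoffs to show $\reqvec\in C\Rightarrow\reqvec^S\in C^S$ for every $S$. The key reordering lemma is that for any $\mathbf{d}\in D$ one can move the users of $S$ to the front while preserving their relative order to obtain $\mathbf{d}'\in D(S)$ with $S_i(\mathbf{d}')\subseteq S_i(\mathbf{d})$ for each $i\in S$; monotonicity then yields $\mathbf{p}^S(\mathbf{d}')\succeq\mathbf{p}^S(\mathbf{d})$. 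Hence every projected vertex $\mathbf{p}^S(\mathbf{d})$ is dominated by a point of $\text{Conv}(P^S)$, so the $S$-projection of any point of $\text{Conv}(P)$ lies in $C^S$; writing $\reqvec\preceq\mathbf{x}$ with $\mathbf{x}\in\text{Conv}(P)$ gives $\reqvec^S\preceq\mathbf{x}^S\in C^S$, i.e. $\reqvec^S\in C^S$.

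With $\reqvec^S\in C^S$ in hand for all $S$, proving $\text{int}(C)\subseteq\concaveHull$ reduces to showing that no $\reqvec\in\text{int}(C)$ has $\reqvec^S\in B^S$ for any $S$. Here subset payoff equivalence (Definition \ref{defn_subset_payoff_equivalence}) enters: its normal $\boldsymbol{\alpha}^S\succeq\mathbf{0}$ (nonzero, and without loss supported on $S$) makes $\text{Conv}(P^S)$ lie on a hyperplane $\langle\boldsymbol{\alpha}^S,\cdot\rangle=c_S$, and because $\boldsymbol{\alpha}^S\succeq\mathbf{0}$ every point of $C^S$ satisfies $\langle\boldsymbol{\alpha}^S,\cdot\rangle\le c_S$ while every point of $B^S$ satisfies $\langle\boldsymbol{\alpha}^S,\cdot\rangle\ge c_S$, so $B^S\cap C^S$ is confined to that hyperplane. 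If some $\reqvec\in\text{int}(C)$ had $\reqvec^S\in B^S\cap C^S$, then $\langle\boldsymbol{\alpha}^S,\reqvec^S\rangle=c_S$; but perturbing $\reqvec\mapsto\reqvec+\epsilon\mathbf{e}_i$ for $i\in S$ keeps us in $\text{int}(C)\subseteq C$, hence in $C^S$ by the previous step, forcing $c_S+\epsilon\alpha_i^S\le c_S$ and thus $\alpha_i^S\le 0$ for every $i\in S$\textemdash contradicting that $\boldsymbol{\alpha}^S$ is nonnegative and nonzero on $S$. This contradiction yields $\text{int}(C)\subseteq\concaveHull$ and closes the lower inclusion.

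I expect the main obstacle to be the careful bookkeeping around the restricted decision sets $D(S)$: since $C^S$ and $B^S$ are built only from $P^S$ (decisions in $D(S)$) rather than from projections of all of $C$ and $B$, the two monotonicity-driven facts\textemdash$\reqvec\in C\Rightarrow\reqvec^S\in C^S$ and ``an upward perturbation stays in $C^S$''\textemdash rely on the reordering lemma and cannot be read off directly from the definitions of $C$ and $B$. The subset-payoff-equivalence perturbation argument showing that the removed slab $B^S\cap C^S$ meets only the boundary of $C$, never its interior, is the crux; the remaining steps are routine.
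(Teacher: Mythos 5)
Your proposal is correct and follows essentially the same route as the paper: the upper inclusion from Lemma \ref{lemma_R_in_close_C}, the reduction of the lower inclusion via (\ref{align_A_vs_R_LDF}) to showing $\text{int}(C)\subseteq\concaveHull$, the projection fact $\reqvec\in C\Rightarrow\reqvec^S\in C^S$ proved by the same reordering-under-monotonicity argument (the paper's Lemma \ref{lemma_q_S_in_C_S}), and the hyperplane from subset payoff equivalence to exclude $\reqvec^S\in B^S$. The only cosmetic difference is the final contradiction: the paper exhibits two points $\mathbf{x}^S\succ_S\mathbf{y}^S$ of $\text{Conv}(P^S)$ lying on the same nonnegatively-normal hyperplane, whereas you perturb $\reqvec$ coordinatewise to force $\alpha_i^S\leq 0$; these are equivalent.
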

\extendedStart
\noindent Please refer to Appendix \ref{appendix_pf_thm_sufficient_condition_for_LDF_optimality} for detailed proof. 
\commentEnd\fi

The conditions for this theorem are akin but not equivalent to the conditions introduced in \cite{DiW06} for the generalized switch model. 
Specifically, we require the system to satisfy monotonicity in payoffs and subset payoff equivalence. The work in \cite{DiW06} requires local pooling in the generalized switch model. 
In the model in \cite{DiW06}, given a priority decision $\mathbf{d} = (d_1, d_2, \cdots, d_n)$ where $d_i$ is the index of the queue with the the $i^\text{th}$ highest priority, the queue service rate vector can be denoted by $\mathbf{m}(\mathbf{d})$, where $m_{d_i}(\mathbf{d})$ represents the units of work that can be removed from queue $d_i$ in one time slot under priority decision $\mathbf{d}$. $\mathbf{m}(\mathbf{d})$ is akin to $\mathbf{p}(\mathbf{d})$ in our context. However, the generalized switch model in \cite{DiW06} implies properties on the service rate vectors $\mathbf{m}(\mathbf{d})$. For example, it implies that for all $\mathbf{d} = (d_1, d_2, \cdots, d_n)$, we have $m_{d_1}(\mathbf{d}) \geq m_{d_1}(\mathbf{d}^\prime)$ for all $\mathbf{d}^\prime$, and $m_{d_2}(\mathbf{d}) \geq m_{d_2}(\mathbf{d}^\prime)$ for all $\mathbf{d}^\prime$ satisfying $m_{d_1}(\mathbf{d}) = m_{d_1}(\mathbf{d}^\prime)$, etc. These implicit requirements do not necessarily hold in systems which satisfy the conditions in Theorem \ref{thm_sufficient_condition_for_LDF_optimality}. 

If the system has only two users, then clearly subset payoff equivalence is satisfied since the two expected payoff vectors are always on a line. Therefore, we get the following corollary. 
\begin{corollary}
\label{corollary_two_user_optimality}
If the system has two users and satisfies monotonicity in payoffs, then $\mathbf{w}$-LDF policies are feasibility optimal. 
\end{corollary}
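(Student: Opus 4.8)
The plan is to obtain the corollary as an immediate specialization of Theorem \ref{thm_sufficient_condition_for_LDF_optimality}. Since monotonicity in payoffs is already assumed in the hypothesis, it suffices to check that every two-user system automatically satisfies subset payoff equivalence; once that is established, the conclusion $\textnormal{int}(C) \subseteq \feasibilityRegion_{\mathbf{w}\textnormal{-LDF}} \subseteq \textnormal{cl}(C)$ follows verbatim from that theorem.

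First I would enumerate the subsets $S \subseteq \{1,2\}$ and examine $P^S$ in each case. For the singletons $S = \{1\}$ and $S = \{2\}$, the set $D(S)$ contains exactly one decision, namely the unique ordering placing that user at top priority, so $P^S$ reduces to a single point, which trivially lies on a hyperplane; any nonzero $\ribvec^S \succeq \mathbf{0}$ serves as the required normal, and $S = \emptyset$ is vacuous. The only substantive case is the full set $S = \fullUserSet$, where $D(\fullUserSet) = D = \{\mathbf{d}_1, \mathbf{d}_2\}$ and $P^{\fullUserSet} = \{\mathbf{p}(\mathbf{d}_1), \mathbf{p}(\mathbf{d}_2)\}$ consists of two points in $\mathbb{R}^2$. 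Two points always lie on a common line, so the collinearity part of the hyperplane requirement in Definition \ref{defn_subset_payoff_equivalence} is met immediately.

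The one point deserving care is the sign constraint $\ribvec^{\fullUserSet} \succeq \mathbf{0}$ in Definition \ref{defn_subset_payoff_equivalence}: collinearity alone does not guarantee a \emph{nonnegative} normal. Here I would invoke monotonicity in payoffs explicitly. Writing $\mathbf{d}_1 = (1,2)$ and $\mathbf{d}_2 = (2,1)$, the inclusions $S_1(\mathbf{d}_1) = \emptyset \subseteq \{2\} = S_1(\mathbf{d}_2)$ and $S_2(\mathbf{d}_2) = \emptyset \subseteq \{1\} = S_2(\mathbf{d}_1)$ yield, via monotonicity, $p_1(\mathbf{d}_1) \geq p_1(\mathbf{d}_2)$ and $p_2(\mathbf{d}_2) \geq p_2(\mathbf{d}_1)$. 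Thus the difference $\mathbf{p}(\mathbf{d}_1) - \mathbf{p}(\mathbf{d}_2)$ has a nonnegative first entry and a nonpositive second entry, so the two payoff vectors trade off the two coordinates in opposite directions; taking $\ribvec^{\fullUserSet}$ proportional to $(p_2(\mathbf{d}_2) - p_2(\mathbf{d}_1),\, p_1(\mathbf{d}_1) - p_1(\mathbf{d}_2))$ gives a nonzero, nonnegative vector orthogonal to that difference (the degenerate case where the two payoff vectors coincide is handled by any nonzero nonnegative normal). With subset payoff equivalence verified for every $S$, applying Theorem \ref{thm_sufficient_condition_for_LDF_optimality} closes the argument. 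There is no genuine obstacle; the only subtlety is recognizing that it is monotonicity, rather than mere collinearity of two points, that supplies the nonnegative normal demanded by the definition.
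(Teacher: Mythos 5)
Your proof is correct and takes essentially the same route as the paper, which disposes of the corollary in one line by noting that the two expected payoff vectors always lie on a line and then invoking Theorem \ref{thm_sufficient_condition_for_LDF_optimality}. Your extra step of using monotonicity to produce a \emph{nonnegative} normal actually patches a detail the paper's justification glosses over: collinearity of two points alone does not guarantee the sign constraint $\boldsymbol{\alpha}^S \succeq \mathbf{0}$ demanded by Definition \ref{defn_subset_payoff_equivalence} (and used in the proof of Theorem \ref{thm_sufficient_condition_for_LDF_optimality}), so identifying monotonicity as the source of that nonnegativity is the right observation.
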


Note that in a two-user scenario, the property of monotonicity in payoffs simply means a user gets higher payoff under the higher priority than its payoff under the lower priority. 
In Section \ref{subsection_multiple_classes_of_users} we will consider systems serving two classes of exchangeable users and use this corollary to show the optimality of LDF-like policies. 

\subsection{Efficiency Ratio Analysis}

When the conditions in Theorem \ref{thm_sufficient_condition_for_LDF_optimality} do not hold, one can still study the efficiency ratio, see e.g., \cite{JLS07}, to evaluate the performance of $\mathbf{w}$-LDF policies. 
\begin{definition}
The {\bf efficiency ratio} of the $\mathbf{w}$-LDF policy is defined as
$$
\gamma_{\mathbf{w}\text{-LDF}} = \sup\{\gamma|\gamma \feasibilityRegion \subseteq \feasibilityRegion_{\mathbf{w}\text{-LDF}} \}. 
$$
\end{definition}
Clearly $\gamma_{\mathbf{w}\text{-LDF}}$ equals to $1$ if and only if the $\mathbf{w}$-LDF policy is feasibility optimal.


If a system does not satisfy subset payoff equivalence, i.e., for some subset of users $S$ the vectors in $P^S$ are not on the same hyperplane, we can characterize the ``heterogeneity'' of these vectors based on the following notion. 

\begin{definition}
Given a subset of users $S \subseteq \fullUserSet$, the {\bf subset payoff ratio} $\sigma_S$ for $S$ is defined as
\begin{align}
\label{align_subset_payoff_ratio}
\sigma_S = \max\limits_{{\boldsymbol{\alpha}^S \succeq \mathbf{0}} \atop {\boldsymbol{\alpha}^S \neq \mathbf{0}}} \frac
{\min\limits_{\mathbf{d} \in D(S)} \langle \boldsymbol{\alpha}^S, \mathbf{p}^S(\mathbf{d}) \rangle}
{\max\limits_{\mathbf{d} \in D(S)} \langle \boldsymbol{\alpha}^S, \mathbf{p}^S(\mathbf{d}) \rangle}. 
\end{align}
\end{definition}
The optimal $\boldsymbol{\alpha}^S$ is such that the projections of the vectors in $P^S$ on $\boldsymbol{\alpha}^S$ are as close to each other as possible. 

Clearly if the vectors in $P^S$ are on the same hyperplane, then $\sigma_S = 1$ and the optimal $\boldsymbol{\alpha}^S$ is the normal vector to the hyperplane. Intuitively, $\sigma_S$ characterizes the degree to which the vectors in $P^S$ deviate from being on the same hyperplane. 

This notion enables us to characterize the efficiency ratio of $\mathbf{w}$-LDF for a given system. 

\begin{theorem}
\label{thm_efficiency_ratio}
If the system satisfies monotonicity in payoffs, the efficiency ratio of the $\mathbf{w}$-LDF policy is such that
$$
\gamma_{\mathbf{w}\text{-LDF}} \geq \min\limits_{S \subseteq \fullUserSet} \sigma_S. 
$$
\end{theorem}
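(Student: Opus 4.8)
The plan is to show $\gamma_{\mathbf{w}\text{-LDF}}\geq\gamma^\star$, where $\gamma^\star:=\min_{S\subseteq\fullUserSet}\sigma_S$, by proving that $\gamma\feasibilityRegion\subseteq\feasibilityRegion_{\mathbf{w}\text{-LDF}}$ for every $\gamma<\gamma^\star$. Since $\feasibilityRegion\subseteq\textnormal{cl}(C)$ (Lemma \ref{lemma_R_in_close_C}), and since $\textnormal{int}(\concaveHull)$ is an open set contained in $R_{\textnormal{IB}}$ by Theorem \ref{thm_visualize_R_IB} and hence in $\textnormal{int}(R_{\textnormal{IB}})\subseteq\feasibilityRegion_{\mathbf{w}\text{-LDF}}$ by Theorem \ref{thm_R_IB}, it suffices to prove $\gamma\reqvec\in\textnormal{int}(\concaveHull)$ for each $\reqvec\in\textnormal{cl}(C)$. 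Because the assumption that every user has some decision with positive payoff places $\mathbf 0$ in $\textnormal{int}(C)$, the convexity of $C$ lets me scale any $\reqvec\in\textnormal{cl}(C)$ slightly toward the origin into $C$; absorbing that factor, I may assume $\reqvec\in C$ and $\gamma<\gamma^\star$, and I must show $\gamma\reqvec^S\in C^S\setminus B^S$ for every subset $S$.

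Fix $S$ and let $\ribvec^S\succeq\mathbf 0$ be a maximizer in the definition (\ref{align_subset_payoff_ratio}) of $\sigma_S$, writing $\ell_S=\min_{\mathbf d\in D(S)}\langle\ribvec^S,\mathbf p^S(\mathbf d)\rangle$ and $u_S=\max_{\mathbf d\in D(S)}\langle\ribvec^S,\mathbf p^S(\mathbf d)\rangle$, so $\sigma_S=\ell_S/u_S$. The core estimate is that $\langle\ribvec^S,\reqvec^S\rangle\leq u_S$ for every $\reqvec\in C$. Indeed $\reqvec\preceq\mathbf x$ for some $\mathbf x=\sum_{\mathbf d}\lambda_{\mathbf d}\mathbf p(\mathbf d)\in\textnormal{Conv}(P)$, and monotonicity in payoffs supplies, for each decision $\mathbf d$, a decision $\mathbf d'\in D(S)$ with $p_i(\mathbf d')\geq p_i(\mathbf d)$ for all $i\in S$: promote the users of $S$ to the top $|S|$ priorities while preserving their relative order, so that $S_i(\mathbf d')=S_i(\mathbf d)\cap S\subseteq S_i(\mathbf d)$. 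Since $\ribvec^S\succeq\mathbf 0$ is supported on $S$, this yields $\langle\ribvec^S,\reqvec^S\rangle\leq\langle\ribvec^S,\mathbf x\rangle=\sum_{\mathbf d}\lambda_{\mathbf d}\sum_{i\in S}\alpha_i^S p_i(\mathbf d)\leq\sum_{\mathbf d}\lambda_{\mathbf d}u_S=u_S$.

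Because $\gamma<\gamma^\star\leq\sigma_S$ and $u_S>0$, it follows that $\langle\ribvec^S,\gamma\reqvec^S\rangle\leq\gamma u_S<\sigma_S u_S=\ell_S$. Since any $\mathbf y^S\in B^S$ dominates a point of $\textnormal{Conv}(P^S)$ and hence satisfies $\langle\ribvec^S,\mathbf y^S\rangle\geq\ell_S$, this strict inequality forces $\gamma\reqvec^S\notin B^S$. For the other half, the same promoted decisions give a point $\mathbf x'=\sum_{\mathbf d}\lambda_{\mathbf d}\mathbf p^S(\mathbf d')\in\textnormal{Conv}(P^S)$ with $x_i'\geq x_i\geq q_i\geq\gamma q_i$ for $i\in S$, so $\gamma\reqvec^S\preceq\mathbf x'$ and $\gamma\reqvec^S\in C^S$. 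Thus $\gamma\reqvec^S\in C^S\setminus B^S$ for every $S$, i.e. $\gamma\reqvec\in\concaveHull$, and the strictness of the separation from each $B^S$ (together with the strict domination available for $\gamma<1$) places $\gamma\reqvec$ in $\textnormal{int}(\concaveHull)$, completing the chain into $\feasibilityRegion_{\mathbf{w}\text{-LDF}}$.

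I expect the monotonicity ``promotion'' step to be the main obstacle, since it is what couples the two halves of the argument: it is precisely the use of monotonicity in payoffs that lets me replace an arbitrary $\mathbf d\in D$ by a coordinatewise-better decision in $D(S)$ on the coordinates of $S$, and this single device simultaneously delivers the upper bound $\langle\ribvec^S,\reqvec^S\rangle\leq u_S$ that drives the exclusion from $B^S$ and the dominating point that certifies membership in $C^S$. The remaining items — verifying $\mathbf 0\in\textnormal{int}(C)$, the scaling reduction from $\textnormal{cl}(C)$ to $C$, the positivity $u_S>0$, and the passage to $\textnormal{int}(\concaveHull)$ when $\reqvec$ has zero coordinates (handled by restricting to the subsystem of users with positive requirement) — is routine topological bookkeeping.
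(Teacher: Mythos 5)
Your proposal is correct and follows essentially the same route as the paper: both arguments reduce the claim to showing that $\gamma C$ sits inside $\concaveHull$ (up to boundaries) via the chain $\textnormal{int}(\concaveHull)\subseteq R_{\textnormal{IB}}\subseteq \feasibilityRegion_{\mathbf{w}\text{-LDF}}$, and both rest on the same two ingredients\textemdash the monotonicity ``promotion'' of an arbitrary decision into $D(S)$ (the paper's $m(\mathbf{d},S)$ device behind Lemma \ref{lemma_q_S_in_C_S}) and the inequality comparing $\min_{\mathbf{d}\in D(S)}\langle\ribvec^S,\mathbf{p}^S(\mathbf{d})\rangle$ with $\max_{\mathbf{d}\in D(S)}\langle\ribvec^S,\mathbf{p}^S(\mathbf{d})\rangle$. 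The only difference is organizational: you verify membership subset-by-subset at a subcritical scaling $\gamma<\min_S\sigma_S$ using the maximizing $\ribvec^S$ as a separating functional for $B^S$, whereas the paper scales up to the critical factor $\lambda(\reqvec,\concaveHull)$, identifies a binding subset on $\textnormal{bd}^S(B^S)$, and derives the same bound there\textemdash a cosmetic rather than substantive distinction.
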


\infocomStart
See the extended version of this paper \cite{EXT2} for the proof. 
\commentEnd\fi
\extendedStart
\noindent See Appendix \ref{appendix_pf_thm_efficiency_ratio} for the proof. 
\commentEnd\fi
Intuitively, the bottleneck of the efficiency ratio is the subset $S$ where $\sigma_S$ is the smallest. 

Note that by picking any $\boldsymbol{\alpha} \succ \mathbf{0}$, we can get lower bounds on $\sigma_S$ for all subsets $S\subseteq \fullUserSet$ by placing its projection $\boldsymbol{\alpha}^S$ into (\ref{align_subset_payoff_ratio}). Thus, any $\boldsymbol{\alpha}\succ \mathbf{0}$ enables us to construct a lower bound on $\gamma_{\mathbf{w}\text{-LDF}}$. A trivial option is $\boldsymbol{\alpha} = \mathbf{1}$, where for each subset $S$ the value of $\langle \mathbf{1}^S, \mathbf{p}^S(\mathbf{d}) \rangle$ represents the sum payoff of users in $S$ under decision $\mathbf{d}$. 
\dissertationStart
In the sequel we will consider specific resource and user models in SRT context and explore other options of $\boldsymbol{\alpha}^S$ to evaluate $\mathbf{w}$-LDF's efficiency. 
\commentEnd\fi

We have shown that the efficiency and optimality of the $\mathbf{w}$-LDF policies is related to $R_\text{IB}$. 
Understanding and analyzing the geometry of $R_\text{IB}$ can in principle enable us to provide feedback to the designers of priority-based resource allocation mechanisms regarding which specific priority decision or set of priority decisions are problematic and bottlenecks for the system so that the designers can focus on improving the resource allocation for these problematic decisions. 
For example, in the conceptual setting shown in Figure~{\ref{fig_eg_for_B_intersect_C}}, the priority decision corresponding to the lower left circle is the ``bottleneck'' of the system and should be targeted to make the dominant of the convex hull as small as possible. 
This is of particular interest for some practical systems where it is possible to get explicit knowledge of $P$ which reflect the underlying priority-based resource allocation, e.g., by collecting data over a long time.

A priority decision is problematic if the associated underlying resource allocation suffers from resource contention, blocking among users/applications, or even deadlocks on compute resources, etc. 
Based on feedback regarding the bottlenecks, the designer could improve the associated resource allocation schemes, e.g., by increasing the processing speed of the certain computing resources, spending more energy, reducing the contention, and/or resolving the blocking/deadlock, and thus, improve the efficiency of the overall system under the $\mathbf{w}$-LDF prioritization policies.  

\section{Examples for $\mathbf{w}$-LDF's Optimality}
Theorem \ref{thm_sufficient_condition_for_LDF_optimality} gives a sufficient condition for $\mathbf{w}$-LDF to be feasibility optimal. 
One example system that satisfies these conditions is the model considered in prior work \cite{HoK12} which, as mentioned in Section \ref{sec_introduction}, can be viewed as a single-resource geometric-workload model. 
In this section we consider more system settings and show how our results provide useful insights in practice. 
\dissertationStart
Theorem \ref{thm_sufficient_condition_for_LDF_optimality} gives us a sufficient condition for $\mathbf{w}$-LDF policies to be feasibility optimal. In this section we consider the soft real-time setting and discuss several examples that satisfy those conditions. 

Recall that in SRT setting users are periodically generating streams of tasks that need to complete before deadline and $\mathbf{p}(\mathbf{d})$ represents the expected number of tasks/sub-tasks completed on time, or the expected quality of task processing results per period under priority decision $\mathbf{d}$. To verify the conditions in Theorem \ref{thm_sufficient_condition_for_LDF_optimality}, we verify the subset payoff equivalence property. 
\commentEnd\fi

\subsection{Exchangeable Expected Payoffs}

We shall start by showing that for systems that are ``symmetric'', $\mathbf{w}$-LDF policies are feasibility optimal. 

\begin{definition}
A subset of users $S$ is said to have {\bf exchangeable expected payoffs} if, for all priority decisions $\mathbf{d} \in D$ and all $i, j\in S$, if we switch the priorities of user $i$ and $j$ and use $\mathbf{d}^\prime$ to represent the resulting new priority decision, then
$$
p_k(\mathbf{d}^\prime) = \left\{ 
   \begin{array}{l l}
     p_k(\mathbf{d}) & \quad \text{if $k \neq i,j$}	\\
     p_j(\mathbf{d}) & \quad \text{if $k = i$}	\\
     p_i(\mathbf{d}) & \quad \text{if $k = j$}. 	\\
   \end{array} \right.
$$
\end{definition}
In other words, exchanging the priorities of two users in $S$ will simply exchange their expected payoffs without impacting that of other users. 
This would be true if the priority-based resource allocation were symmetric for users in $S$ and the users generate tasks with identically distributed or exchangeable workloads. 

If the users in $\fullUserSet$ have exchangeable expected payoffs, we can verify the property of subset payoff equivalence by picking $\boldsymbol{\alpha}^S = \mathbf{1}^S$ for each subset of users $S$.
Therefore, by Theorem \ref{thm_sufficient_condition_for_LDF_optimality} we get the following corollary. 

\begin{corollary}
\label{corollary_exchangeable expected_workloads}
If the set of users $\fullUserSet$ have exchangeable expected payoffs and the system satisfies monotonicity in payoffs, then the $\mathbf{w}$-LDF policies are feasibility optimal. 
\end{corollary}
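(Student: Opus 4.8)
The plan is to deduce this corollary directly from Theorem~\ref{thm_sufficient_condition_for_LDF_optimality}. That theorem already guarantees feasibility optimality of the $\mathbf{w}$-LDF policies whenever the system enjoys both monotonicity in payoffs and subset payoff equivalence. Since monotonicity in payoffs is assumed in the hypothesis, it suffices to verify subset payoff equivalence. Following the hint in the text, I would exhibit, for every nonempty subset $S \subseteq \fullUserSet$, the explicit witness $\boldsymbol{\alpha}^S = \mathbf{1}^S$, which is nonzero and satisfies $\mathbf{1}^S \succeq \mathbf{0}$. With this choice the inner product $\langle \mathbf{1}^S, \mathbf{p}^S(\mathbf{d})\rangle$ equals the aggregate expected payoff $\sum_{i\in S} p_i(\mathbf{d})$ of the users in $S$, so the whole claim reduces to showing that this aggregate is the same for every decision $\mathbf{d}\in D(S)$.

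The key step is an invariance argument by transpositions. Fix two decisions $\mathbf{d}_1,\mathbf{d}_2 \in D(S)$. By the definition of $D(S)$, both place the users of $S$ in the top $|S|$ priority slots and the users of $\fullUserSet\setminus S$ in the remaining slots; thus $\mathbf{d}_1$ and $\mathbf{d}_2$ induce the same partition of $\fullUserSet$ into a ``top block'' $S$ and a ``bottom block'' $\fullUserSet\setminus S$, and differ only by a permutation internal to each block. I would therefore connect $\mathbf{d}_1$ to $\mathbf{d}_2$ by a finite sequence of priority swaps, each of which exchanges two users lying in the same block, so that every intermediate decision again belongs to $D(S)$. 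By the exchangeable-expected-payoffs hypothesis applied to the full set $\fullUserSet$, each such swap merely interchanges the expected payoffs of the two swapped users and leaves all others untouched. A swap of two users inside $S$ thus only permutes summands within $\sum_{i\in S} p_i$ and leaves it unchanged, while a swap of two users inside $\fullUserSet\setminus S$ touches no summand of $\sum_{i\in S} p_i$ and again leaves it unchanged. Telescoping along the chain yields $\sum_{i\in S} p_i(\mathbf{d}_1) = \sum_{i\in S} p_i(\mathbf{d}_2)$, i.e., $\langle \mathbf{1}^S, \mathbf{p}^S(\mathbf{d}_1)\rangle = \langle \mathbf{1}^S, \mathbf{p}^S(\mathbf{d}_2)\rangle$.

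This shows the projected payoff vectors in $P^S$ all lie on the hyperplane with normal $\mathbf{1}^S$, establishing subset payoff equivalence for every $S$. Combined with the assumed monotonicity in payoffs, Theorem~\ref{thm_sufficient_condition_for_LDF_optimality} immediately gives $\textnormal{int}(C) \subseteq \feasibilityRegion_{\mathbf{w}\textnormal{-LDF}} \subseteq \textnormal{cl}(C)$, which is the asserted feasibility optimality. The one point I would be careful about is that the invariance relies on exchangeability of the \emph{entire} user set, not merely within $S$: it is precisely the swaps among the low-priority users of $\fullUserSet\setminus S$ that could in principle perturb the payoffs of users in $S$, and it is the global exchangeability hypothesis that rules this out. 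Beyond flagging this, the remaining content is the routine fact that two permutations inducing the same block partition are joined by transpositions internal to the blocks, so I do not anticipate any substantive obstacle.
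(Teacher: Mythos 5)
Your proposal is correct and follows essentially the same route as the paper: reduce to Theorem~\ref{thm_sufficient_condition_for_LDF_optimality}, verify subset payoff equivalence with the witness $\boldsymbol{\alpha}^S=\mathbf{1}^S$, and show invariance of $\sum_{i\in S}p_i(\mathbf{d})$ over $D(S)$ by converting $\mathbf{d}_1$ into $\mathbf{d}_2$ through a chain of pairwise priority swaps, each of which preserves the sum by exchangeability. Your added care about swaps within $\fullUserSet\setminus S$ requiring global exchangeability is a valid and slightly more explicit rendering of the same argument the paper gives.
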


\noindent See Appendix \ref{appendix_pf_corollary_exchangeable expected_workloads} for the proof. 

\subsection{Multiple Classes of Exchangeable Users and Hierarchical-LDF}
\label{subsection_multiple_classes_of_users}
In this subsection, we first consider a system supporting two classes of exchangeable users. 
Formally, a class of users is {\em exchangeable} if they have exchangeable expected payoffs and the same QoS requirement. 
The users in different classes may have distinct payoffs and QoS requirements. 
In some contexts it is of practical interest to first prioritize the classes and then prioritize users in each class, respectively. We refer to such schemes as using {\em class-based hierarchical prioritization}. 

In practice, depending on whether the priorities of classes can change dynamically, there are two types of class-based hierarchical prioritization: Type $1$ where the class priorities are fixed, and Type $2$ where one can dynamically prioritize classes of users, and then users within each class. 

The first type of hierarchical prioritization might correspond to a setting where the users/applications are separated into human-interactive/high-QoS and background-processing/low-QoS categories \cite{PCO}, and it is always desirable to first process high-QoS users. In this setting, the problem is reduced to a collection of independent user prioritization problems similar to the one considered in this paper. By Corollary \ref{corollary_exchangeable expected_workloads}, $\mathbf{w}$-LDF is feasibility optimal to prioritize users in each class. 

The second type of dynamic hierarchical prioritization might be of interest in systems where switching between processing different user classes involves overheads, and/or where it is inefficient to mix the processing of different user classes, probably because of resource contention or deadlocks. 

In this setting, we propose a class-based {\em hierarchical-LDF} policy that in each period works in two steps by (1) prioritizing classes by LDF based on the aggregate deficits, i.e., the sum of deficits for users in the same class, and (2) prioritizing users in each class according to LDF based on individual users' deficits. 
The framework of hierarchical-LDF is exhibited in Figure{~\ref{fig_hierarchical_LDF}}. 
Note that here LDF can be replaced by $\mathbf{w}$-LDF for any $\mathbf{w} \succ \mathbf{0}$ and the following result would hold. 

\begin{figure}[htp]
  \centering
  \includegraphics[width=0.55\textwidth]{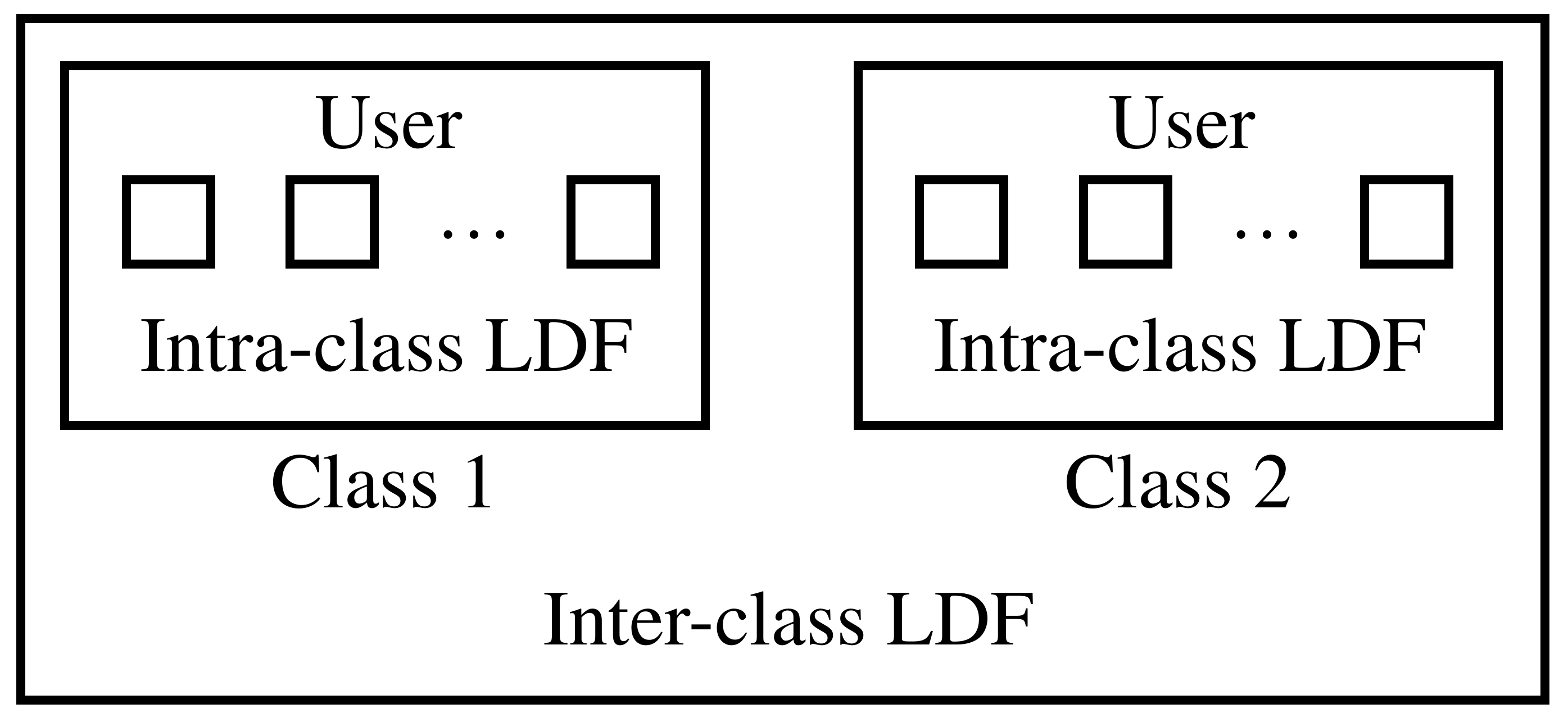}
  \caption{The framework for class-based hierarchical-LDF policy. }
  \label{fig_hierarchical_LDF}
\end{figure} 

\begin{theorem}
\label{thm_optimality_hierarchical_LDF}
In a system with two classes of exchangeable users, if the property of monotonicity in payoffs is satisfied, the hierarchical-LDF policy is feasibility optimal among all possible class-based hierarchical prioritization policies. 
\end{theorem}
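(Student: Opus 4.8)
My first step is to recast ``feasibility optimal among class-based hierarchical prioritization policies'' as an ordinary feasibility-optimality statement relative to a restricted decision set. Let $\mathcal{C}_1,\mathcal{C}_2$ be the two exchangeable classes, of sizes $n_1,n_2$ and common requirements $\reqscalar^{(1)},\reqscalar^{(2)}$, and let $D_H\subseteq D$ be the set of hierarchical priority decisions (all users of one class placed above all users of the other). Write $P_H=\{\mathbf{p}(\mathbf{d})\mid \mathbf{d}\in D_H\}$ and $C_H\equiv\{\reqvec\in\mathbb R^n_+\mid \exists\,\mathbf{x}\in\text{Conv}(P_H)\text{ s.t. }\reqvec\preceq\mathbf{x}\}$. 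Any class-based hierarchical policy picks decisions only in $D_H$, so repeating the long-run time-average argument behind Lemma~\ref{lemma_R_in_close_C} with $D$ replaced by $D_H$ shows every such policy can fulfill requirements only in $\textnormal{cl}(C_H)$. Thus it suffices to prove the achievability half, $\textnormal{int}(C_H)\subseteq\feasibilityRegion_{\text{hier-LDF}}$, which would give the matching two-sided inclusion.

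\textbf{Symmetry reduction to aggregate class quantities.} Since each class is exchangeable with a common requirement, the requirement vector is symmetric within each class and both $P_H$ and $\text{Conv}(P_H)$ are invariant under within-class permutations. Define the linear aggregation map $\pi(\mathbf{x})=\bigl(\sum_{i\in\mathcal C_1}x_i,\ \sum_{i\in\mathcal C_2}x_i\bigr)$. Because $\pi$ commutes with the convex hull and I can symmetrize any dominating point by averaging over within-class permutations, a symmetric $\reqvec$ lies in $\textnormal{int}(C_H)$ iff the aggregate requirement $(n_1\reqscalar^{(1)},n_2\reqscalar^{(2)})$ lies in the interior of the two-dimensional aggregate region $\bar C\equiv\{(\bar q_1,\bar q_2)\preceq\pi(\mathbf{x}),\ \mathbf{x}\in\text{Conv}(P_H)\}$, which is generated by just two aggregate payoff vectors, one per class ordering (exchangeability makes the per-class sum invariant to internal ordering). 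This collapses the problem to an effective two-user system at the class level.

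\textbf{Class-level monotonicity and invoking the two-user and exchangeable corollaries.} Next I verify that this two-class aggregate system satisfies monotonicity in payoffs, so that Corollary~\ref{corollary_two_user_optimality} applies. Fixing a common within-class ordering for the two class orderings and comparing termwise, each user $i\in\mathcal C_1$ has $S_i(\mathbf{d}^{\text{top}})\subseteq S_i(\mathbf{d}^{\text{bot}})$ when $\mathcal C_1$ moves from top to bottom; the hypothesis $p_i(\mathbf{d}^{\text{top}})\ge p_i(\mathbf{d}^{\text{bot}})$ then sums to the claim that a class's aggregate payoff is larger when it holds the higher class-priority (symmetrically for $\mathcal C_2$). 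Hence the class-level LDF step, which is exactly $\mathbf{1}$-LDF on these two aggregated ``users,'' is feasibility optimal for $\bar C$ by Corollary~\ref{corollary_two_user_optimality}. Simultaneously, within each class the users are exchangeable with equal requirements, so by Corollary~\ref{corollary_exchangeable expected_workloads} the within-class LDF step is feasibility optimal for allocating each class's aggregate payoff equally across its members.

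\textbf{Gluing the two layers — the main obstacle.} The remaining and hardest step is to combine these two optimality facts into positive recurrence of the full deficit chain $\{\mathbf{X}(t)\}$ for every symmetric $\reqvec\in\textnormal{int}(C_H)$. The two corollaries concern separate subsystems, so I would build a combined Lyapunov function of the form $L(\mathbf{X})=\sum_{c\in\{1,2\}}\bigl(\sum_{i\in\mathcal C_c}X_i\bigr)^2+\lambda\sum_{c}\sum_{i,j\in\mathcal C_c}(X_i-X_j)^2$ and establish negative drift outside a compact set. The first term inherits negative drift from class-level optimality (aggregate deficits are pushed toward $\bar C$), and the variance term inherits negative drift because within-class LDF always serves the largest-deficit user of each class; the crux is bounding the cross terms and choosing $\lambda$ so the two drifts do not cancel. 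A secondary technical wrinkle is that the per-user truncation $[\,\cdot\,]^+$ means $\sum_{i\in\mathcal C_c}X_i$ is not itself a clean truncated recursion, so the aggregate does not exactly match a two-user deficit; I expect to absorb this slack using the footnote observation that removing truncation does not change the results, together with the fact that the variance term keeps individual deficits close to their class average. Showing these estimates assemble into a uniform negative drift is where the real work lies.
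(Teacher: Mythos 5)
Your first three paragraphs are, in substance, the paper's entire proof: the paper disposes of Theorem~\ref{thm_optimality_hierarchical_LDF} in two sentences by citing Corollary~\ref{corollary_two_user_optimality} for the class-level prioritization (the two classes viewed as two ``super users'' with aggregate payoffs and requirements, class-level monotonicity following by summing the user-level inequalities exactly as you argue) and Corollary~\ref{corollary_exchangeable expected_workloads} for the within-class prioritization; your outer-bound reduction to the restricted decision set $D_H$ and the within-class symmetrization are consistent with how the paper implicitly interprets ``optimal among class-based hierarchical policies.'' Where you go beyond the paper is your fourth paragraph: you correctly observe that the two corollaries concern two different Markov chains (aggregate class deficits versus individual deficits) and that deducing positive recurrence of the full chain $\{\mathbf{X}(t)\}_{t\geq 1}$ under hierarchical-LDF --- which is \emph{not} itself a $\mathbf{w}$-LDF policy, so Theorem~\ref{thm_R_IB} cannot be invoked directly --- requires gluing the two optimality facts together, e.g.\ via your composite Lyapunov function, and you leave that drift estimate (including the truncation mismatch between $\sum_{i}[X_i(t)+\reqscalar_i-V_i]^+$ and the truncated aggregate recursion) unfinished. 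The paper does not perform this step either; it simply asserts the combination ``follows directly'' from the two corollaries. So relative to the paper's own level of rigor your proposal is complete and follows the same route, and the one genuinely open item in your write-up, the uniform negative drift of the combined Lyapunov function, is precisely the item the paper also omits; carrying it out would make the argument strictly more rigorous than the published one rather than merely matching it.
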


The proof follows directly from Corollary \ref{corollary_two_user_optimality} and \ref{corollary_exchangeable expected_workloads}. By Corollary \ref{corollary_two_user_optimality} we know the class-based LDF policy is optimal to set priorities amongst the two classes and by Corollary \ref{corollary_exchangeable expected_workloads} we know the LDF-based user prioritization is also optimal for the exchangeable users in each class. 

More generally, for systems serving multiple (more than two) classes of exchangeable users, 
one can view each class as a ``super user'', and define the aggregate payoff and QoS requirement for a super user to be the sum of payoffs and QoS requirements for users in that class, respectively. 
Then the dynamic prioritization of super users can be viewed as the problem considered in this paper. Therefore, by Theorem \ref{thm_sufficient_condition_for_LDF_optimality}, if the system with the super users' expected aggregate payoffs satisfies monotonicity in payoffs and subset payoff equivalence, the LDF policy is a feasibility optimal choice for prioritizing super users and thus, the hierarchical-LDF policy is feasibility optimal among all class-based hierarchical prioritization policies. 
Indeed, all the results we have introduced, e.g., Theorem \ref{theorem_MW_feasibility_optimal}-\ref{thm_efficiency_ratio}, still hold for the prioritization of these super users. 

\dissertationStart
\subsection{Single-Resource Memoryless-Workload (SRMW) Model}

In the exchangeable expected payoff scenario, we verified subset payoff equivalence by picking $\mathbf{1}^S$ as the normal vector of the hyperplane for each subset of users $S\subseteq \fullUserSet$, i.e., the sum service rate $\langle \mathbf{1}^S, \mathbf{p}^S(\mathbf{d}) \rangle$ is a constant for decisions $\mathbf{d} \in D(S)$. In this subsection we discuss an example that still satisfies subset payoff equivalence but where the sum service rate is not necessarily a constant. 

We consider a special case of the model described in \ref{subsection_SRT_model}. Suppose the system has only one resource, i.e., only one task is processed at a time. Each user generates exactly one task at each period with release time and deadline being the beginning and end of the period, respectively. Again tasks not completed by the deadline are dropped.
We let $\mathbf{p}(\mathbf{d})$ be the expected number of timely completed tasks per period under priority decision $\mathbf{d}$ and suppose each user $i$ requires $q_i$ as the long-term average number of completed tasks per period. 
In this setting the workload of a task refers to the required core time to fully complete the task. Note that a task's workload can be large in which case it may not complete on time. 
Suppose in each period, the workload of the task from user $i$ is a memoryless random variable with parameter $c_i$. (By memoryless random variable we refer to the exponential random variable with $c_i$ being the rate parameter in the continuous-time scenario and the geometric random variable with $c_i$ being the success probability in the discrete-time scenario. ) We assume the workloads are independent across users and periods. At each period given priority decision $\mathbf{d}$, the system processes the tasks sequentially from highest to lowest priority. Such a model is called {\em single-resource memoryless-workload model with parameters $\{c_1, c_2, \cdots, c_n\}$}, or SRMW for short. 

As we have argued in Section \ref{sec_introduction}, the model in prior work \cite{HoK12} can be viewed as a single-resource geometric-workload model, and thus is the discrete version of the SRMW model. In \cite{HoK12} they show that LDF is feasibility optimal, and here in SRMW model we generalize the optimality to the class of $\mathbf{w}$-LDF policies

\begin{corollary}
\label{corollary_SCMW}
In the SRMW model, the class of $\mathbf{w}$-LDF policies are feasibility optimal. 
\end{corollary}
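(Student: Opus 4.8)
The plan is to verify that the SRMW model satisfies the two hypotheses of Theorem~\ref{thm_sufficient_condition_for_LDF_optimality}, \emph{monotonicity in payoffs} and \emph{subset payoff equivalence}, after which feasibility optimality of every $\mathbf{w}$-LDF policy is immediate. Throughout I would normalize the period to length $L$ (an integer number of slots in the discrete case), write $W_i$ for the memoryless workload of user $i$'s single task, and exploit the fact that under a decision $\mathbf{d}$ the server processes tasks strictly in priority order. Hence user $i$'s task is completed on time exactly when $W_i$ together with the workloads of all higher-priority users fits in the period, i.e. $p_i(\mathbf{d}) = \Pr[\,\sum_{j\in S_i(\mathbf{d})} W_j + W_i \le L\,]$.

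First I would dispatch monotonicity in payoffs. If $S_i(\mathbf{d}_1)\subseteq S_i(\mathbf{d}_2)$, then, since workloads are nonnegative, $\sum_{j\in S_i(\mathbf{d}_1)}W_j \le \sum_{j\in S_i(\mathbf{d}_2)}W_j$ holds pointwise under the identity coupling of the $W_j$'s, so the event defining $p_i(\mathbf{d}_1)$ contains the event defining $p_i(\mathbf{d}_2)$ and therefore $p_i(\mathbf{d}_1)\ge p_i(\mathbf{d}_2)$. This is the easy half and needs nothing beyond nonnegativity.

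The substance is subset payoff equivalence, and the key step is to guess the correct normal vector and then establish a conservation identity through the memoryless property. I claim the choice $\boldsymbol{\alpha}^S$ with $\alpha^S_i = 1/c_i$ for $i\in S$ (and $0$ otherwise) works. Let $\tau_i$ denote the random amount of processing time the server devotes to user $i$'s task within the period. The engine of the argument is the identity $p_i(\mathbf{d}) = c_i\,\mathbb{E}[\tau_i]$: because the workload is memoryless, while user $i$'s task is actively processed its completion occurs at instantaneous rate $c_i$ (continuous case) or with per-slot success probability $c_i$ (discrete case), so the compensated completion indicator $C_i(t) - c_i A_i(t)$, where $A_i(t)$ is the processing time accrued on task $i$ up to $t$ and $C_i(t)$ indicates completion by $t$, is a mean-zero martingale; evaluating at the deadline gives $p_i(\mathbf{d}) = \mathbb{E}[C_i] = c_i\,\mathbb{E}[\tau_i]$. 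Consequently $\langle \boldsymbol{\alpha}^S, \mathbf{p}^S(\mathbf{d})\rangle = \sum_{i\in S}\tfrac{1}{c_i}p_i(\mathbf{d}) = \sum_{i\in S}\mathbb{E}[\tau_i] = \mathbb{E}[\min(T_S, L)]$, where $T_S = \sum_{i\in S}W_i$. For any $\mathbf{d}\in D(S)$ the users of $S$ hold the top $|S|$ priorities, so the server works on their tasks back-to-back from time $0$ until either all of them finish (at $T_S$) or the deadline is reached, making $\sum_{i\in S}\tau_i = \min(T_S,L)$ irrespective of the order within $S$; since the law of $T_S$ is order-independent, $\langle \boldsymbol{\alpha}^S, \mathbf{p}^S(\mathbf{d})\rangle$ is constant over $\mathbf{d}\in D(S)$. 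As $\boldsymbol{\alpha}^S\succeq\mathbf{0}$ is nonzero, subset payoff equivalence holds.

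Having verified both properties, Theorem~\ref{thm_sufficient_condition_for_LDF_optimality} yields $\textnormal{int}(C)\subseteq\feasibilityRegion_{\mathbf{w}\textnormal{-LDF}}\subseteq\textnormal{cl}(C)$, i.e. feasibility optimality, completing the proof. The one step demanding care is the martingale identity $p_i=c_i\,\mathbb{E}[\tau_i]$: one must argue that completion accrues at rate $c_i$ precisely during the intervals counted by $\tau_i$ and at no other time (neither after the deadline nor after the task is already done), which is exactly where memorylessness is indispensable. The same computation fails for general workload distributions, consistent with the earlier remark that the naive weight $\boldsymbol{\alpha}^S=\mathbf{1}^S$ need not certify equivalence here, since the expected sum payoff is generally not order-invariant.
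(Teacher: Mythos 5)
Your proof is correct and follows exactly the route the paper intends: verify the two hypotheses of Theorem~\ref{thm_sufficient_condition_for_LDF_optimality}, with monotonicity following from a pointwise coupling of the nonnegative workloads and subset payoff equivalence certified by the normal vector $\alpha^S_i = 1/c_i$ via the identity $p_i(\mathbf{d}) = c_i\,\mathrm{E}[\tau_i]$, so that $\langle \boldsymbol{\alpha}^S, \mathbf{p}^S(\mathbf{d})\rangle = \mathrm{E}[\min(\sum_{i\in S}W_i, L)]$ is order-invariant. This is precisely the non-uniform hyperplane normal the paper alludes to when it contrasts the SRMW model with the exchangeable-payoff case (where $\mathbf{1}^S$ suffices), and your martingale/Wald justification of the key identity is sound in both the exponential and geometric cases.
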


See the appendix for the proof. 
\commentEnd\fi

\dissertationStart
This subsection should also go to dissertation, probably with the following paragraph. 

This SRMW model is a generalized version of the system model in \cite{HoK12}. The work in \cite{HoK12} studies a discrete-time wireless model where $n$ users are competing to transmit a packet to the server at each period of length $\delta$ time slots. The deadlines of the packets are the end of the period. If user $i$ is scheduled at a time slot, it successfully transmits the packet with probability $c_i$ and will re-transmit if it fails until the period ends. Therefore, the workload, i.e., the time to successfully transmit a packet from user $i$ is a geometric random variable with success parameter $c_i$. In \cite{HoK12} they show that LDF is feasibility optimal.
\commentEnd\fi

\section{Some Practical Issues}
In practice, besides meeting minimum payoff requirements, users may be willing to pay for additional payoffs, e.g., better video quality in the video conferencing setting, albeit at possibly different prices. 
Given the requirements $\reqvec$ and the achieved average payoffs $\mathbf{p} = (p_1, p_2, \cdots, p_n)$, we call $p_i - \reqscalar_i$ the {\em excess payoff} for each user $i$. While using $\mathbf{w}$-LDF policies to fulfill users' payoff requirements, we also want to manage the allocation of excess payoffs across users, perhaps with the aim of maximizing the benefits to the system or users. 

However, the non-negative definition of deficit (\ref{align_deficit_in_general_payoff}) makes it hard to track excess payoffs. For example, consider a model with $2$ users and suppose the payoff is always $1$ for the high priority user and $0$ for the low priority user. 
Suppose the payoff requirement vector is $\reqvec = (0.1, 0.5)$. Since $1 > 0.1 + 0.5$, we know $\reqvec$ is feasible and the system can deliver $0.4$ excess payoff. 
Suppose we use the LDF policy, starting from $\mathbf{X}(0) = (0, 0)$ it is easy to verify\footnote{Since the payoffs are deterministic, we can verify this by evaluating the deficits for the first few periods and we will observe that the process $\{\mathbf{X}(t)\}_{t\geq 1}$ evolves in a periodic pattern. } that the system will switch giving high priority to these two users, and thus the achieved average payoff vector is $\mathbf{p} = (0.5, 0.5)$. Clearly User $1$ gets $0.4$ excess payoff while User $2$ gets nothing. 
This happens because $X_1(t)$ and $X_2(t)$ are frequently forced to $0$ from different negative values, which causes the ``unfairness'' between these two users. 

To solve this problem, we modify the deficit definition for each user $i$ and period $t+1$ as follows, 
\begin{align}
\label{align_deficit_possibly_negative}
X_i^\prime(t+1) = X_i^\prime(t) + \reqscalar_i - V_i(\mathbf{d}(t+1)), 
\end{align}
i.e., we allow $X_i^\prime(t)$ to be negative. 

Now for the simple example above, if we adopt LDF but based on the possibly negative deficits $\mathbf{X}^\prime(t) = (X_1^\prime(t), X_2^\prime(t), \cdots, X_n^\prime(t))$, we can get achieved average payoff vector $\mathbf{p} = (0.3, 0.7)$. We observe that the two users equally split the excess payoff. 

Intuitively, for each user $i$ the modified deficit $X_i^\prime(t)$ changes roughly linearly as $t$ increases with the slope being $\reqscalar_i - p_i$. Since $\mathbf{w}$-LDF policy aims to balance weighted deficit $w_iX_i^\prime(t)$, we know $w_i(p_i - \reqscalar_i)$ is roughly the same for all users.
We will verify this observation in the simulation section and based on this we can manage the excess payoffs across users by picking the appropriate weight vector $\mathbf{w}$. 

\dissertationStart
In this section, we will cover some additional practical issues associated with $\mathbf{w}$-LDF policies.  

Besides meeting minimum long-term payoff requirements, users in practice may also be willing to pay for additional payoffs, i.e., better QoS, albeit at possibly different prices. 
For example, in the video conferencing setting, users may be willing to pay for better video quality if possible. 
For users that require average payoffs $\reqvec$ but actually achieve average payoffs $\mathbf{p} = (p_1, p_2, \cdots, p_n)$, we call $p_i - \reqscalar_i$ the {\em excess payoff} for each user $i$. While using $\mathbf{w}$-LDF policies to fulfill users' payoff requirements, we also want to manage the allocation of excess payoffs across users, perhaps with the aim of maximizing the relative benefits to the system or users. 

However, the non-negative definition of deficit (\ref{align_deficit_in_general_payoff}) makes it hard to track excess payoffs. For example, consider a model with $2$ users and suppose the payoff is always $1$ for the high priority user and always $0$ for the low priority user. 
Suppose the payoff requirement vector is $\reqvec = (0.1, 0.5)$. Since $1 > 0.1 + 0.5$, we know this payoff requirement is feasible and the system can deliver $0.4$ excess payoff. 
Suppose we use the LDF policy, starting from $\mathbf{X}(0) = (0, 0)$ it is easy to verify\footnote{Since the payoffs are deterministic, we can verify this by evaluating the deficits for the first few periods and we will observe that the process $\{\mathbf{X}(t)\}_{t\geq 1}$ evolves in a periodic pattern. } that the system will switch giving high priority to these two users from period to period, giving the achieved average payoff vector $\mathbf{p} = (0.5, 0.5)$. Clearly User $1$ gets $0.4$ excess payoff while User $2$ gets nothing. 
This happens because $X_1(t)$ and $X_2(t)$ are frequently forced to $0$ from different negative values, which causes the ``unfairness'' between these two users. 

To solve this problem, we modify the deficit definition for each user $i$ and period $t+1$ as follows, 
\begin{align}
\label{align_deficit_possibly_negative}
X_i^\prime(t+1) = X_i^\prime(t) + \reqscalar_i - V_i(\mathbf{d}(t+1)), 
\end{align}
i.e., we allow $X_i^\prime(t)$ to be negative. 

Now for the simple example above, if we adopt LDF but based on the possibly negative deficits $\mathbf{X}^\prime(t) = (X_1^\prime(t), X_2^\prime(t), \cdots, X_n^\prime(t))$, we can get achieved average payoff vector $\mathbf{p} = (0.3, 0.7)$. We observe that the two users equally split the excess payoffs. 

Intuitively, for each user $i$ the modified deficit $X_i^\prime(t)$ changes roughly linearly as $t$ increases with the slope being $\reqscalar_i - p_i$. Since $\mathbf{w}$-LDF policy aims to balance weighted deficit $w_iX_i^\prime(t)$, we know $w_i(p_i - \reqscalar_i)$ is roughly the same for all users, unless some $p_i$ cannot be increased/decreased any more which means user $i$ is already always assigned the highest/lowest priority. We will verify this observation in the simulation section and based on this we can manage the excess payoffs for all users $i$ by picking the appropriate weight vector $\mathbf{w}$. 
\commentEnd\fi

\add{
Note that for completeness we will need to modify the feasibility definition since the process $\{\mathbf{X}^\prime(t)\}_{t\geq 1}$ is no longer positive recurrent as it may keep decreasing or increasing. 
\infocomStart
Refer to the extended version of this paper \cite{EXT2} for details. 
\commentEnd\fi
\extendedStart
Now we call a payoff requirement vector $\reqvec$ feasible if, under some user prioritization policy, for each user $i$ the time-averaged payoff per period is at least $\reqscalar_i$. Formally, recall that $V_i(\mathbf{d}(t))$ is the random payoff for user $i$ in period $t$. As the payoff requirement, each user $i$ requires that
$$
\liminf_{\tau \rightarrow \infty} \frac{1}{\tau}\sum\limits_{t = 1}^{\tau} V_i(\mathbf{d}(t)) \geq q_i, \text{with probability 1}.
$$
Note that this definition and Definition \ref{defn_feasibility_pr} are just two ways to define the feasibility. With the theorem in \cite{Bla56} we can show that for any user prioritization policy, the sets of feasible QoS requirements under these two different feasibility definitions differ by at most a boundary and thus are equivalent for practical purposes. Therefore, all the results we discuss in this paper hold under both feasibility definitions. 
\commentEnd\fi
}

\section{Simulations}
\label{sec_simulations}
In this section we explore via simulation the impact of weights of $\mathbf{w}$-LDF policies. 

Consider an illustrative system with single computing resource serving $3$ soft real-time users. In each period of length $\delta = 10$, each user generates one task that need to complete by end of the period. We let the non-negative workload, i.e., task service time, distributions for three users be Gamma$(12, 0.5)$, Gamma$(4, 1)$ and Gamma$(10, 0.1)$, respectively. We pick these workload distributions to make them general and heterogeneous. 
In each period, the payoff for user $i$ is $1$ if user $i$'s task completes and is $0$ otherwise. Accordingly, user $i$'s QoS requirement $\reqscalar_i$ represents the long-term task completion ratio. 

We start with initial deficit $\mathbf{X}^\prime(0) = (0, 0, 0)$. In each period, we independently generate task workloads for users and simulate the $\mathbf{w}$-LDF policy based on $\mathbf{X}^\prime(t)$ to pick a priority decision. The single resource sequentially processes users' tasks from highest to lowest priority. 
Tasks not completed on time are dropped. 
All simulations are run for $30000$ periods.  
A requirement vector $\reqvec$ is feasible if it is dominated by the achieved task completion ratio vector $\mathbf{p}$ over the $30000$ periods. The vectors $\reqvec$ and $\mathbf{w}$ are specified in various settings in the sequel. 

Note that in this setting monotonicity in payoffs is satisfied while subset payoff equivalence is not. 

\subsection{Impact of Weights on Long-Term Completion Ratios}
\label{subsection_effect_weights_long_term}

In Table \ref{tab_q_and_q_prime} we consider a requirement vector $\reqvec$ that is feasible under the $\mathbf{w}$-LDF policies and display the achieved 
$\mathbf{p}$ under two different weight vectors $\mathbf{w}$. For each weight vector $\mathbf{w}$, we verify that $w_i(p_i - \reqscalar_i)$ is the same for all three users. 
Contrasting the two lines in Table \ref{tab_q_and_q_prime}, we can see that for a system which can deliver more than required, changing the weight vector reallocates the excess payoffs and gives more excess payoffs to users with smaller weights. 

\begin{table}[h]
\normalsize
	\tbl{Achieved completion ratio vectors under two weight vectors. \label{tab_q_and_q_prime}}{
    \centering
	\begin{tabular}{|c|c|c|c|}
	 \hline
	 $\mathbf{q}$ & $\mathbf{w}$ & Achieved $\mathbf{p}$ & $w_i(p_i - \reqscalar_i)$	\\
	 \hline
	 \hline
	 \multirow{2}{*}{$0.8, 0.6, 0.4$} & $(1,1,1)$ & $0.85, 0.65, 0.45$ & $0.05$	\\
	 \cline{2-4}
	  & $(10,1,1)$ & $0.809, 0.69, 0.49$ & $0.09$	\\
	 \hline
	\end{tabular}
	}
\end{table}

\dissertationStart
\subsection{Impact of Weights on Long-Term Achieved Completion Ratios}
\label{subsection_effect_weights_long_term}
\myComments{This needs further improvement. }
We denote by $\mathbf{q}^\prime = (q^\prime_1, q^\prime_2, \cdots, q^\prime_n)$ the actually achieved long-term time-averaged task completion ratio vector. 
In scenarios where each user has a positive time fraction of being assigned the highest priority under $\mathbf{w}$-LDF policy\footnote{In some scenarios, this may not be true. For example, in a two-user scenario, if the requirement of user $1$ is much larger than user $2$ the $\mathbf{w}$-LDF policy may end up always assigning user $1$ with the higher priority. But we claim most real-time applications in practice requires a reasonable QoS requirement, making the requirements similar to each other, and therefore, these scenarios are not of great interest. }, we observe that $w_i(q_i - q^\prime_i)$ is the same for each user $i$, where $q_i$ and $q^\prime_i$ represent the required and achieved completion ratio for user $i$, respectively. 
This is because deficit $x_i(t)$ roughly\footnote{Because of the randomness, it is not exactly linear change. } changes linearly over periods with slope being $q_i - q^\prime_i$, as shown in Figure~\ref{figure_example_deficit}, and $\mathbf{w}$-LDF policy aims to balance weighted deficit $w_ix_i(t)$. 
This is true for both feasible and infeasible requirements. 

\begin{figure}[htp]
  \centering
  \includegraphics[width=0.45\textwidth]{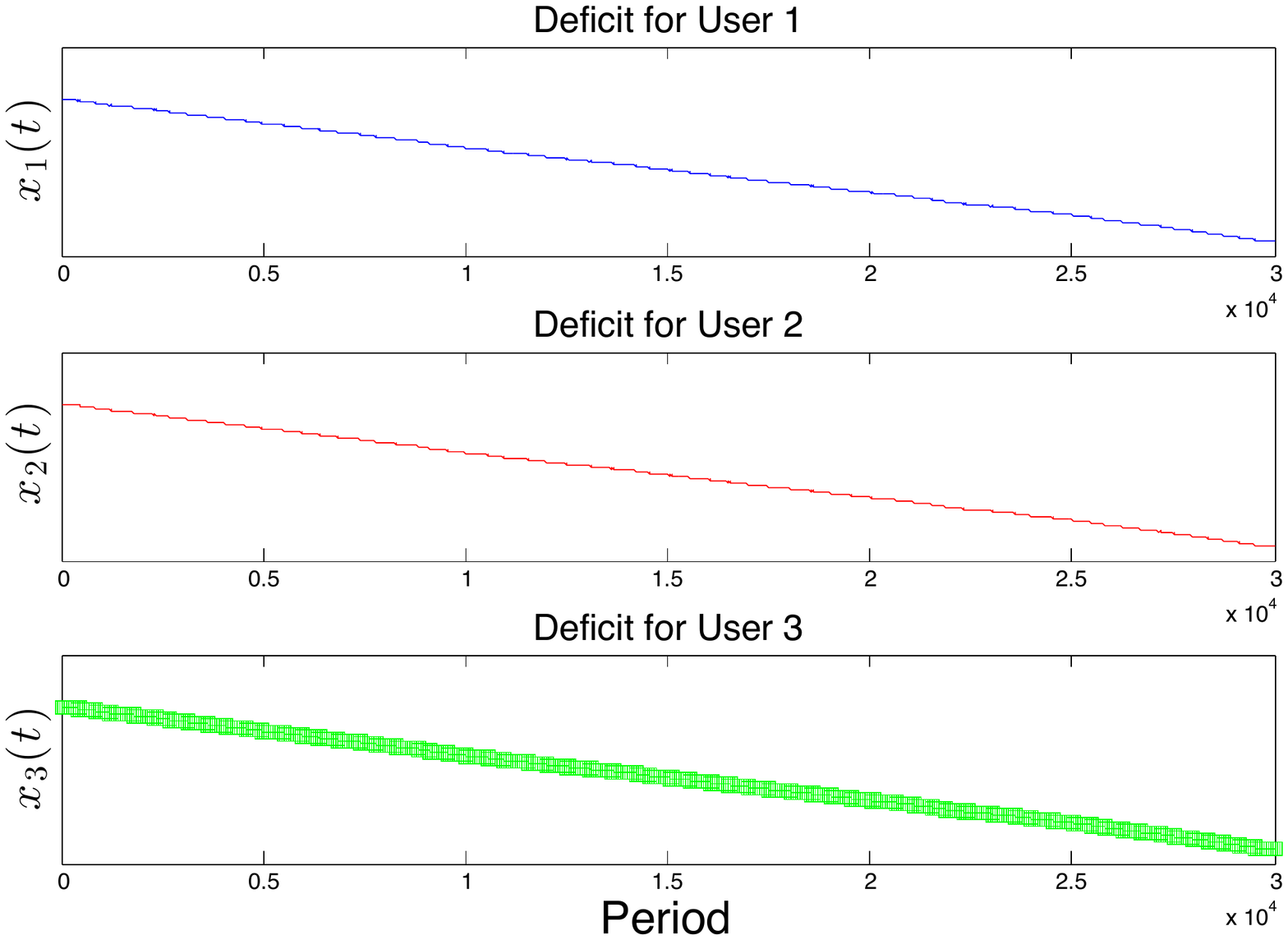}
  \caption{An example of the deficits of users in a three-user scenario with a feasible vector of requirements. }
  \label{figure_example_deficit}
\end{figure}

For example one can verify this statement in Table \ref{tab_q_and_q_prime} where we display the achieved completion ratio vectors for a feasible and an infeasible vector of requirements under two different weight vectors. As can be seen, increasing the weight of user $i$ drives the achieved completion ratio $q^\prime_i$ to the 
required $q_i$ even if the requirement vector $\mathbf{q}$ is infeasible, but with the risk of penalizing other users. Conversely, a user with a smaller weight will likely experience a larger deficit or additional completions, depending on the feasibility of the requirements. 



\begin{table}[h]
\normalsize
    \centering
	\begin{tabular}{|c|c|c|c|}
	 \hline
	 $\mathbf{q}$ & $\mathbf{w}$ & Achieved $\mathbf{q}^\prime$ & $w_i(q_i - q^\prime_i)$	\\
	 \hline
	 \multirow{4}{*}{} &
	 \multirow{2}{*}{$(1,1,1)$} & \multirow{2}{*}{$0.86, 0.84, 0.82$} & \multirow{2}{*}{$-0.02$}	\\
	 Feasible: &  &  &	\\
	 \cline{2-4}
	 $0.84, 0.82, 0.80$ & \multirow{2}{*}{$(10,1,1)$} & \multirow{2}{*}{$0.843, 0.85, 0.83$} & \multirow{2}{*}{$-0.03$}	\\
	  &  &  &	\\
	 \hline
	 \multirow{4}{*}{} &
	 \multirow{2}{*}{$(1,1,1)$} & \multirow{2}{*}{$0.86, 0.84, 0.82$} & \multirow{2}{*}{$0.04$}	\\
	 Infeasible: &  &  &	\\
	 \cline{2-4}
	 $0.90, 0.88, 0.86$ & \multirow{2}{*}{$(10,1,1)$} & \multirow{2}{*}{$0.893, 0.81, 0.79$} & \multirow{2}{*}{$0.07$}	\\
	  &  &  &	\\
	 \hline
	\end{tabular}
    \caption{Actually achieved completion ratio vectors and the value of $w_i(q_i - q^\prime_i)$ for two QoS requirement vectors under two weight vectors. }
    \label{tab_q_and_q_prime}
\end{table}

Thus, in practice when it is not clear whether a vector of QoS requirements is feasible, one can assign different weights to users based on how the users behave to the gaps between QoS requirements and the achieved completion ratios. 

\commentEnd\fi

\subsection{Characterization of Clustering of Failures and Impact of Weights}
\label{subsection_characterization_of_failure_clustering}
If a user's task is not completed in a period, we call it a failure event. The requirement vector $\reqvec$ focuses on long-term task completion ratio, but it would likely be undesirable for a user to experience consecutive or clustered failure events. 
Figure~\ref{figure_clustering_of_failures} gives an example of failure events. 
In this subsection we consider the same $\mathbf{q} = (0.8, 0.6, 0.4)$ used above and explore the clustering of failures under two $\mathbf{w}$-LDF policies. 

\begin{figure}[htp]
  \centering
  \includegraphics[width=0.6\textwidth]{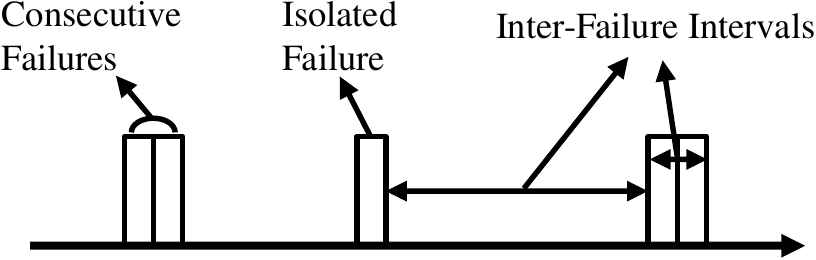}
  \caption{Characteristics of clustering of failures. }
  \label{figure_clustering_of_failures}
\end{figure}

We consider Inter-Failure Intervals (IFIs) between typical failures. 
IFI is supported on the set $\{1, 2, 3, \cdots\}$. To quantitatively evaluate the clustering of the failures, we focus on the standard deviation (SD) of the IFIs for each user. 
One extreme case is that failures happen strictly periodically and therefore, the SD is $0$. 
Intuitively, a user with a smaller IFI SD implies that the user experiences less clustered failures. 

Next we introduce an evaluation benchmark. 
For each user $i$, we know $1 - p_i$ represents the time-averaged failure ratio. 
If the failure happens in each period independently with probability $1 - p_i$, the IFI can be modeled by a geometric random variable supported on the set $\{1, 2, 3, \cdots\}$ with the parameter being $1 - p_i$. We use the SD of such a geometric random variable as a benchmark.  

Under some $\mathbf{w}$-LDF policy, we define {\em SD ratio} of user $i$ to be the ratio of user $i$'s IFI SD to the SD of the geometric random variable with parameter $1 - p_i$. 
Table \ref{tab_sd_ifi} shows the SD ratios of three users under two different weight vectors $\mathbf{w}$. 
Under $\mathbf{w} = (1,1,1)$, the ratios are less than $1$, indicating that the failures under the LDF policy are less clustered compared to the scheme where failure event happens i.i.d. in each period. 
The last two columns in Table \ref{tab_sd_ifi} indicates that increasing the weight of user $i$ reduces the degree of failure clustering for user $i$ but at the price of other users' more clustered failures. Thus, the users' sensitivities to clustered failures is another factor to consider when one assigns weights to users. 

\begin{table}[h]
\normalsize
    \tbl{Characterization of clustering of failures. \label{tab_sd_ifi}}{
    \centering
    \begin{tabular}{| c || c | c |}
    \hline
    ~ & {\bf SD ratio under} & {\bf SD ratio under}\\
    ~ & $\mathbf{w} = (1, 1, 1)$ & $\mathbf{w} = (10, 1, 1)$	\\
    \hline
    \hline
    {\bf User 1} & 88\% & 39\%	\\
    \hline
    {\bf User 2} & 77\% & 97\%	\\
    \hline
    {\bf User 3} & 92\% & 107\%	\\
    \hline
    \end{tabular}
    }
\end{table}

\dissertationStart
\subsection{Characterization of Clustering of Failures and Impact of Weights}
\label{subsection_characterization_of_failure_clustering}
If a user's task is not completed in a period, we call it a failure event. The QoS requirement focuses on time-averaged task completion ratio, but it would likely be undesirable for a user to experience consecutive or clustered failure events. Here we explore via simulation the clustering of failures.
We first consider LDF policy, i.e., $\mathbf{w} = (1,1,1)$, and the feasible QoS requirement vector $\mathbf{q} = (0.84, 0.82, 0.80)$ used in \ref{subsection_effect_weights_long_term}. Figure~ \ref{figure_clustering_of_failures} gives an example of failure events. 

\begin{figure}[htp]
  \centering
  \includegraphics[width=0.45\textwidth]{Figures/clustering_of_failures.pdf}
  \caption{Characteristics of clustering of failures. }
  \label{figure_clustering_of_failures}
\end{figure}

We first consider sequences of consecutive failures (including a single failure event) and display the distributions of consecutive failures for our three users in Figure~{\ref{figure_consecutive_failures}}. Clearly isolated failure instances are the most likely (i.e., with probability $90\%$, $90\%$, $86\%$ for three users, respectively) but there is a chance (albeit small) of seeing $4$ or $5$ consecutive failures. 

\begin{figure}[htp]
  \centering
  \includegraphics[width=0.4\textwidth]{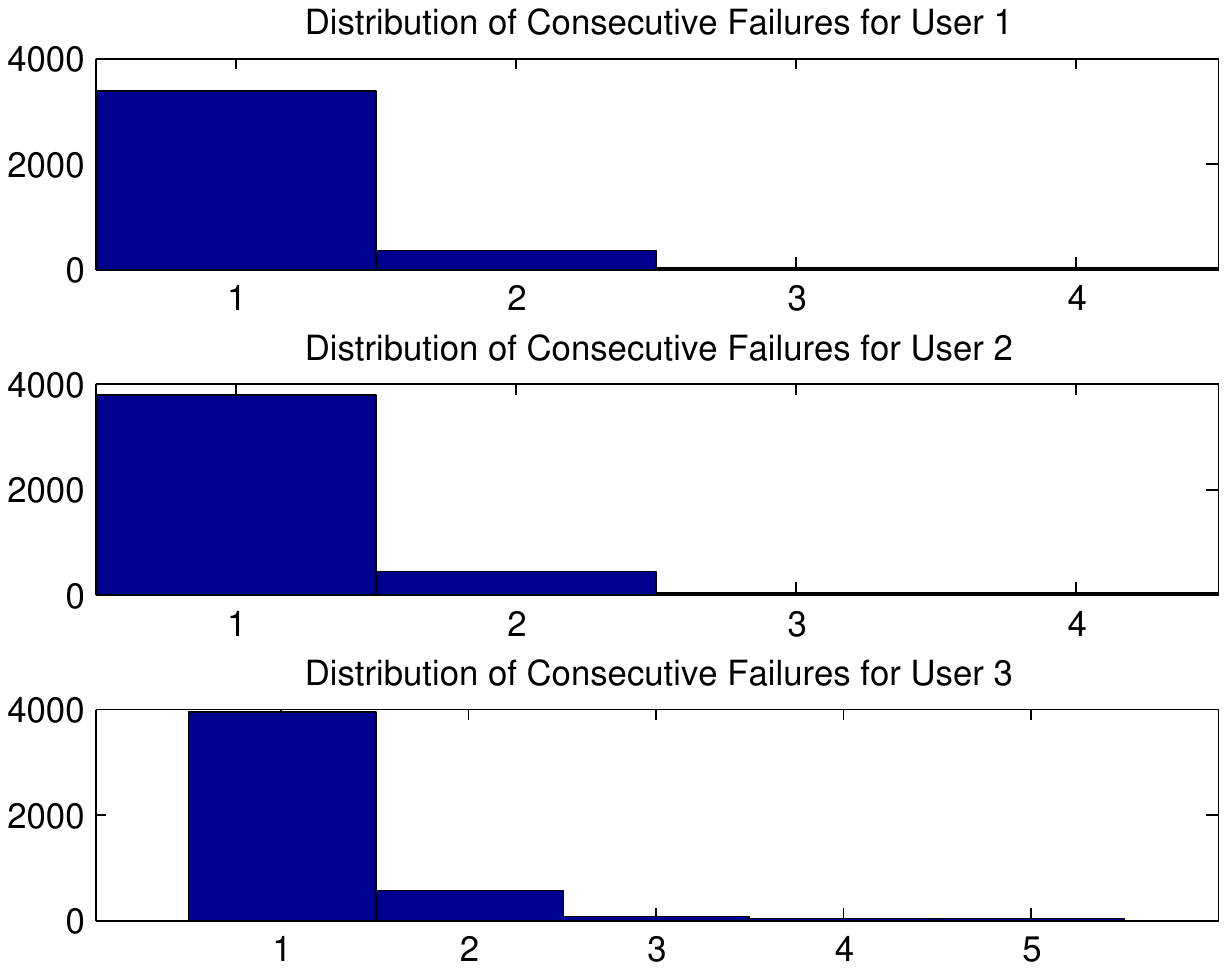}
  \caption{The distribution of consecutive failures. }
  \label{figure_consecutive_failures}
\end{figure}

Next we consider Inter-Failure Intervals (IFIs) between typical failures. 
IFI is supported on the set $\{1, 2, 3, \cdots\}$. To quantitatively evaluate the clustering of the failures, we focus on the standard deviation of the IFIs. 
One extreme case is that failures happen periodically and therefore, the standard deviation is $0$. 
We consider a benchmark called i.i.d. randomly distributed failures scheme (IID-F) where failure happens in each period independently with the same probability. 
In this benchmark IFI is a geometrically distributed random variable. 
To make it a fair comparison, we set the parameter of the benchmark to be the actual time-averaged failure probability under LDF. 
We further define {\em relative degree of failure clustering} for a user $i$ to be the ratio of the standard deviation of user $i$'s IFIs under LDF to that in IID-F.
The standard deviations of IFIs under LDF and IID-F and their ratios are shown in Table \ref{tab_sd_ifi}. 
The results exhibit that under LDF policy the failures are less clustered compared to the IID-F benchmark and the three users have similar relative degree of failure clustering. 

\begin{table}[h]
\normalsize
    \centering
    \begin{tabular}{| c | c | c | c |}
    \hline
    ~ & {\bf LDF Policy} & {\bf IID-F} & {\bf LDF/IID-F}\\
    \hline
    {\bf User 1} & 5.83 & 6.77 & 86\%	\\
    \hline
    {\bf User 2} & 4.90 & 5.84 & 84\%	\\
    \hline
    {\bf User 3} & 4.40 & 5.12 & 86\%	\\
    \hline
    \end{tabular}
    \caption{Standard deviations of inter failure intervals with $\mathbf{w} = (1,1,1)$. }
    \label{tab_sd_ifi}
\end{table}

Changing the weights also impacts the clustering of failures. 

We consider the same requirement $\mathbf{q} = (0.84, 0.82, 0.80)$ but with a different weight vector $\mathbf{w} = (10, 1, 1)$. We conduct the same analysis for IFIs and exhibit the results in Table \ref{tab_sd_ifi_different_weights}. 
The standard deviations in IID-F are different from those in Table \ref{tab_sd_ifi} because of the different achieved completion ratios driven by the weight vectors. 
The ratios in the last columns of Table \ref{tab_sd_ifi} and \ref{tab_sd_ifi_different_weights} indicate that increasing the weight of user $i$ induces a smaller relative degree of failure clustering for user $i$ but at the price of other users' more clustered failures. Thus, the users' sensitivities to clustered failures is another factor to consider when one assigns weights to users. 

\begin{table}[h]
\normalsize
    \centering
    \begin{tabular}{| c | c | c | c |}
    \hline
    ~ & {\bf $\mathbf{w}$-LDF Policy} & {\bf IID-F} & {\bf $\mathbf{w}$-LDF/IID-F}\\
    \hline
    {\bf User 1} & 4.43 & 5.85 & 76\%	\\
    \hline
    {\bf User 2} & 7.27 & 6.21 & 117\%	\\
    \hline
    {\bf User 3} & 5.93 & 5.41 & 110\%	\\
    \hline
    \end{tabular}
    \caption{Standard deviations of inter failure intervals with $\mathbf{w} = (10,1,1)$. }
    \label{tab_sd_ifi_different_weights}
\end{table}
\commentEnd\fi

\section{Conclusion}
Resource allocation in complex systems supporting real-time users with general QoS requirements can be relatively ``easy''. One can in principle design the system to allow priority-based resource allocation and adopt simple $\mathbf{w}$-LDF policies to dynamically prioritize users/applications. 
Our theory provides guidance towards understanding the suboptimality and even optimality of such solutions and how to improve the system design. 
For future work, it would be interesting to explore the management of real-time users across systems and/or sharing with non real-time traffic. 

%

\huaweiStart
\section*{Acknowledgment}
This research was supported by Huawei Technologies Co. Ltd. 
The authors would like to thank Alan Gatherer, Zheng Lu, Haishan Zhu and Mattan Erez for their comments and feedbacks on this work. 
\commentEnd\fi

\bibliography{diss_myown}{}
\bibliographystyle{ACM-Reference-Format-Journals}

\section{Appendix}

\subsection{Proof of Lemma \ref{lemma_R_in_close_C}}
\label{appendix_pf_lemma_F_in_close_C}
Given $\reqvec \in \feasibilityRegion$, since it is feasible there exists a user prioritization policy $\eta$ that fulfills $\reqvec$, i.e., the Markov chain $\{\mathbf{X}(t)\}_{t\geq 1}$ is positive recurrent, which implies there exists a stationary distribution over the state space. Since $\eta$ is a stationary policy that picks $\mathbf{d}(t+1)$ based on $\mathbf{X}(t)$, by Ergodic Theorem each priority decision $\mathbf{d}$ is selected with some time fraction $\alpha_\mathbf{d}$ such that $\sum\limits_{\mathbf{d} \in D}\alpha_\mathbf{d} = 1$ . If we consider $X_i(t)$ as a queue, the average arrival $\reqscalar_i$  should not be bigger than the average departure which is given by $\sum\limits_{\mathbf{d}\in D} \alpha_\mathbf{d} p_i(\mathbf{d})$ since otherwise $X_i(t)$ goes a.s. to infinity and the chain cannot be positive recurrent. 

Therefore, 
$$
\reqvec \preceq \sum\limits_{\mathbf{d}\in D} \alpha_\mathbf{d} \mathbf{p}(\mathbf{d}) \in \text{Conv}(P), 
$$
which implies $\reqvec \in C \subseteq \text{cl}(C)$. 

\subsection{Proof of Theorem \ref{theorem_MW_feasibility_optimal}}
\label{appendix_pf_thm_MW_feasibility_optimal}
We start by introducing a lemma. 

\begin{lemma}
\label{lemma_feasibility_region_projection_understanding}
A payoff requirement vector $\reqvec$ is in $C$ if and only if, for any non-negative vector $\boldsymbol{\gamma} \succeq 0$, there exists a priority decision $\mathbf{d}$, such that
$$
\langle \boldsymbol{\gamma}, \reqvec \rangle \leq \langle \boldsymbol{\gamma}, \mathbf{p}(\mathbf{d}) \rangle. 
$$

Equivalently, $\reqvec$ is in $C$ if and only if, for any $\boldsymbol{\gamma} \succeq \mathbf{0}$, 
$$
\langle \boldsymbol{\gamma}, \reqvec \rangle \leq \max\limits_{\mathbf{d} \in D} \langle \boldsymbol{\gamma}, \mathbf{p}(\mathbf{d}) \rangle. 
$$
\end{lemma}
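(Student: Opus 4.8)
The plan is to prove both implications, treating the two displayed formulations as trivially equivalent: the existence of a decision $\mathbf{d}$ with $\langle \boldsymbol{\gamma}, \reqvec \rangle \leq \langle \boldsymbol{\gamma}, \mathbf{p}(\mathbf{d}) \rangle$ is by definition the same as $\langle \boldsymbol{\gamma}, \reqvec \rangle \leq \max_{\mathbf{d} \in D} \langle \boldsymbol{\gamma}, \mathbf{p}(\mathbf{d}) \rangle$, since $D$ is finite. I would also record two elementary reductions to be used throughout: for $\boldsymbol{\gamma} \succeq \mathbf{0}$ the maximum over the finite set $P$ equals the maximum over $\text{Conv}(P)$ (a linear functional is maximized at a vertex), and moreover it equals the support value $\sup_{\mathbf{y} \in C} \langle \boldsymbol{\gamma}, \mathbf{y}\rangle$. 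The latter holds because every $\mathbf{y} \in C$ satisfies $\mathbf{y} \preceq \mathbf{x}$ for some $\mathbf{x} \in \text{Conv}(P)$, so $\langle \boldsymbol{\gamma}, \mathbf{y}\rangle \leq \langle \boldsymbol{\gamma}, \mathbf{x}\rangle$ when $\boldsymbol{\gamma} \succeq \mathbf{0}$, while conversely each $\mathbf{p}(\mathbf{d}) \succeq \mathbf{0}$ lies in $C$. Thus the statement to prove becomes: $\reqvec \in C$ if and only if $\langle \boldsymbol{\gamma}, \reqvec\rangle \leq \sup_{\mathbf{y}\in C}\langle \boldsymbol{\gamma}, \mathbf{y}\rangle$ for all $\boldsymbol{\gamma} \succeq \mathbf{0}$.

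The forward direction is immediate. If $\reqvec \in C$, pick $\mathbf{x} \in \text{Conv}(P)$ with $\reqvec \preceq \mathbf{x}$; then for any $\boldsymbol{\gamma} \succeq \mathbf{0}$ we get $\langle \boldsymbol{\gamma}, \reqvec\rangle \leq \langle \boldsymbol{\gamma}, \mathbf{x}\rangle \leq \max_{\mathbf{d}\in D}\langle \boldsymbol{\gamma}, \mathbf{p}(\mathbf{d})\rangle$, using coordinatewise monotonicity and the vertex reduction above. For the reverse direction I would argue by contraposition. First I would note that $C$ is nonempty, closed, and convex (it is the intersection of the compact convex set $\text{Conv}(P)$ shifted down by the orthant with $\mathbb{R}^n_+$), and crucially that it is \emph{downward closed} inside the orthant: if $\mathbf{y} \in C$ and $\mathbf{0} \preceq \mathbf{y}' \preceq \mathbf{y}$, then $\mathbf{y}' \in C$. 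Now assume $\reqvec \notin C$. Since $C$ is closed and convex and $\reqvec \succeq \mathbf{0}$, the separating hyperplane theorem yields some $\boldsymbol{\gamma}_0$ with $\langle \boldsymbol{\gamma}_0, \reqvec\rangle > \sup_{\mathbf{y}\in C}\langle \boldsymbol{\gamma}_0, \mathbf{y}\rangle$.

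The main obstacle, and the only nontrivial step, is upgrading this separating functional to a \emph{nonnegative} one, so that it certifies failure of the claimed inequality. The plan is to replace $\boldsymbol{\gamma}_0$ by its positive part $\boldsymbol{\gamma}_1$, defined by $(\gamma_1)_i = \max((\gamma_0)_i, 0)$. Since $\reqvec \succeq \mathbf{0}$, zeroing the negative coordinates can only raise the left side, giving $\langle \boldsymbol{\gamma}_1, \reqvec\rangle \geq \langle \boldsymbol{\gamma}_0, \reqvec\rangle$. For the right side I would use downward closedness: given any maximizer $\mathbf{y}^\ast \in C$ of $\langle \boldsymbol{\gamma}_1, \cdot\rangle$, form $\mathbf{y}'$ by setting the coordinates where $(\gamma_0)_i < 0$ to zero; then $\mathbf{y}' \in C$ and $\langle \boldsymbol{\gamma}_0, \mathbf{y}'\rangle = \langle \boldsymbol{\gamma}_1, \mathbf{y}^\ast\rangle = \sup_{\mathbf{y}\in C}\langle \boldsymbol{\gamma}_1, \mathbf{y}\rangle$, whence $\sup_{\mathbf{y}\in C}\langle \boldsymbol{\gamma}_1, \mathbf{y}\rangle \leq \sup_{\mathbf{y}\in C}\langle \boldsymbol{\gamma}_0, \mathbf{y}\rangle$. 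Chaining these gives $\langle \boldsymbol{\gamma}_1, \reqvec\rangle > \sup_{\mathbf{y}\in C}\langle \boldsymbol{\gamma}_1, \mathbf{y}\rangle$ with $\boldsymbol{\gamma}_1 \succeq \mathbf{0}$, so this $\boldsymbol{\gamma}_1$ violates the hypothesis and the contrapositive is complete. The rest is bookkeeping; I expect the coordinate-zeroing argument, which hinges on the downward-closed geometry of $C$, to be where the real content lies.
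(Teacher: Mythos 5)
Your proof is correct. It differs from the paper's in how the key ``only if'' direction is discharged: the paper reformulates membership in $C$ as feasibility of a linear system in the mixing weights $\alpha_{\mathbf{d}}$ and then simply cites strong LP duality (a theorem of alternatives) to assert that this is equivalent to the infeasibility of the system $\{\boldsymbol{\gamma} \succeq \mathbf{0},\ \langle\boldsymbol{\gamma},\reqvec\rangle > \langle\boldsymbol{\gamma},\mathbf{p}(\mathbf{d})\rangle\ \forall \mathbf{d}\}$, without carrying out the argument. You instead work directly with the geometry of $C$: strict separation of $\reqvec$ from the closed convex set $C$, followed by the positive-part correction $\boldsymbol{\gamma}_1 = [\boldsymbol{\gamma}_0]^+$, which you justify via the downward closedness of $C$ inside the orthant (zeroing the coordinates of a maximizer where $\gamma_0$ is negative keeps it in $C$ and preserves the objective value). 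This is essentially an explicit proof of the Farkas-type alternative the paper invokes as a black box, specialized to the structure of $C$; it buys self-containedness and makes visible exactly where the nonnegativity of the separating functional comes from (namely, $\reqvec \succeq \mathbf{0}$ and the free-disposal structure of $C$), at the cost of a slightly longer argument. All the supporting claims you use (closedness and boundedness of $C = (\mathrm{Conv}(P) - \mathbb{R}^n_+)\cap\mathbb{R}^n_+$ so the sup is attained, $\mathbf{p}(\mathbf{d}) \in C$ because payoffs are nonnegative, and the vertex reduction $\sup_{\mathrm{Conv}(P)} = \max_P$) check out, and the forward direction coincides with the computation the paper gives before its formal proof.
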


To understand this lemma, since any vector in $C$ is dominated by some $\reqvec \in \text{Conv}(P)$, we consider for simplicity a $\reqvec \in \text{Conv}(P)$. Such a vector can be expressed as a convex combination of expected payoff vectors, i.e., $\reqvec = \sum\limits_{\mathbf{d} \in D} \alpha_{\mathbf{d}} \mathbf{p}(\mathbf{d})$, where 
$\sum\limits_{\mathbf{d} \in D} \alpha_{\mathbf{d}} = 1 $ and 
$\alpha_{\mathbf{d}} \geq 0, \forall \mathbf{d} \in D. $

Now we have
\begin{align*}
\langle \boldsymbol{\gamma}, \reqvec \rangle = & \sum\limits_{\mathbf{d} \in D} \alpha_{\mathbf{d}} \langle \boldsymbol{\gamma}, \mathbf{p}(\mathbf{d}) \rangle \\
\leq & \sum\limits_{\mathbf{d} \in D} \alpha_{\mathbf{d}} \max\limits_{\mathbf{d} \in D} \langle \boldsymbol{\gamma}, \mathbf{p}(\mathbf{d}) \rangle	\\
= & \max\limits_{\mathbf{d} \in D} \langle \boldsymbol{\gamma}, \mathbf{p}(\mathbf{d}) \rangle. 
\end{align*}

This actually proves the necessity of the condition. The formal proof is shown below. 

\begin{proof}[Proof of Lemma \ref{lemma_feasibility_region_projection_understanding}]
Given a payoff requirement vector $\reqvec$, by definition of $C$, we know that $\reqvec$ lying in set $C$ is equivalent to the {\em feasibility} of the following set of linear equations and inequalities, 

\begin{align}
\label{align_feasible_set}
\left\{ 
   \begin{array}{l}
     \reqvec \preceq \sum\limits_{\mathbf{d} \in D} \alpha_{\mathbf{d}} \mathbf{p}(\mathbf{d}) \\
     \alpha_{\mathbf{d}} \geq 0, \forall \mathbf{d} \in D	\\
     \sum\limits_{\mathbf{d} \in D} \alpha_{\mathbf{d}} = 1. 
   \end{array} 
\right.
\end{align}

The condition in Lemma \ref{lemma_feasibility_region_projection_understanding} is equivalent to the {\em infeasibility} of 
\begin{align}
\label{align_infeasible_set}
\left\{ 
   \begin{array}{l}
     \boldsymbol{\gamma} \succeq \mathbf{0} \\
     \langle \boldsymbol{\gamma},\reqvec \rangle > \langle \boldsymbol{\gamma}, \mathbf{p}(\mathbf{d}) \rangle, \forall \mathbf{d} \in D. 
   \end{array} 
\right.
\end{align}

By strong duality \cite{BoV09} it is easy to prove that set (\ref{align_feasible_set}) being feasible is equivalent to set (\ref{align_infeasible_set}) being infeasible and this concludes the proof. 
\end{proof}

\begin{proof}[Proof of Theorem \ref{theorem_MW_feasibility_optimal}]
By Lemma \ref{lemma_R_in_close_C} we know $F \subseteq \text{cl}(C)$ and by definition we know $\feasibilityRegion_{\text{MW}} \subseteq \feasibilityRegion$. To prove the theorem it suffices to show $\text{int}(C) \subseteq \feasibilityRegion_\text{MW}$. We show this by constructing a Lyapunov function and using Foster's theorem. 

Given $\reqvec \in \text{int}(C)$, the goal is to show $\reqvec$ can be fulfilled by the MW policy. 

By definition of interior there exists $\epsilon > 0$ such that $\reqvec^\prime = \reqvec + \epsilon \mathbf{1} \in C$ where $\mathbf{1} = (1, 1, \cdots, 1)$. We define a Lyapunov function as 
$$
L(\mathbf{X}(t)) = \sum\limits_{i=1}^{n} X_i(t)^2. 
$$

In period $t+1$, we have
\begin{eqnarray}
\lefteqn{ \expectation\left[ L(\mathbf{X}(t+1)) - L(\mathbf{X}(t)) | \mathbf{X}(t) = \mathbf{x} \right]} \notag \\
& = & \expectation\Bigg[\sum\limits_{i=1}^{n} X_i(t+1)^2 - X_i(t)^2 | \mathbf{X}(t) = \mathbf{x}\Bigg]	\notag	\\
& \leq & \expectation\Bigg[\sum\limits_{i=1}^{n} (X_i(t) + \reqscalar_i \notag - V_i(\mathbf{d}(t+1)))^2 - X_i(t)^2 | \mathbf{X}(t) = \mathbf{x}\Bigg] \notag \\
& = & \expectation\Bigg[\sum\limits_{i=1}^{n} (\reqscalar_i - V_i(\mathbf{d}(t+1)))^2 + 2 \langle \mathbf{X}(t), \reqvec - \mathbf{V}(\mathbf{d}(t+1)) \rangle | \mathbf{X}(t) = \mathbf{x}\Bigg] 	\notag \\
& \leq & \expectation\Bigg[\sum\limits_{i=1}^{n} (\reqscalar_i^2 + V_i(\mathbf{d}(t+1))^2) + 2\langle \mathbf{X}(t), \reqvec - \mathbf{V}(\mathbf{d}(t+1)) \rangle | \mathbf{X}(t) = \mathbf{x}\Bigg] \label{align_drift}
\end{eqnarray}

Under the MW policy, by (\ref{align_max_weight}) we know that
\begin{eqnarray*}
\lefteqn{\expectation[\langle \mathbf{X}(t), \mathbf{V}(\mathbf{d}(t+1)) \rangle | \mathbf{X}(t) = \mathbf{x}]}	\\ 
& = & \expectation[\langle \mathbf{X}(t), \mathbf{p}(\mathbf{d}(t+1)) \rangle | \mathbf{X}(t) = \mathbf{x}]	\\
& = & \max\limits_{\mathbf{d} \in D} \langle \mathbf{x}, \mathbf{p}(\mathbf{d}) \rangle. 
\end{eqnarray*}

Since $\reqvec^\prime =  \reqvec + \epsilon \mathbf{1} \in C$, we get that 
\begin{eqnarray*}
\lefteqn{\expectation[\langle \mathbf{X}(t), \reqvec - \mathbf{V}(\mathbf{d}(t+1)) \rangle | \mathbf{X}(t) = \mathbf{x}]}	\\
& = & \expectation[\langle \mathbf{X}(t), \reqvec^\prime - \mathbf{V}(\mathbf{d}(t+1)) \rangle | \mathbf{X}(t) = \mathbf{x}] - \epsilon \langle \mathbf{x}, \mathbf{1}\rangle	\\
& = & \langle \mathbf{x}, \reqvec^\prime \rangle - \max\limits_{\mathbf{d} \in D} \langle \mathbf{x}, \mathbf{p}(\mathbf{d})\rangle - \epsilon\langle \mathbf{x}, \mathbf{1}\rangle	\\
& \leq & - \epsilon\langle \mathbf{x}, \mathbf{1}\rangle, 
\end{eqnarray*}
where the last step is true by Lemma \ref{lemma_feasibility_region_projection_understanding}. 

Also since there are finite payoff vectors, we use $b_1$ to represent an upper bound on all possible payoff values and payoff requirements. 
Therefore, by (\ref{align_drift}) we get that
\begin{align*}
\expectation[L(\mathbf{X}(t+1)) - L(\mathbf{X}(t)) | \mathbf{X}(t) = \mathbf{x}] & \leq 2nb_1^2 - 2 \epsilon \langle \mathbf{x}, \mathbf{1} \rangle \\
& \leq -1
\end{align*}
for $\mathbf{x}$ satisfying $\langle\mathbf{x}, \mathbf{1}\rangle \geq \frac{nb_1^2}{\epsilon} + \frac{1}{2\epsilon}$. 

It is not hard to show\footnote{This is true because given our assumption that requirement $\reqvec$ and the payoff vectors are rational valued and have finite options, the state space of process $\{\mathbf{X}(t)\}_{t\geq 1}$ is in a lattice, see e.g., \cite{CoS13b}.} there are finite states $\mathbf{x}$ with $\langle\mathbf{x}, \mathbf{1}\rangle < \frac{nb_1^2}{\epsilon} + \frac{1}{2\epsilon}$. Therefore, by Foster's theorem, $\{\mathbf{X}(t)\}_{t\geq 1}$ is positive recurrent and $\reqvec$ is fulfilled by the MW policy. 
Thus, this shows that $\text{int}(C) \subseteq \feasibilityRegion_\text{MW}$. 
\end{proof}

\subsection{Proof of Theorem \ref{thm_R_IB}}
\label{appendix_pf_thm_R_IB_chap2}
We first introduce some further notation. Given two vectors $\boldsymbol{a}=(a_1, a_2, \cdots, a_n)$ and $\mathbf{b} = (b_1, b_2, \cdots, b_n)$, we denote by $\boldsymbol{a} \circ \mathbf{b}=(a_1 b_1, a_2 b_2, \cdots, a_n b_n)$ the entrywise product. 

For any $\mathbf{w} \succ \mathbf{0}$ and $\reqvec \in \text{int}(R_\text{IB})$, the goal is to show that $\reqvec$ can be fulfilled by the $\mathbf{w}$-LDF policy. 

Let $\boldsymbol{\beta} = (\frac{1}{w_1}, \frac{1}{w_2}, \cdots, \frac{1}{w_n})$. 
By definition of interior there exists an $\epsilon>0$ such that ${\reqvec}^\prime = \reqvec + \epsilon\boldsymbol{\beta} \in R_{\text{IB}}$. 
By definition of $R_{\text{IB}}$, there exists a vector $\ribvec \succ \mathbf{0}$ such that $\reqvec^\prime$ and $\ribvec$ satisfy the conditions (\ref{align_R_IB_defn}). 

Consider the following candidate Lyapunov function: 
$$
L(\mathbf{X}(t)) = \sum\limits_{i=1}^{n} {\ribscalar}_i w_i X_i(t)^2. 
$$

Note that we consider a process $\{\mathbf{X}(t)\}_{t\geq 1}$ that is driven by the $\mathbf{w}$-LDF policy. In period $t+1$, by similar analysis as in (\ref{align_drift}) we have that
\begin{eqnarray}
\lefteqn{\expectation\left[L(\mathbf{X}(t+1)) - L(\mathbf{X}(t)) | \mathbf{X}(t) = \mathbf{x}\right]} \notag \\
& \leq & \expectation\Bigg[\sum\limits_{i=1}^{n} {\ribscalar}_i w_i (\reqscalar_i^2 + V_i(\mathbf{d}(t+1))^2) + 2 \langle \ribvec \circ \mathbf{w} \circ \mathbf{X}(t), \reqvec - \mathbf{V}(\mathbf{d}(t+1)) \rangle | \mathbf{X}(t) = \mathbf{x} \Bigg] 	\notag \\ 
& = & \expectation \Bigg[ \sum\limits_{i=1}^{n} {\ribscalar}_i w_i (\reqscalar_i^2 + V_i(\mathbf{d}(t+1))^2) + 2\langle \ribvec \circ \mathbf{w} \circ \mathbf{X}(t), {\reqvec}^\prime - \mathbf{V}(\mathbf{d}(t+1)) \rangle | \mathbf{X}(t) = \mathbf{x}\Bigg]	\notag	\\
& & - 2\epsilon\langle \mathbf{x}, \ribvec \rangle	\label{align_LDF_drift_chap2}
\end{eqnarray}

Let $\mathbf{d}$ denote the priority decision selected according to $\mathbf{w}$-LDF policy. Thus, We have that
\begin{eqnarray*}
& &\expectation\left[ \langle \ribvec \circ \mathbf{w} \circ \mathbf{X}(t), {\reqvec}^\prime - \mathbf{V}(\mathbf{d}(t+1)) \rangle | \mathbf{X}(t) = \mathbf{x} \right]\\
& &~~ = \langle \ribvec \circ \mathbf{w} \circ \mathbf{x}, {\reqvec}^\prime - \mathbf{p}(\mathbf{d}) \rangle. 
\end{eqnarray*}
By reordering users according to priorities, we get
\begin{eqnarray*}
\lefteqn{\langle \ribvec \circ \mathbf{w} \circ \mathbf{x}, {\reqvec}^\prime - \mathbf{p}(\mathbf{d}) \rangle}	\\
& = & \sum\limits_{i=1}^{n} w_{d_i} x_{d_i} [{\ribscalar}_{d_i} {\reqscalar}^\prime_{d_i} - {\ribscalar}_{d_i} p_{d_i}(\mathbf{d})]	\\
& = & \sum\limits_{i=1}^{n-1} [w_{d_i} x_{d_i} - w_{d_{i+1}} x_{d_{i+1}}] [\sum\limits_{j=1}^i {\ribscalar}_{d_j} {\reqscalar}^\prime_{d_j} - \sum\limits_{j=1}^i \ribscalar_{d_j} p_{d_j}(\mathbf{d})]	\\
& & + w_{d_n} x_{d_n} [\sum\limits_{j=1}^n {\ribscalar}_{d_j} {\reqscalar}^\prime_{d_j} - \sum\limits_{j=1}^n \ribscalar_{d_j} p_{d_j}(\mathbf{d})]. 
\end{eqnarray*}

By $\mathbf{w}$-LDF policy we know $w_{d_i} x_{d_i} \geq w_{d_{i+1}} x_{d_{i+1}}$. By (\ref{align_R_IB_defn}) we have $\sum\limits_{j=1}^i {\ribscalar}_{d_j} {\reqscalar}^\prime_{d_j} \leq \sum\limits_{j=1}^i \ribscalar_{d_j} p_{d_j}(\mathbf{d})$ for $1 \leq i \leq n$. Therefore, 

$$
\expectation\left[\langle \ribvec \circ \mathbf{w} \circ \mathbf{X}(t), {\reqvec}^\prime - \mathbf{V}(\mathbf{d}(t+1)) \rangle | \mathbf{X}(t) = \mathbf{x} \right] \leq 0. 
$$

Suppose $b_2$ is an upper bound on all ${\ribscalar}_i$ and $w_i$, by (\ref{align_LDF_drift_chap2}), we get that
\begin{align*}
\expectation[L(\mathbf{X}(t+1)) - L(\mathbf{X}(t)) | \mathbf{X}(t) = \mathbf{x}] \leq & 2n b_2^2 b_1^2 - 2 \epsilon \langle \mathbf{x}, \ribvec \rangle \\
\leq & -1
\end{align*}
for $\mathbf{x}$ satisfying $\langle\mathbf{x}, \ribvec\rangle \geq \frac{n b_2^2 b_1^2}{\epsilon} + \frac{1}{2\epsilon}$. 

Again, since there are finite states $\mathbf{x}$ with $\langle\mathbf{x}, \ribvec\rangle < \frac{n b_2^2 b_1^2}{\epsilon} + \frac{1}{2\epsilon}$, by Foster's Theorem $\{\mathbf{X}(t)\}_{t\geq 1}$ is positive recurrent and $\reqvec$ is fulfilled by the $\mathbf{w}$-LDF policy. 

Therefore, for any $\mathbf{w} \succ \mathbf{0}$, we have that
$$
\text{int}(R_\text{IB}) \subseteq \feasibilityRegion_{\mathbf{w}\text{-LDF}}. 
$$

\subsection{Proof of Theorem \ref{thm_visualize_R_IB}}
\label{appendix_pf_thm_visualize_R_IB}
The proof of Theorem \ref{thm_visualize_R_IB} is complicated and we apologize for that. Part of the complication comes from understanding how the property of monotonicity in payoffs characterizes the geometry of the region $R_\text{IB}$. Further, a feasible payoff requirement implies feasibilities for all user subsets, and thus we need to look at the projections in all subspaces. 

Figure~\ref{fig_outline_pf_thm_3} gives the high-level outline for the proof of Theorem \ref{thm_visualize_R_IB}. There are two parts which involve the technical results Lemma \ref{lemma_q_S_in_C_S} and \ref{lemma_finding_nonnegative_beta}, which we will state in the proof. In order to allow the reader follow the proof, we defer their own proof to later. 

\begin{figure}[htp]
  \centering
  \includegraphics[width=0.85\textwidth]{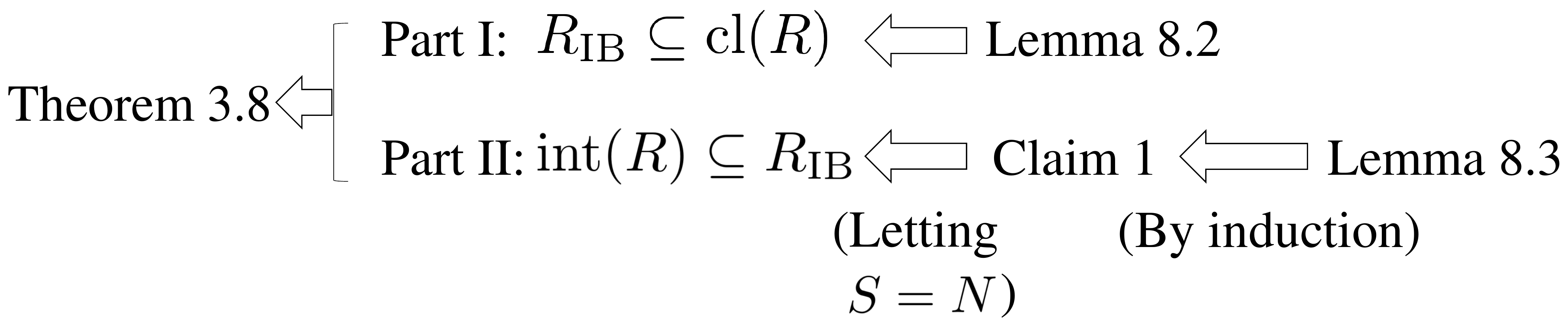}
  \caption{Outline for the proof of Theorem \ref{thm_visualize_R_IB}. }
  \label{fig_outline_pf_thm_3}
\end{figure}

\begin{lemma}
\label{lemma_q_S_in_C_S}
If a system satisfies monotonicity in payoffs, then for all $\reqvec \in C$ and all subsets of users $S\subseteq \fullUserSet$, $\reqvec^S \in C^S$. 
\end{lemma}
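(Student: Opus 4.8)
The plan is to reduce the claim to a statement about projecting the convex hull $\textnormal{Conv}(P)$ onto the coordinates of $S$. Since $\reqvec \in C$ means $\reqvec \preceq \mathbf{x}$ for some $\mathbf{x} \in \textnormal{Conv}(P)$, and projection onto $S$ preserves coordinatewise domination (for $i \in S$ we keep $\reqscalar_i \le x_i$, and for $i \notin S$ both coordinates are $0$), we immediately get $\reqvec^S \preceq \mathbf{x}^S$. Hence it suffices to produce a vector $\mathbf{y}^S \in \textnormal{Conv}(P^S)$ with $\mathbf{x}^S \preceq \mathbf{y}^S$; then $\reqvec^S \preceq \mathbf{x}^S \preceq \mathbf{y}^S$ witnesses $\reqvec^S \in C^S$. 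The whole difficulty is thus in showing that every projected point of $\textnormal{Conv}(P)$ is dominated by a point of $\textnormal{Conv}(P^S)$, even though $\textnormal{Conv}(P^S)$ is built only from the restricted decision set $D(S)$.

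The tool I would introduce is a \emph{promotion map} $\phi : D \to D(S)$. Given a decision $\mathbf{d}$, I list the users of $S$ in the relative priority order they already have under $\mathbf{d}$, place them into the top $|S|$ priority slots while keeping that relative order, and append the users of $\fullUserSet \setminus S$ below them in their original order. By construction $\phi(\mathbf{d}) \in D(S)$. The key step is to verify, using monotonicity in payoffs, that $p_i(\mathbf{d}) \le p_i(\phi(\mathbf{d}))$ for every $i \in S$. For such an $i$, the users with higher priority than $i$ under $\phi(\mathbf{d})$ are exactly those members of $S$ that precede $i$ in the common relative order; because that order is inherited unchanged from $\mathbf{d}$, every such user was already above $i$ under $\mathbf{d}$, so $S_i(\phi(\mathbf{d})) \subseteq S_i(\mathbf{d})$. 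Monotonicity in payoffs then gives $p_i(\phi(\mathbf{d})) \ge p_i(\mathbf{d})$, and collecting the coordinates in $S$ yields $\mathbf{p}^S(\mathbf{d}) \preceq \mathbf{p}^S(\phi(\mathbf{d}))$.

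To finish, I would write $\mathbf{x} = \sum_{\mathbf{d} \in D} \alpha_{\mathbf{d}}\, \mathbf{p}(\mathbf{d})$ as a convex combination, project onto $S$, and apply the inequality coordinatewise to obtain $\mathbf{x}^S = \sum_{\mathbf{d} \in D} \alpha_{\mathbf{d}}\, \mathbf{p}^S(\mathbf{d}) \preceq \sum_{\mathbf{d} \in D} \alpha_{\mathbf{d}}\, \mathbf{p}^S(\phi(\mathbf{d}))$. Regrouping the right-hand side by the value $\mathbf{d}' = \phi(\mathbf{d}) \in D(S)$ gives $\sum_{\mathbf{d}' \in D(S)} \big(\sum_{\mathbf{d} : \phi(\mathbf{d}) = \mathbf{d}'} \alpha_{\mathbf{d}}\big)\, \mathbf{p}^S(\mathbf{d}')$, whose coefficients are nonnegative and sum to $1$; this exhibits the dominating vector $\mathbf{y}^S$ as a convex combination of points of $P^S$, i.e.\ $\mathbf{y}^S \in \textnormal{Conv}(P^S)$, completing the argument.

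I expect the main obstacle to be the careful justification of the inclusion $S_i(\phi(\mathbf{d})) \subseteq S_i(\mathbf{d})$, which is precisely the point where monotonicity in payoffs is needed: one must argue that promoting the users of $S$ to the top, while preserving their mutual order, never introduces a \emph{new} higher-priority user for any $i \in S$. The only users ranked above $i$ after promotion are members of $S$ already ranked above $i$ before promotion, so no genuinely new ancestor is created; everything else (preservation of domination under projection, and the convex-combination bookkeeping) is routine.
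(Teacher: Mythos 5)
Your proof is correct and follows essentially the same route as the paper: your promotion map $\phi$ is exactly the paper's operator $m(\mathbf{d},S)$, the inclusion $S_i(\phi(\mathbf{d}))\subseteq S_i(\mathbf{d})$ combined with monotonicity in payoffs gives $\mathbf{p}^S(\mathbf{d})\preceq\mathbf{p}^S(\phi(\mathbf{d}))$, and the convex-combination regrouping matches the paper's final step. No gaps.
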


We have argued in Section \ref{subsection_characterizing_R_IB} that $C^S$ does not necessarily equal to the projection of region $C$ on the subspace of $S$ in general, but the statement is true if the system satisfies monotonicity in payoffs. Please refer to Appendix \ref{pf_lemma_q_S_in_C_S} for detailed proof of this lemma. 

Given a subset of users $S \subseteq \fullUserSet$, for two vectors $\mathbf{p}^S$ and $\mathbf{q}^S$, we say $\mathbf{p}^S \succS{S} \mathbf{q}^S$ if $p_i^S > q_i^S$ for any $i \in S$. 
We define the subspace of $S$ as $\mathbb R_+^S = \{\reqvec\in \mathbb R_+^n | q_i = 0, \forall i\notin S \}$. 

For a region $X\subseteq \mathbb R_+^n$ which lies in the subspace of $S$, we denote by $\text{int}^S(X)$ the relative interior of $X$, i.e., the interior of $X$, relative to the subspace of $S$. 
Similarly we use $\text{cl}^S(X)$, $\text{bd}^S(X)$ to represent the relative closure and boundary of $X$ in the subspace of $S$, respectively. 

Following the definition of region $\concaveHull$ in Definition \ref{defn_concave_hull}, we can express int$(\concaveHull)$ and cl$(\concaveHull)$ as below, 
\begin{align}
\text{int}(\concaveHull) & = \{ \reqvec | \forall S \subseteq \{1, 2, \cdots, n\}, \reqvec^S \in \text{int}^S(C^S \setminus B^S) \}, \label{align_int_R}	\\
\text{cl}(\concaveHull) & = \{ \reqvec | \forall S \subseteq \{1, 2, \cdots, n\}, \reqvec^S \in \text{cl}^S(C^S \setminus B^S) \}. \label{align_cl_R}
\end{align}
By definition, we know $C^S$ and $B^S$ are closed sets. We can also show\footnote{This is clear from the definition of $C^S$ and $B^S$. We omit the proof to save space. }
\begin{align*}
\text{int}^S(C^S \setminus B^S) & = \text{int}^S(C^S) \setminus B^S, 	\\
\text{cl}^S(C^S \setminus B^S) & = C^S \setminus \text{int}^S(B^S). 
\end{align*}

Next we prove Theorem \ref{thm_visualize_R_IB} in two parts: $R_{\text{IB}} \subseteq \text{cl}(\concaveHull)$ and $\text{int}(\concaveHull) \subseteq R_{\text{IB}}$. 

{\bf Part I of the proof: }

We start with the easy part and first show $R_{\text{IB}} \subseteq \text{cl}(\concaveHull)$.

Given $\mathbf{q} \in R_{\text{IB}}$, by definition, there exists $\boldsymbol{\alpha} \succ \mathbf{0}$ such that for all subsets of users $S$, 
$$
\sum\limits_{i\in S}\alpha_i q_i \leq \min\limits_{\mathbf{d} \in D(S)} \sum\limits_{i\in S} \alpha_i p_i(\mathbf{d}), 
$$
i.e., 
\begin{equation}
\label{equ_alpha_S_q_S}
\langle \boldsymbol{\alpha}^S, \mathbf{q}^S \rangle \leq \min\limits_{\mathbf{d} \in D(S)} \langle \boldsymbol{\alpha}^S, \mathbf{p}^S(\mathbf{d}) \rangle. 
\end{equation}

To show $\mathbf{q} \in \text{cl}(\concaveHull)$, by (\ref{align_cl_R}) we need to show for all subsets of users $S$ that $\mathbf{q}^S \in C^S \setminus \text{int}^S(B^S)$. 

Since $\mathbf{q} \in R_{\text{IB}} \subseteq C$, by Lemma \ref{lemma_q_S_in_C_S} we have $\mathbf{q}^S \in C^S$. 

Now suppose for some subset of users $S$, $\mathbf{q}^S \in \text{int}^S(B^S)$. By definition of $B^S$ and interior, there exists $\mathbf{x}^S \in \text{Conv}(P^S)$ such that $\mathbf{q}^S \succeq \mathbf{x}^S$ and $\mathbf{q}^S \neq \mathbf{x}^S$, which implies that
$$
\langle \boldsymbol{\alpha}^S, \mathbf{q}^S \rangle > \langle \boldsymbol{\alpha}^S, \mathbf{x}^S \rangle. 
$$

\newcommand*{\cvxCombPar}{c}

Since $\mathbf{x}^S \in \text{Conv}(P^S)$, there exists $\{\cvxCombPar_\mathbf{d} | \mathbf{d}\in D(S)\}$ such that $\mathbf{x}^S = \sum\limits_{\mathbf{d} \in D(S)} \cvxCombPar_\mathbf{d} \mathbf{p}^S(\mathbf{d})$ and $\cvxCombPar_\mathbf{d} \geq 0, \sum\limits_{\mathbf{d} \in D(S)} \cvxCombPar_\mathbf{d} = 1$. Therefore, 
\begin{align}
\label{align_cvx_greater_than_min}
\langle \boldsymbol{\alpha}^S, \mathbf{q}^S \rangle > \langle \boldsymbol{\alpha}^S, \mathbf{x}^S \rangle = & \sum\limits_{\mathbf{d} \in D(S)} \cvxCombPar_\mathbf{d} \langle \boldsymbol{\alpha}^S, \mathbf{p}^S(\mathbf{d}) \rangle \notag	\\
\geq & \min\limits_{\mathbf{d} \in D(S)} \langle \boldsymbol{\alpha}^S, \mathbf{p}^S(\mathbf{d}) \rangle, 
\end{align}
which is contradicted to (\ref{equ_alpha_S_q_S}). Therefore, $\mathbf{q}^S \notin \text{int}^S(B^S)$ and thus $\mathbf{q} \in \text{cl}(\concaveHull)$, implying that $R_{\text{IB}} \subseteq \text{cl}(\concaveHull)$. 

{\bf Part II of the proof: }

For the second part we show $\text{int}(\concaveHull) \subseteq R_{\text{IB}}$.

Given $\mathbf{q} \in \text{int}(R)$, by (\ref{align_int_R}), we know for all $S$, $\mathbf{q}^S \in \text{int}^S(C^S) \setminus B^S$. The goal is to show $\mathbf{q} \in R_{\text{IB}}$, i.e., to find an $\boldsymbol{\alpha} \succ \mathbf{0}$ such that for all subsets of users $S$, 
\begin{align}
\label{align_alpha_S_q_S}
\langle \boldsymbol{\alpha}^S, \mathbf{q}^S \rangle \leq \min\limits_{\mathbf{d} \in D(S)} \langle \boldsymbol{\alpha}^S, \mathbf{p}^S(\mathbf{d}) \rangle. 
\end{align}

We start with the following lemma which is proved in Appendix \ref{pf_lemma_finding_nonnegative_beta} in the sequel. 

\begin{lemma}
\label{lemma_finding_nonnegative_beta}
If \modelSatisfyPriorityImplication, given $\mathbf{q} \in \text{int}(\concaveHull)$, for each subset of users $S$, there exists nonzero $\boldsymbol{\beta}^S \succeq \mathbf{0}$, such that for all $S^\prime \subseteq S$ where $\boldsymbol{\beta}^{S^\prime} \neq \mathbf{0}$,  
\begin{align}
\label{align_beta_strict_smaller_than_condition}
\langle \boldsymbol{\beta}^{S^\prime}, \mathbf{q}^{S^\prime} \rangle < \min\limits_{\mathbf{d} \in D(S^\prime)} \langle \boldsymbol{\beta}^{S^\prime},  \mathbf{p}^{S^\prime}(\mathbf{d}) \rangle. 
\end{align}
\end{lemma}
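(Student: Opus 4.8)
The plan is to prove the lemma by induction on $|S|$, after recasting it as a statement about non-negative separating functionals and using monotonicity in payoffs to reconcile the functionals chosen for different subsets. First I would reformulate the obstacle set. Inside the subspace of $S$ we have $B^S = \text{Conv}(P^S) + \mathbb{R}_+^S$, a closed convex set whose recession cone is exactly $\mathbb{R}_+^S$. Since $\mathbf{q} \in \text{int}(\concaveHull)$, the description (\ref{align_int_R}) gives $\mathbf{q}^{S'} \notin B^{S'}$ for \emph{every} $S' \subseteq S$. Separating the point $\mathbf{q}^{S'}$ from the closed convex set $B^{S'}$ produces a nonzero normal $\boldsymbol{\delta}$, and the recession directions $\mathbf{e}_i$, $i\in S'$, force $\boldsymbol{\delta} \succeq \mathbf{0}$ on $S'$ (otherwise the infimum of $\langle \boldsymbol{\delta}, \cdot \rangle$ over $B^{S'}$ would be $-\infty$); strictness, $\langle \boldsymbol{\delta}, \mathbf{q}^{S'} \rangle < \min_{\mathbf{d}\in D(S')} \langle \boldsymbol{\delta}, \mathbf{p}^{S'}(\mathbf{d}) \rangle$, follows because $B^{S'}$ is closed and does not contain $\mathbf{q}^{S'}$. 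Writing $G_{S'}$ for the set of non-negative normals achieving this strict separation, each $G_{S'}$ is a nonempty open cone, and the lemma reduces to producing a \emph{single} $\boldsymbol{\beta}^S \succeq \mathbf{0}$ whose restriction $\boldsymbol{\beta}^{S'}$ lies in $G_{S'}$ for every $S' \subseteq S$ with $\boldsymbol{\beta}^{S'} \neq \mathbf{0}$. Lemma \ref{lemma_q_S_in_C_S} guarantees $\mathbf{q}^{S'} \in C^{S'}$, so only the $B^{S'}$ side ever needs separating.

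The device that ties the subsets together is monotonicity in payoffs, which I would package as a ``top-lifting'' observation: for nested $U \subseteq S' \subseteq S$ and any $\mathbf{d} \in D(S')$, moving the users of $U$ to the very top while preserving their relative order yields a decision in $D(U)$ under which every $i \in U$ has a (weakly) smaller higher-priority set $S_i(\mathbf{d})$ and hence, by monotonicity in payoffs, a (weakly) larger $p_i$. Consequently shrinking the subset can only enlarge the relevant target $\min_{\mathbf{d}\in D(\cdot)} \langle \boldsymbol{\delta}, \mathbf{p}(\mathbf{d}) \rangle$, so the separation constraints for small subsets are the easiest and the binding ones sit near $S' = S$. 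Dually, extending a decision of $D(S')$ to one of $D(S)$ by keeping the users of $S'$ on top leaves their payoffs unchanged, which is what lets me transfer a separation secured on a subset up to the full problem.

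With these tools in hand I would run the induction. The base case $|S| = 1$ is immediate, since $B^{\{i\}} = [\min_{\mathbf{d}\in D(\{i\})} p_i(\mathbf{d}), \infty)$ forces $q_i < \min_{\mathbf{d}\in D(\{i\})} p_i(\mathbf{d})$, so any positive scalar on coordinate $i$ works. For the inductive step I would begin from an arbitrary separator $\boldsymbol{\gamma} \in G_S$, test its restrictions against the cones $G_{S'}$ of the proper subsets, and, on each subset whose restriction falls outside $G_{S'}$, correct it using the functionals supplied by the inductive hypothesis. The corrections are combined along the lattice of subsets, and the top-lifting comparison is invoked to certify that each correction preserves the separation at level $S$ while not reopening a subset already handled.

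The hard part, and the source of the intricacy flagged in the text, is exactly this assembly of one globally consistent functional. A separator tailored to $S$ need not restrict to a separator on a proper subset, and the amount by which it fails can be of order one rather than arbitrarily small, so it cannot be repaired by a small perturbation of $\boldsymbol{\gamma}$; the correction must be genuinely structured across the subset lattice. Monotonicity in payoffs is precisely the hypothesis that renders the cones $\{G_{S'}\}$ mutually compatible, and the delicate bookkeeping is to order the corrections (largest subsets first) so that fixing one subset cannot destroy the separation already secured on its own subsets or on $S$ itself. Throughout I would rely on $\mathbf{q} \in \text{int}(\concaveHull)$ holding for all subsets at once, which keeps every $G_{S'}$ open and nonempty and thereby leaves room for the required adjustments.
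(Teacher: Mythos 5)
Your setup is sound: recognizing $B^{S^\prime}$ as a closed convex set whose recession cone is $\mathbb R_+^{S^\prime}$, so that strict separation of $\mathbf{q}^{S^\prime}$ from $B^{S^\prime}$ yields a nonzero $\boldsymbol{\delta} \succeq \mathbf{0}$ with $\langle \boldsymbol{\delta}, \mathbf{q}^{S^\prime} \rangle < \min_{\mathbf{d} \in D(S^\prime)} \langle \boldsymbol{\delta}, \mathbf{p}^{S^\prime}(\mathbf{d}) \rangle$, is correct, as is the ``top-lifting'' reading of monotonicity. But there is a genuine gap at exactly the point you flag as the hard part: the passage from ``each $S^\prime$ admits its own separator'' to ``one $\boldsymbol{\beta}^S$ whose restriction separates on every $S^\prime$ in its support'' is never carried out. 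You propose to start from an arbitrary $\boldsymbol{\gamma} \in G_S$ and to ``correct'' it on subsets where its restriction fails, combining corrections largest-subsets-first, and you assert that monotonicity makes the cones $\{G_{S^\prime}\}$ mutually compatible so that the bookkeeping closes. That assertion is the entire content of the lemma; nothing in the proposal shows that a correction supported on $S^\prime$ (which, as you yourself note, may need to be of order one) does not destroy the separation already secured on $S$ itself, on other subsets meeting $S^\prime$, or on subsets previously repaired. Without an explicit mechanism for this, the induction does not close.

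The paper's proof avoids the assembly problem entirely by making one structured choice of $\boldsymbol{\beta}^S$ rather than repairing an arbitrary separator. Since $\mathbf{q}^S \in \text{int}^S(C^S)$ there is $\mathbf{x}^S \in \text{Conv}(P^S)$ with $\mathbf{q}^S \precS{S} \mathbf{x}^S$; the segment from $\mathbf{q}^S$ to $\mathbf{x}^S$ meets $\text{bd}^S(B^S)$ at a point $\mathbf{v}^S \succS{S} \mathbf{q}^S$, and $\boldsymbol{\beta}^S$ is taken to be the normal of a supporting hyperplane of $B^S$ there. One then finds $\mathbf{u}^S \in \text{Conv}(P^S)$ with $\mathbf{u}^S \preceq \mathbf{v}^S$ and $u_i^S = v_i^S$ wherever $\beta_i^S > 0$, so that $\langle \boldsymbol{\beta}^{S^\prime}, \mathbf{q}^{S^\prime} \rangle < \langle \boldsymbol{\beta}^{S^\prime}, \mathbf{u}^{S^\prime} \rangle$ for every $S^\prime$ with $\boldsymbol{\beta}^{S^\prime} \neq \mathbf{0}$, and the lemma reduces to the single splitting inequality $\min_{\mathbf{d} \in D(S)} \langle \boldsymbol{\beta}^S, \mathbf{p}^S(\mathbf{d}) \rangle \leq \min_{\mathbf{d} \in D(S)} \langle \boldsymbol{\beta}^{S \setminus S^\prime}, \mathbf{p}^{S \setminus S^\prime}(\mathbf{d}) \rangle + \min_{\mathbf{d} \in D(S^\prime)} \langle \boldsymbol{\beta}^{S^\prime}, \mathbf{p}^{S^\prime}(\mathbf{d}) \rangle$, which is where monotonicity actually enters: the two partial minimizers are spliced into one decision lying in $D(S^\prime) \cap D(S)$ whose payoffs are coordinatewise no larger. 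This is the concrete use of monotonicity that your proposal gestures at but does not pin down. The cross-subset assembly with small perturbations that you describe is what the paper does later, in Claim 1 of the proof of Theorem \ref{thm_visualize_R_IB}, and it is legitimate there precisely because the inequalities delivered by this lemma are strict.
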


Given Lemma \ref{lemma_finding_nonnegative_beta}, by letting $S = \fullUserSet$ we find $\boldsymbol{\beta}$ that is very similar to the $\boldsymbol{\alpha}$ we are looking for, except for two differences: (1) $\boldsymbol{\beta}$ may not be strictly positive, and (2) it is strictly ``less than'' in (\ref{align_beta_strict_smaller_than_condition}). 
The idea is to add a small perturbation to $\boldsymbol{\beta}$ to construct a strictly positive vector. 

Formally, given Lemma \ref{lemma_finding_nonnegative_beta}, to show $\reqvec \in R_\text{IB}$ we shall prove the following even stronger statement by induction. 

{\bf Claim 1:} If \modelSatisfyPriorityImplication, given $\mathbf{q} \in \text{int}(\concaveHull)$, for each subset of users $S$, there exists $\boldsymbol{\alpha}^S \succS{S} \mathbf{0}$ such that for all $S^\prime \subseteq S$, 
\begin{align}
\label{align_alpha_S_prime_q_S_prime}
\langle \boldsymbol{\alpha}^{S^\prime}, \mathbf{q}^{S^\prime} \rangle \leq 
\min\limits_{\mathbf{d} \in D(S^\prime)} \langle \boldsymbol{\alpha}^{S^\prime}, \mathbf{p}^{S^\prime}(\mathbf{d}) \rangle. 
\end{align}

By Claim 1, let $S = \fullUserSet$, we can find $\boldsymbol{\alpha} \succ \mathbf{0}$ satisfying (\ref{align_alpha_S_q_S}) which implies $\mathbf{q} \in R_{\text{IB}}$. Therefore, it suffices to prove Claim 1. We prove this by induction on the cardinality $|S|$ of user set $S$. 

If $|S| = 0$, clearly Claim 1 is correct. 

Suppose Claim 1 is correct for all $S$ with $|S| \leq k-1$ where $k \geq 1$. Given an $S$ with $|S| = k$, by Lemma \ref{lemma_finding_nonnegative_beta}, we can find nonzero $\boldsymbol{\beta}^S \succeq \mathbf{0}$ satisfying the conditions in Lemma \ref{lemma_finding_nonnegative_beta}. We separate the set $S$ into two sets $S_1$ and $S_2$ where
\begin{align*}
\beta_i^S > 0, i\in S_1,	\\
\beta_i^S = 0, i\in S_2. 
\end{align*}

Since $\boldsymbol{\beta}^S \neq \mathbf{0}$, $|S_2| \leq |S|-1 = k-1$. By induction of Claim 1 on $S_2$, there exists $\boldsymbol{\gamma}^{S_2} \succS{S_2} \mathbf{0}$ such that for any $S^\prime \subseteq S_2$, 
\begin{align}
\label{align_condition_for_S_2}
\langle \boldsymbol{\gamma}^{S^\prime}, \mathbf{q}^{S^\prime} \rangle \leq 
\min\limits_{\mathbf{d} \in D(S^\prime)} \langle \boldsymbol{\gamma}^{S^\prime}, \mathbf{p}^{S^\prime}(\mathbf{d}) \rangle. 
\end{align}

We claim that for small enough $\delta > 0$, 
$$
\boldsymbol{\alpha}^S = \boldsymbol{\beta}^S + \delta \boldsymbol{\gamma}^{S_2} \succS{S} \mathbf{0}
$$
satisfies condition (\ref{align_alpha_S_prime_q_S_prime}) for all $S^\prime \subseteq S$. 

Any $S^\prime \subseteq S$ falls into one of the following two cases: $S^\prime \subseteq S_2$ and $S^\prime \not\subseteq S_2$. It suffices to show (\ref{align_alpha_S_prime_q_S_prime}) in each case. 

If $S^\prime \subseteq S_2$, then $\boldsymbol{\alpha}^{S^\prime} = \boldsymbol{\gamma}^{S^\prime}$. By (\ref{align_condition_for_S_2}), we know (\ref{align_alpha_S_prime_q_S_prime}) is correct. 

If $S^\prime \not\subseteq S_2$, then $\boldsymbol{\beta}^{S^\prime} \neq \mathbf{0}$. Let $\boldsymbol{\gamma}^{S^\prime} = ({\boldsymbol{\gamma}^{S_2}})^{S^\prime}$. We know
$$
\langle \boldsymbol{\alpha}^{S^\prime}, \mathbf{q}^{S^\prime} \rangle = \langle \boldsymbol{\beta}^{S^\prime}, \mathbf{q}^{S^\prime} \rangle + \delta \langle \boldsymbol{\gamma}^{S^\prime}, \mathbf{q}^{S^\prime} \rangle. 
$$

By Lemma \ref{lemma_finding_nonnegative_beta}, $\langle \boldsymbol{\beta}^{S^\prime}, \mathbf{q}^{S^\prime} \rangle < \min\limits_{\mathbf{d} \in D(S^\prime)} \langle \boldsymbol{\beta}^{S^\prime},  \mathbf{p}^{S^\prime}(\mathbf{d}) \rangle$. Since there are finite subsets $S^\prime$, for small enough $\delta$, 

\begin{align*}
\langle \boldsymbol{\alpha}^{S^\prime}, \mathbf{q}^{S^\prime} \rangle \leq & \min\limits_{\mathbf{d} \in D(S^\prime)} \langle \boldsymbol{\beta}^{S^\prime},  \mathbf{p}^{S^\prime}(\mathbf{d}) \rangle	\\
\leq & \min\limits_{\mathbf{d} \in D(S^\prime)} \langle \boldsymbol{\alpha}^{S^\prime},  \mathbf{p}^{S^\prime}(\mathbf{d}) \rangle, 
\end{align*}
i.e., (\ref{align_alpha_S_prime_q_S_prime}) holds true. 

In summary, this proves Claim 1 and thus $\reqvec \in R_\text{IB}$. Therefore, $\text{int}(R) \subseteq R_\text{IB}$. 

\subsection{Proof of Lemma \ref{lemma_q_S_in_C_S}}
\label{pf_lemma_q_S_in_C_S}
First we introduce a further notation. Given a decision $\mathbf{d}$ and a user set $S$, we let $m(\mathbf{d}, S)$ represent the decision that satisfies
\begin{longitem}
\item $m(\mathbf{d}, S) \in D(S)$. 
\item For users $i, j \in S$ or $i, j \notin S$, if $i$ has higher priority than $j$ in $\mathbf{d}$, then $i$ also has higher priority than $j$ in decision $m(\mathbf{d}, S)$. 
\end{longitem}
In other words, $m(\mathbf{d}, S)$ is the priority decision obtained by modifying decision $\mathbf{d}$ to assign highest priorities to users in $S$ without changing the relative orders in and out of $S$, respectively. 

Given that the system satisfies monotonicity in payoffs, for all $i \in S$, we have that
\begin{align}
\label{align_m_d_S}
p_i(m(\mathbf{d}, S)) \geq p_i(\mathbf{d}). 
\end{align}

Given $\reqvec \in C$, the goal is to show $\reqvec^S \in C^S$ for all subsets of users $S$. 
By definition of $C$, there exists a convex combination of vectors in $P$ that dominates $\reqvec$, i.e., there exists $\{\alpha_\mathbf{d} | \mathbf{d} \in D\}$ such that 
$$
\reqvec \preceq \sum\limits_{\mathbf{d} \in D} \alpha_\mathbf{d} \mathbf{p}(\mathbf{d}), 
$$
and $\alpha_\mathbf{d} \geq 0, \sum\limits_{\mathbf{d} \in D} \alpha_\mathbf{d} = 1$. 

Therefore, for any subset of users $S$, 
$$
\reqvec^S \preceq \sum\limits_{\mathbf{d} \in D} \alpha_\mathbf{d} \mathbf{p}^S(\mathbf{d}), 
$$
which by (\ref{align_m_d_S}) gives
$$
\reqvec^S \preceq \sum\limits_{\mathbf{d} \in D} \alpha_\mathbf{d} \mathbf{p}^S(m(\mathbf{d}, S)). 
$$

We let $\mathbf{x}^S = \sum\limits_{\mathbf{d} \in D} \alpha_\mathbf{d} \mathbf{p}^S(m(\mathbf{d}, S))$. Since $m(\mathbf{d}, S) \in D(S)$, we know $\mathbf{x}^S \in \text{Conv}(P^S)$ , and by definition of $C^S$, 
$$
\reqvec^S \in C^S. 
$$

\subsection{Proof of Lemma \ref{lemma_finding_nonnegative_beta}}
\label{pf_lemma_finding_nonnegative_beta}
Given $\mathbf{q} \in \text{int}(R)$, by (\ref{align_int_R}) we know that for all subsets of users $S$, $\mathbf{q}^S \in \text{int}^S(C^S) \setminus B^S$. 
Given a subset of users $S$, the goal is to find $\boldsymbol{\beta}^S$ which satisfies the requirements in Lemma \ref{lemma_finding_nonnegative_beta}. 
In this proof, we focus on the subspace of $S$. 

Since $\mathbf{q}^S \in \text{int}^S(C^S)$, there exists $\mathbf{x}^S \in \text{Conv}(P^S)$ such that $\mathbf{q}^S \precS{S} \mathbf{x}^S$. Since $\mathbf{q}^S \notin B^S$ and $\mathbf{x}^S \in \text{Conv}(P^S) \subseteq B^S$, we know that connecting $\mathbf{q}^S$ and $\mathbf{x}^S$ intersects $\text{bd}^S(B^S)$ at some point denoted by $\mathbf{v}^S$. By the closure property of $B^S$, $\mathbf{v}^S \in B^S$ and thus, $\mathbf{v}^S \succS{S} \mathbf{q}^S$. 
Since $\mathbf{v}^S \in \text{bd}^S(B^S)$, we get that $\mathbf{v}^S$ lies on a supporting hyperplane \cite{BoV09} of $B^S$, and by definition of $B^S$, there exists nonzero normal vector $\boldsymbol{\beta}^S \succeq \mathbf{0}$ of this supporting hyperplane such that
\begin{align}
\label{align_v_is_minimum}
\langle \boldsymbol{\beta}^S, \mathbf{v}^S \rangle = \min\limits_{\mathbf{d} \in D(S)} \langle \boldsymbol{\beta}^S, \mathbf{p}^S(\mathbf{d}) \rangle. 
\end{align}

Figure~\ref{fig_finding_beta_in_pf_Lemma_4} conceptually shows the process of constructing $\boldsymbol{\beta}^S$. The circles represent the expected payoff vectors. For simplicity we suppress the superscript $S$ in the figure. 
We shall show this $\boldsymbol{\beta}^S$ satisfies the requirements in Lemma \ref{lemma_finding_nonnegative_beta}. 

\begin{figure}[htp]
  \centering
  \includegraphics[width=0.65\textwidth]{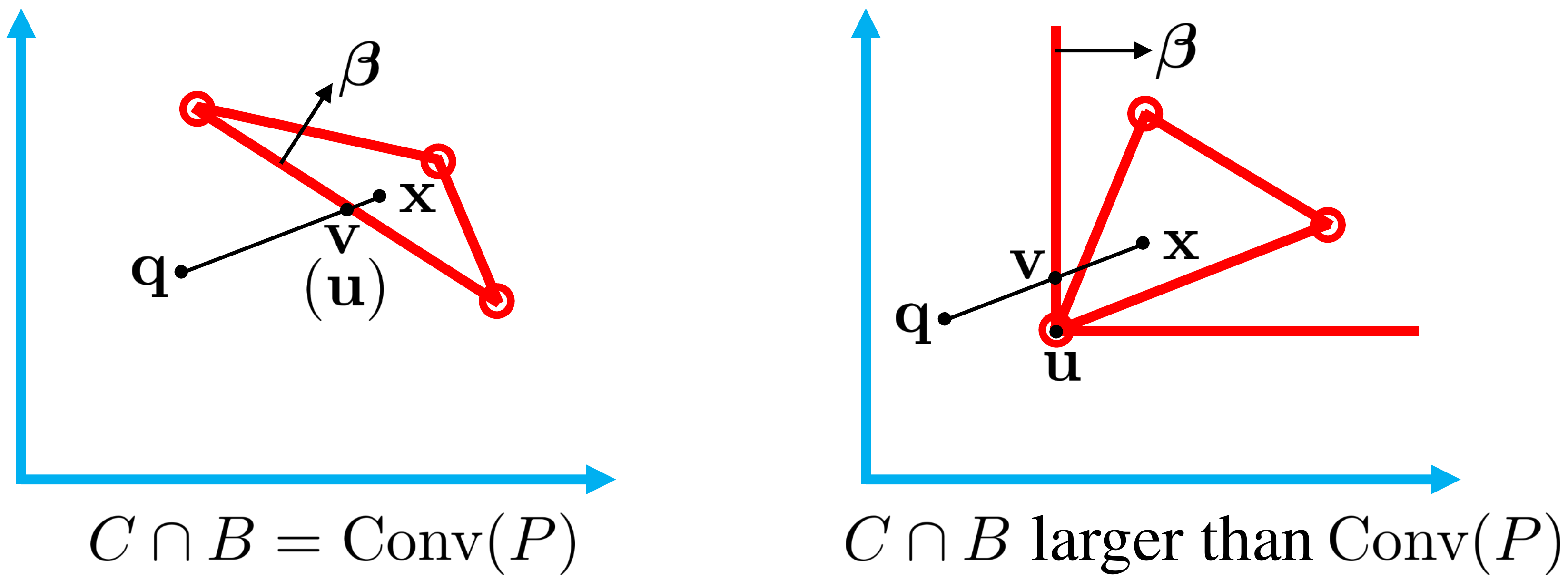}
  \caption{The process of constructing $\boldsymbol{\beta}$ when $C\cap B$ equals to $\text{Conv}(P)$ (left figure), and when $C \cap B$ is larger than $\text{Conv}(P)$ (right figure). }
  \label{fig_finding_beta_in_pf_Lemma_4}
\end{figure}

Since $\mathbf{v}^S \in \text{bd}^S(B^S) \subseteq B^S$, there exists $\mathbf{u}^S \in \text{Conv}(P^S)$ such that $\mathbf{v}^S \succeq \mathbf{u}^S$, and by (\ref{align_v_is_minimum}), we get that
$$
\langle \boldsymbol{\beta}^S, \mathbf{u}^S \rangle \leq \min\limits_{\mathbf{d} \in D(S)} \langle \boldsymbol{\beta}^S, \mathbf{p}^S(\mathbf{d}) \rangle. 
$$
The vector $\mathbf{u}^S$ is also shown in Figure{~\ref{fig_finding_beta_in_pf_Lemma_4}}. 

On the other hand, since $\mathbf{u}^S \in \text{Conv}(P^S)$, by similar analysis as in (\ref{align_cvx_greater_than_min}) we know that
$$
\langle \boldsymbol{\beta}^S, \mathbf{u}^S \rangle \geq \min\limits_{\mathbf{d} \in D(S)} \langle \boldsymbol{\beta}^S, \mathbf{p}^S(\mathbf{d}) \rangle. 
$$

Thus, 
$$
\langle \boldsymbol{\beta}^S, \mathbf{u}^S \rangle = \langle \boldsymbol{\beta}^S, \mathbf{v}^S \rangle = \min\limits_{\mathbf{d} \in D(S)} \langle \boldsymbol{\beta}^S, \mathbf{p}^S(\mathbf{d}) \rangle, 
$$
which implies $u_i^S = v_i^S$ if $\beta_i^S \neq 0$. Since $\mathbf{v}^S \succS{S} \mathbf{q}^S$, for all subset $S^\prime \subseteq S$ where $\boldsymbol{\beta}^{S^\prime} \neq \mathbf{0}$, we have that
$$
\langle \boldsymbol{\beta}^{S^\prime}, \mathbf{q}^{S^\prime} \rangle < \langle \boldsymbol{\beta}^{S^\prime}, \mathbf{u}^{S^\prime} \rangle, 
$$

Therefore, to show (\ref{align_beta_strict_smaller_than_condition}) it suffices to show for all $S^\prime \subseteq S$ where $\boldsymbol{\beta}^{S^\prime} \neq \mathbf{0}$ that
\begin{align}
\label{align_beta_S_prime_u_S_prime}
\langle \boldsymbol{\beta}^{S^\prime}, \mathbf{u}^{S^\prime} \rangle \leq \min\limits_{\mathbf{d} \in D(S^\prime)} \langle \boldsymbol{\beta}^{S^\prime},  \mathbf{p}^{S^\prime}(\mathbf{d}) \rangle. 
\end{align}

Given $\mathbf{u}^S \in \text{Conv}(P^S)$, we write $\mathbf{u}^S = \sum\limits_{\mathbf{d} \in D(S)}\cvxCombPar_\mathbf{d} \mathbf{p}^S(\mathbf{d})$ where $\cvxCombPar_\mathbf{d} \geq 0$ and $\sum\limits_{\mathbf{d} \in D(S)} \cvxCombPar_\mathbf{d} = 1$. 

For all $S^\prime \subseteq S$ where $\boldsymbol{\beta}^{S^\prime} \neq \mathbf{0}$, we can rewrite $\langle \boldsymbol{\beta}^{S^\prime}, \mathbf{u}^{S^\prime} \rangle$ as follows,
\begin{eqnarray*}
\langle \boldsymbol{\beta}^{S^\prime}, \mathbf{u}^{S^\prime} \rangle & = & \langle \boldsymbol{\beta}^{S}, \mathbf{u}^{S} \rangle - \langle \boldsymbol{\beta}^{S \setminus S^\prime}, \mathbf{u}^{S \setminus S^\prime} \rangle	\\
& = & \min\limits_{\mathbf{d} \in D(S)} \langle \boldsymbol{\beta}^S, \mathbf{p}^S(\mathbf{d}) \rangle - \sum\limits_{\mathbf{d} \in D(S)} \cvxCombPar_\mathbf{d} \langle \boldsymbol{\beta}^{S \setminus S^\prime}, \mathbf{p}^{S \setminus S^\prime}(\mathbf{d}) \rangle	\\
& \leq & \min\limits_{\mathbf{d} \in D(S)} \langle \boldsymbol{\beta}^S, \mathbf{p}^S(\mathbf{d}) \rangle - \min\limits_{\mathbf{d} \in D(S)} \langle \boldsymbol{\beta}^{S \setminus S^\prime}, \mathbf{p}^{S \setminus S^\prime}(\mathbf{d}) \rangle. 
\end{eqnarray*}

To show (\ref{align_beta_S_prime_u_S_prime}), it suffices to show that
\begin{eqnarray}
\lefteqn{\min\limits_{\mathbf{d} \in D(S)} \langle \boldsymbol{\beta}^S, \mathbf{p}^S(\mathbf{d}) \rangle} \notag \\
& \leq & \min\limits_{\mathbf{d} \in D(S)} \langle \boldsymbol{\beta}^{S \setminus S^\prime}, \mathbf{p}^{S \setminus S^\prime}(\mathbf{d}) \rangle \label{align_sum_S_minus_S_prime}	\\
& & + \min\limits_{\mathbf{d} \in D(S^\prime)} \langle \boldsymbol{\beta}^{S^\prime},  \mathbf{p}^{S^\prime}(\mathbf{d}) \rangle. \label{align_sum_S_prime}
\end{eqnarray}

Suppose $\mathbf{d}_1 \in D(S)$ and $\mathbf{d}_2 \in D(S^\prime)$ are the optimal solutions for (\ref{align_sum_S_minus_S_prime}) and (\ref{align_sum_S_prime}), respectively. Formally, 
$$
\langle \boldsymbol{\beta}^{S \setminus S^\prime}, \mathbf{p}^{S \setminus S^\prime}(\mathbf{d}_1) \rangle = \min\limits_{\mathbf{d} \in D(S)} \langle \boldsymbol{\beta}^{S \setminus S^\prime}, \mathbf{p}^{S \setminus S^\prime}(\mathbf{d}) \rangle, 
$$
$$
\langle \boldsymbol{\beta}^{S^\prime},  \mathbf{p}^{S^\prime}(\mathbf{d}_2) \rangle = \min\limits_{\mathbf{d} \in D(S^\prime)} \langle \boldsymbol{\beta}^{S^\prime},  \mathbf{p}^{S^\prime}(\mathbf{d}) \rangle. 
$$

We consider the unique decision $\mathbf{d}_3$ that satisfies the following: 
First, $\mathbf{d}_3\in D(S^\prime)$, i.e., $\mathbf{d}_3$ assigns highest priority to users in $S^\prime$. 
Second, the priority ordering for user subset $S^\prime$ in $\mathbf{d}_3$ are the same as those in $\mathbf{d}_2$. 
Third, the priority ordering for user subset $\fullUserSet \setminus S^\prime$ in $\mathbf{d}_3$ are the same as those in $\mathbf{d}_1$. 

Since $\mathbf{d}_1 \in D(S)$, we know $\mathbf{d}_3 \in D(S)$ and therefore, 
$$
\min\limits_{\mathbf{d} \in D(S)} \langle \boldsymbol{\beta}^S, \mathbf{p}^S(\mathbf{d}) \rangle \leq \langle \boldsymbol{\beta}^S, \mathbf{p}^S(\mathbf{d}_3) \rangle. 
$$

Now it suffices to show that
\begin{align*}
\langle \boldsymbol{\beta}^S, \mathbf{p}^S(\mathbf{d}_3) \rangle \leq \langle \boldsymbol{\beta}^{S \setminus S^\prime}, \mathbf{p}^{S \setminus S^\prime}(\mathbf{d}_1) \rangle + \langle \boldsymbol{\beta}^{S^\prime},  \mathbf{p}^{S^\prime}(\mathbf{d}_2) \rangle. 
\end{align*}

This is true because given that the system satisfies monotonicity in payoffs, we can get that
$$
p_i^{S^\prime}(\mathbf{d}_3) \leq p_i^{S^\prime}(\mathbf{d}_2) \text{~for~} i \in S^\prime,  
$$
and
$$
p_i^{S\setminus S^\prime}(\mathbf{d}_3) \leq p_i^{S\setminus S^\prime}(\mathbf{d}_1) \text{~for~} i \in S\setminus S^\prime. 
$$

In summary, this proves (\ref{align_beta_S_prime_u_S_prime}) and thus $\boldsymbol{\beta}^S$ satisfies the conditions in Lemma \ref{lemma_finding_nonnegative_beta}. 

\subsection{Proof of Theorem \ref{thm_sufficient_condition_for_LDF_optimality}}
\label{appendix_pf_thm_sufficient_condition_for_LDF_optimality}

Clearly $\feasibilityRegion_{\mathbf{w}\text{-LDF}} \subseteq \feasibilityRegion \subseteq \text{cl}(C)$. 
To show $\text{int}(C) \subseteq \feasibilityRegion_{\mathbf{w}\text{-LDF}}$, by (\ref{align_A_vs_R_LDF}) it suffices to show $\text{int}(C) \subseteq \text{int}(\concaveHull)$. 

Given $\reqvec \in \text{int}(C)$, the goal is to show $\reqvec \in \text{int}(\concaveHull)$, i.e., for all user subsets $S$, 
$$
\reqvec^S \in \text{int}^S(C^S)\setminus B^S. 
$$

Given a user subset $S$, by Lemma \ref{lemma_q_S_in_C_S} we know $\reqvec^S \in C^S$. Further we can show $\reqvec^S \in \text{int}^S(C^S)$ since otherwise $\reqvec \in \text{bd}(C)$. By definition of interior and $C^S$, there exists $\mathbf{x}^S \in \text{Conv}(P^S)$ such that $\reqvec^S \precS{S} \mathbf{x}^S$.  

Suppose $\reqvec^S \in B^S$, by definition there exists $\mathbf{y}^S \in \text{Conv}(P^S)$ such that $\reqvec^S \succeq \mathbf{y}^S$. Now we get two vectors $\mathbf{x}^S, \mathbf{y}^S \in \text{Conv}(P^S)$ and $\mathbf{x}^S \succS{S} \mathbf{y}^S$. Since vectors in $P^S$ lie on a hyperplane and $\mathbf{x}^S, \mathbf{y}^S \in \text{Conv}(P^S)$, there exists nonzero $\boldsymbol{\alpha}^S \succeq \mathbf{0}$ such that
$$
\langle \boldsymbol{\alpha}^S, \mathbf{x}^S \rangle = \langle \boldsymbol{\alpha}^S, \mathbf{y}^S \rangle, 
$$
which contradicts with $\mathbf{x}^S \succS{S} \mathbf{y}^S$. 

Therefore, $\reqvec^S \notin B^S$ and thus $\text{int}(C) \subseteq \text{int}(\concaveHull)$. 

\subsection{Proof of Theorem \ref{thm_efficiency_ratio}}
\label{appendix_pf_thm_efficiency_ratio}

Given monotonicity in payoffs, by (\ref{align_A_vs_R_LDF}) we know $\text{int}(R) \subseteq \feasibilityRegion_{\mathbf{w}\text{-LDF}}$ and $\feasibilityRegion$ differs from $C$ by at most a boundary. By the definition of the efficiency ratio, we know
\begin{align*}
\gamma_{\mathbf{w}\text{-LDF}} & = \sup\{\gamma | \gamma \feasibilityRegion \subseteq \feasibilityRegion_{\mathbf{w}\text{-LDF}} \} \geq \sup\{\gamma | \gamma C \subseteq \text{int}(R) \} = \sup\{\gamma | \gamma C \subseteq \text{cl}(R) \}. 
\end{align*}

To show $\gamma_{\mathbf{w}\text{-LDF}} \geq \min\limits_{S \subseteq \fullUserSet}\sigma_S$, it suffices to show $\sup\{\gamma | \gamma C \subseteq \text{cl}(R) \} \geq \min\limits_{S \subseteq \fullUserSet}\sigma_S$, i.e., for each $\reqvec \in C$, we have $\min\limits_{S \subseteq \fullUserSet}\sigma_S \cdot \reqvec \in \text{cl}(R)$, which is equivalent to showing that for each $\reqvec \in C$, there exists a subset of users $S \subseteq \fullUserSet$, such that $\sigma_S \cdot \reqvec \in \text{cl}(R)$. 

Given a $\reqvec \in C$, we define $\lambda(\reqvec, R) = \sup\{ \lambda | \lambda \reqvec \in R \}$ which represents how far the vector $\reqvec$ can extend before it goes beyond the region $R$ and let $\reqvec_{(R)} = \lambda(\reqvec, R) \cdot \reqvec$. We claim there exists a user subset $S$ such that $\reqvec_{(R)}^{S} \in \text{bd}^{S}(B^{S})$ since otherwise we can increase $\lambda(\reqvec, R)$ while guaranteeing $\reqvec_{(R)}$ is still in $R$. 
Next we shall show $\sigma_S \cdot \reqvec \in \text{cl}(R)$, i.e., $\lambda(\reqvec, R) \geq \sigma_S$. 

Since $\reqvec_{(R)}^{S} \in \text{bd}^{S}(B^{S}) \subseteq B^{S}$, there exists $\mathbf{x}^{S} \in \text{Conv}(P^{S})$ such that $\reqvec_{(R)}^{S} \succeq \mathbf{x}^{S}$. Since $\reqvec \in C$, by Lemma \ref{lemma_q_S_in_C_S} we know $\reqvec^{S} \in C^{S}$ and thus, there exists $\mathbf{y}^{S} \in \text{Conv}(P^{S})$ such that $\reqvec^{S} \preceq \mathbf{y}^{S}$. 

Given that $\reqvec_{(R)} = \lambda(\reqvec, R) \cdot \reqvec$, for any nonzero $\boldsymbol{\alpha}^{S} \succeq \mathbf{0}$, we have that
\begin{align*}
\langle \reqvec_{(R)}^{S}, \boldsymbol{\alpha}^{S}  \rangle = \lambda(\reqvec, R) \cdot \langle \reqvec^{S}, \boldsymbol{\alpha}^{S}  \rangle. 
\end{align*}

By $\reqvec_{(R)}^{S} \succeq \mathbf{x}^{S}$ and $\reqvec^{S} \preceq \mathbf{y}^{S}$, we get that
\begin{align*}
\langle \mathbf{x}^{S}, \boldsymbol{\alpha}^{S}  \rangle \leq \lambda(\reqvec, R) \cdot \langle \mathbf{y}^{S}, \boldsymbol{\alpha}^{S} \rangle. 
\end{align*}

Since $\mathbf{x}^{S} \in \text{Conv}(P^{S})$, by similar analysis as in (\ref{align_cvx_greater_than_min}) we know $\langle \mathbf{x}^{S}, \boldsymbol{\alpha}^{S}  \rangle \geq \min\limits_{\mathbf{d} \in D(S)} \langle \boldsymbol{\alpha}^{S}, \mathbf{p}^{S}(\mathbf{d}) \rangle $. Similarly we can show $\langle \mathbf{y}^{S}, \boldsymbol{\alpha}^{S}  \rangle \leq \max\limits_{\mathbf{d} \in D(S)} \langle \boldsymbol{\alpha}^{S}, \mathbf{p}^{S}(\mathbf{d}) \rangle$. Thus, 
\begin{align}
\label{align_min_projection_leq_ratio_max_projection}
\min\limits_{\mathbf{d} \in D(S)} \langle \boldsymbol{\alpha}^{S}, \mathbf{p}^{S}(\mathbf{d}) \rangle \leq \lambda(\reqvec, R) \cdot \max\limits_{\mathbf{d} \in D(S)} \langle \boldsymbol{\alpha}^{S}, \mathbf{p}^{S}(\mathbf{d}) \rangle. 
\end{align}
Clearly, $\max\limits_{\mathbf{d} \in D(S)} \langle \boldsymbol{\alpha}^{S}, \mathbf{p}^{S}(\mathbf{d}) \rangle > 0$ for any nonzero $\boldsymbol{\alpha}^{S} \succeq \mathbf{0}$. 

Since (\ref{align_min_projection_leq_ratio_max_projection}) is true for any nonzero $\boldsymbol{\alpha}^{S} \succeq \mathbf{0}$, we get that
$$
\lambda(\reqvec, R) \geq 
\max\limits_{{\boldsymbol{\alpha}^S \succeq \mathbf{0}} \atop {\boldsymbol{\alpha}^S \neq \mathbf{0}}}
\frac{\min\limits_{\mathbf{d} \in D(S)} \langle \boldsymbol{\alpha}^{S}, \mathbf{p}^{S}(\mathbf{d}) \rangle}
{\max\limits_{\mathbf{d} \in D(S)} \langle \boldsymbol{\alpha}^{S}, \mathbf{p}^{S}(\mathbf{d}) \rangle}
= \sigma_{S}. 
$$

Therefore, for each $\reqvec \in C$, there exists a subset of users $S \subseteq \fullUserSet$ such that $\sigma_S \cdot \reqvec \in \text{cl}(R)$, and thus,
$\gamma_{\mathbf{w}\text{-LDF}} \geq \min\limits_{S \subseteq \fullUserSet} \sigma_S$. 

\subsection{Proof of Corollary \ref{corollary_exchangeable expected_workloads}}
\label{appendix_pf_corollary_exchangeable expected_workloads}
By Theorem \ref{thm_sufficient_condition_for_LDF_optimality}, to show $\mathbf{w}$-LDF policies are feasibility optimal, it suffices to show the system satisfies subset payoff equivalence. To show this, it suffices to show for all user subsets $S \subseteq \fullUserSet$ and all priority decisions $\mathbf{d}_1, \mathbf{d}_2 \in D(S)$ that 
$$
\langle \mathbf{1}^S, \mathbf{p}^S(\mathbf{d}_1) \rangle = \langle \mathbf{1}^S, \mathbf{p}^S(\mathbf{d}_2) \rangle. 
$$

This is true because we can convert $\mathbf{d}_1$ to $\mathbf{d}_2$ by repeatedly switching a pair of users in $\mathbf{d}_1$ at each step such that at step $k$ both decisions assign the highest $k$ priorities to the same users, respectively. By the definition of exchangeable expected payoffs, the sum of the expected payoffs for users in $S$ remains the same at each step.

\dissertationStart
\appendix
\section{Some Further Explanations}
\subsection{TODO for this paper}
\begin{enumerate}
\item Make the shadowed areas in the figures thicker. 
\item Can I use package multirow? Change looking for tables. 
\item Work on defn of feasibility, i.r., p.r.. 
\item Finish proof of corollary 1. Refer to 6.28 - 7.3 page 43. 
\item We should explain that LDF is not optimal. Maybe give the example that we use in the group presentation. 
\item stand on the reader's point of view: is it clear? 
\item What we discuss in the beginning has to be super clear and convincing. 
\item If we discuss something in the sequel, mention that! E.g., the simulation setup if it is incomplete. 
\item Re-go over comments for sigmetrics review. 
\item Are we missing any simple generalization? E.g., energy consumption. 
\item Improve system model using ``lattice''. 
\item generalization of payoff model: considering iid and Markovian arrivals, and the scheduler knows about the arrival for current period. 
\item consider to generalize general payoff model to $\mathbf{p}(\mathbf{X})$ not just $\mathbf{p}(\mathbf{d})$. 
\item Re-go over comments for sigmetrics review. 
\end{enumerate}

\subsection{TODO for dissertation}
\begin{enumerate}
\item Give examples to illustrate the difference between our sufficient condition and Walrand's work. 
\end{enumerate}
\commentEnd\fi

\end{document}